\documentclass[11pt, numbers, sort&compress]{elsarticle}
\usepackage{epsfig,graphics}
\usepackage{amssymb,amsmath,amsthm,bm, amscd, amsfonts}
\usepackage{booktabs}  
\usepackage{threeparttable}  
\usepackage{multirow}
\usepackage{mathabx}
\usepackage{siunitx}
\usepackage{booktabs}
\usepackage{algorithm}
\usepackage{algpseudocode}
\usepackage{subcaption}
\usepackage[hidelinks]{hyperref}
\usepackage{float}
\usepackage[title]{appendix}
\usepackage{orcidlink}

\theoremstyle{plain}

\newtheorem{theo}{Theorem}
\newtheorem{lem}{Lemma}
\newtheorem{pro}{Proposition}

\newtheorem{mrem}{Remark}
\newtheorem{mdef}{Definition}

\def\RR{{\mathbb R}}

\def\gO{\Omega}

\def\gl{\lambda}
\def\gL{\Lambda}
\def\wt{\widetilde}

\def\b0{{\bf 0}}
\def\1{{\bf 1}}

\def\cF{\mathcal{F}}

\def\wh{\widehat}
\def\wt{\widetilde}

\makeatletter
\def\widebreve{\mathpalette\wide@breve}
\def\wide@breve#1#2{\sbox\z@{$#1#2$}%
     \mathop{\vbox{\m@th\ialign{##\crcr
\kern0.08em\brevefill#1{0.99\wd\z@}\crcr\noalign{\nointerlineskip}%
                    $\hss#1#2\hss$\crcr}}}\limits}
\def\brevefill#1#2{$\m@th\sbox\tw@{$#1($}%
  \hss\resizebox{#2}{\wd\tw@}{\rotatebox[origin=c]{90}{\upshape(}}\hss$}
\makeatletter

\begin{document}
\begin{frontmatter}

 \title{High-order synchrosqueezed wavelet-chirplet transform       
 for instantaneous frequency and chirprate estimation\\
 {\small Original version: September 20, 2025}\\
 {\small Revised version: April 10, 2026}\\
{\small Accepted by JCAM on May 31, 2026}}   

\author[university1]{Shuixin Li}
\ead{lishuixin@zjnu.edu.cn}

\author[university1]{Jiecheng Chen}
\ead{jcchen@zjnu.edu.cn}

\author[university1]{Qingtang Jiang}
\ead{jiangq@zjnu.edu.cn}

\author[university2]{Gang Yu}
\ead{cse_yug@ujn.edu.cn}

\address[university1]{School of Mathematical Sciences, Zhejiang Normal University, Jinhua 321004, China}
\address[university2]{School of Electrical Engineering, University of Jinan, Jinan 250022, China}

\begin{abstract}
The separation of multicomponent signals with crossing instantaneous frequency (IF) curves remains a fundamental challenge in time-frequency analysis. 
Although the synchrosqueezed wavelet-chirplet transform (SWCT) enhances time-frequency readability by introducing a chirprate variable, its effectiveness is constrained by the underlying assumption of local linear chirp. Consequently, this method does not perform well when analyzing signals characterized by strong frequency modulation.
This paper extends the SWCT framework by relaxing the linear chirp assumption. We model signal components as having polynomial phase behavior over short intervals and derive compact expressions for high-order IF and chirprate reassignment operators. The proposed high-order synchrosqueezed wavelet-chirplet transform (HSWCT) enables accurate estimation of both IF and chirprate, and supports robust mode retrieval even with intersecting IF curves.
Another key contribution is a rigorous mathematical analysis of the approximation errors of arbitrary-order reassignment operators for IF and chirprate estimation. When the chirprate vanishes, HSWCT simplifies to the traditional high-order synchrosqueezed wavelet transform. To our best knowledge, no theoretical analysis exists in the literature on the approximation of arbitrary-order SST IF reassignment operators to the  IF. As a by-product of this work, our established theorem provides such an analysis, thereby filling a gap in the theoretical framework of high-order SSTs.

\end{abstract}

\begin{keyword}
    {\it High-order synchrosqueezed wavelet-chirplet transform; 
High-order synchrosqueezed transform; 
Crossing instantaneous frequency curves; 
Instantaneous frequency estimation; 
Chirprate estimation; 
Mode retrieval.}     
\end{keyword}

\end{frontmatter}

\section{Introduction}
In real-world applications, signals from fields such as seismology \cite{liu2023sparse,fomel2013seismic}, biomedicine \cite{park2011time,kinoshita2020sleep}, sonar \cite{mercuri2019vital,amin2016radar}, and mechanical systems \cite{yu2020time,wang2018matching} typically appear as a superposition of amplitude-modulated and frequency-modulated (AM-FM) modes:
\begin{equation}
     \label{AHM0}
    x(t)= \sum_{k=1}^{K} x_k(t) = \sum_{k=1}^{K} A_k(t) e^{2 \pi i \phi_k(t)},
\end{equation}
where $A_k(t) > 0$ denotes the amplitude of the $k$-th component, with $\phi_k(t)$ its instantaneous phase, and $\phi_k'(t) > 0$ its instantaneous frequency (IF).
The amplitude $A_k(t)$ is assumed to vary slowly. 
To gain a deeper understanding of multicomponent signals, a fundamental challenge in signal processing is mode decomposition--recovering the constituent components $x_k(t)$ from composite signals. Despite its conceptually simple formulation, this problem remains challenging in real-world applications.

Time-frequency analysis (TFA) is essential for processing non-stationary signals of the form Eq.~\eqref{AHM0}. Linear methods like the short-time Fourier transform (STFT) \cite{stankovic2013time} and continuous wavelet transform (CWT) \cite{daubechies1992ten, mallat1999wavelet} are inherently limited by the uncertainty principle, preventing optimal joint time-frequency resolution. 
In contrast, quadratic representations, such as the Wigner--Ville distribution \cite{hlawatsch1992linear} and its Cohen class generalizations \cite{cohen1995time}, offer improved resolution but suffer from interference terms.
To improve the readability of time-frequency representations (TFRs), post-processing methods such as the reassignment technique \cite{kodera1978analysis, auger1995improving} have been developed. Reassignment sharpens TFRs by moving spectrogram energy to the centroid of the signal's energy distribution, but at the expense of losing signal reconstruction capability.
Another well-established post-processing technique is the synchrosqueezing transform (SST).
Originally proposed in \cite{daubechies1996nonlinear} and further developed in \cite{daubechies2011synchrosqueezed}, SST has since attracted considerable research attention in time-frequency analysis and non-stationary signal processing. This seminal work has inspired numerous extensions, including the STFT-based SST \cite{oberlin2014fourier}, the synchroextracting transform \cite{yu2017synchroextracting}, and the local maximum synchrosqueezing transform \cite{yu2019local}, to improve energy concentration and mode retrieval.
For signals with strong frequency modulation, second-order SST \cite{oberlin2017secondwavelet} and its STFT-based counterpart \cite{oberlin2015second, behera2018theoretical, lu2021second} were introduced, later extended to higher orders \cite{pham2017high, hu2019high}.
More recently, the time-reassigned synchrosqueezing transform \cite{he2019time,  fourer2019second, he2020gaussian, li2022theoretical} has been proposed to address signals with rapid frequency variations or transient components.

For multicomponent signals with overlapping IF curves, conventional TFA methods based on the 2D time-frequency plane inevitably suffer from blurring and smearing near intersection points. To overcome this limitation, a widely adopted strategy is to incorporate an additional chirprate dimension, thereby enabling analysis in 3D spaces.
The chirplet transform (CT) \cite{mann1995chirplet} pioneered this approach by extending the STFT with a chirprate parameter, effectively lifting the 2D time-frequency plane into a 3D time-frequency-chirprate (TFC) space. (We use the contraction ``chirprate'' throughout) This allows signal components that are entangled in the 2D plane to be well separated in the 3D space.
The wavelet-chirplet transform (WCT) \cite{chui2021time} follows a similar philosophy, introducing the chirprate parameter into the time-scale plane to form a 3D time-scale-chirprate (TSC) representation.
For mode retrieval, the signal separation operator (SSO) \cite{chui2016signal} was extended to TFC and TSC spaces \cite{chui2021time,li2022chirplet,chui2023analysis}, enabling accurate component reconstruction from the estimated IF and chirprate curves.
Building on these foundations, several methods have been developed to further enhance analysis in 3D spaces. 
For example, the synchrosqueezed chirplet transform (SCT) \cite{zhu2020frequency,chen2023disentangling} and the synchrosqueezed windowed linear canonical transform (SWLCT) \cite{li2026synchrosqueezed} were developed to sharpen TFC representations, thereby enabling more accurate IF and chirprate estimation.
The local maximum frequency-chirprate SCT \cite{zhang2022local} further refines this idea by concentrating energy around local maxima in the TFC space.
Meanwhile, the 3D extracting transform \cite{zhu2021three} offers an alternative approach to simultaneously extract IF and chirprate information.
More recently, multiple synchrosqueezing techniques have been extended to 3D spaces, as exemplified by the multiple enhanced synchrosqueezing chirplet transform \cite{chen2024multiple} and the high-order multiple synchrosqueezing wavelet transform within the WCT framework \cite{chen2024composite}.


Although these methods have shown effectiveness in estimating IFs and retrieving modes for signals with overlapping IFs, most rely on the assumption that signals can be locally approximated as linear chirps. For signals with strongly modulated AM-FM components, however, their performance degrades significantly and often proves unsatisfactory.
In this paper, we focus on the WCT framework and, by assuming that signal modes behave locally as high-order polynomial phase signals (PPSs) \cite{wang2008generalized, jiang2020novel} over short intervals, we propose a novel method for deriving high-order IF and chirprate estimators in a concise general form. For signals with $N$th-degree polynomial phase, we derive the corresponding $N$th-order IF and chirprate reassignment operators.
Building on these high-order estimators, we introduce the high-order synchrosqueezed wavelet chirplet transform (HSWCT), which enables significantly more precise extraction of IF and chirprate information.

Another critical contribution of this paper is the development of mathematically rigorous analyses for two key approximation errors: those of arbitrary-order IF reassignment operators for IF estimation, and arbitrary-order chirprate reassignment operators for chirprate estimation.
When the chirprate is zero, the WCT simplifies to the CWT, and the high-order SWCT reduces to a high-order SST. To the best of our knowledge, the literature currently contains no theoretical analysis of how IF reassignment operators of an arbitrary-order SST approximate the IF. Most existing works focus on first- or second-order approximations, leaving higher-order cases largely unexplored from a theoretical perspective.
As a valuable by-product of this research, the theorem we establish provides such an analysis. Specifically, we derive explicit bounds on the approximation errors for arbitrary-order IF reassignment operators under sufficient smoothness conditions on the signal's phase. This result not only fills a significant gap in the theoretical foundation of high-order SSTs but also validates the practical effectiveness of higher-order reassignment techniques.

The remainder of this paper is organized as follows. Section 2 reviews the WCT-based scheme and derives high-order IF and chirprate reassignment operators for the HSWCT. 
Theoretical error estimates for the proposed arbitrary-order IF and chirprate reassignment operators are provided in Section 3. 
Section 4 then presents the methodological framework, covering algorithmic implementation, parameter selection, and mode retrieval. 
Numerical experiments are reported in Section 5, and Section 6 concludes the paper with a discussion of future work.

\section{High-order synchrosqueezed wavelet-chirplet transform} 
In this paper, we denote the Schwartz space by \(\mathcal{S}(\mathbb{R})\) and the space of tempered distributions by \(\mathcal{S'}(\mathbb{R})\) \cite{grafakos2008classical}.
The Fourier transform of the wavelet function \(\psi(t) \in \mathcal{S}(\mathbb{R})\) is defined as
\[
\widehat{\psi}(\eta) := \int_{\mathbb{R}} \psi(t) e^{-i 2\pi \eta t} \, dt.
\]
In the references \cite{katkovnik1995new, li2011local}, the $N$th-order polynomial Fourier transform is given by 
\begin{equation}
\label{def_PFT}
\cF^N(\psi)(\eta_1, \eta_2, \cdots, \eta_N) := \int_{\mathbb{R}} \psi(t) e^{-i 2\pi ( \eta_1 t + \frac{1}{2!}\eta_2 t^2 + \cdots + \frac{1}{N!} \eta_N t^N )} dt, 
\end{equation}
where $\eta_1, \eta_2, \cdots, \eta_N \in \RR$. 

Given a signal \( x(t) \in L^2(\mathbb{R}) \) (or \( x(t) \in \mathcal{S}'(\mathbb{R}) \) in the distributional sense), the wavelet-chirplet transform (WCT), which is also called the time-scale-chirprate transform/operator in \cite{chui2021time}, with a wavelet function \( \psi(t) \in \mathcal{S}(\mathbb{R}) \) is given by
\begin{align}
     U_x^\psi(a,b,\lambda) : &=\int_{\mathbb{R}} x(t)\frac 1a \psi\big(\frac{t-b}{a}\big) e^{-i \pi \lambda (t-b)^2} dt \label{def_WCT_wavelet} \\
     &=\int_{\mathbb{R}} x(b+at) \psi(t) e^{-i \pi \lambda a^2 t^2} dt \label{def_WCT}.
\end{align}

The notation \( U_x^{t^m \psi^{(j)}}(a,b,\lambda) \) generalizes the definition in Eq.~\eqref{def_WCT_wavelet}. Specifically, it is obtained by replacing the original wavelet \( \psi(t) \) with its modified form \( t^m \psi^{(j)}(t) \), where \( \psi^{(j)}(t) \) denotes the \( j \)-th derivative of \( \psi(t) \), and \( j, m \) are non-negative integers. 
For brevity, we may omit the variables \((a,b,\lambda)\) and denote it simply as \( U_x^{t^m {\psi^{(j)}}} \) when no ambiguity arises.

\begin{mrem}
In the original WCT definition \cite{chui2021time}, the complex conjugate of the wavelet function, \( \overline{\psi(t)} \), is used in Eq.~\eqref{def_WCT_wavelet} and Eq.~\eqref{def_WCT}. 
For simplicity, we omit the conjugate operation and work directly with \( \psi(t) \) throughout this paper. Besides, the WCT framework does not require \(\psi(t)\) to belong to the Schwartz space \(\mathcal{S}(\mathbb{R})\). 
While conventional second-order reassignment operators involve the first two derivatives \(\psi'(t)\) and \(\psi''(t)\), the \(N\)th-order reassignment operators derived here are expressed in terms of \(U_x^{t^m \psi}\) and \(U_x^{t^n \psi'}\) for \(0 \le m \le 2N-2\) and \(0\le n\le N-1\). 
To support these expressions, it is sufficient that \(\psi(t)\) is twice continuously differentiable and satisfies certain decay conditions, specifically that \(t^{m}\psi(t), t^{n}\psi'(t) \in L^1(\mathbb{R})\). 
In this paper, we take \(\psi(t) \in \mathcal{S}(\mathbb{R})\) as a convenient sufficient condition that encompasses these requirements, thereby streamlining the presentation.
\end{mrem}

The definitions of the HSWCT rely on both IF and chirprate reassignment operators. The second-order versions of these operators can be derived using methods analogous to those for second-order SST \cite{oberlin2015second, oberlin2017secondwavelet, behera2018theoretical, li2020adaptive, li2020adaptivestft} and for SCT \cite{zhu2020frequency, chen2023disentangling}. 
In Subsection \ref{second_order_reassignment}, we present a conventional approach to obtain these second-order operators, following the methodology in the references above. This serves as a baseline for comparison with the novel method proposed in this paper, which derives high-order IF and chirprate reassignment operators and is presented in Subsection \ref{high_order_reassignment}.

\subsection{  Second-order reassignment operators derived via conventional method}\label{second_order_reassignment}
Building upon the WCT framework, we now  derive the second-order IF and chirprate reassignment operators.
Taking the partial derivative of both sides of Eq.~\eqref{def_WCT} with respect to \( b \), we obtain the following relation for a signal \( x(t) \):
\begin{eqnarray}
\nonumber {\partial _b} U^\psi_x(a, b, \gl)\hskip -0.6cm &&= \int_{\RR} x'(b+at)  \psi(t) e^{-i\pi \gl a^2 t^2}dt\\
\nonumber &&= -\frac 1a \int_{\RR} x(b+at) \big (\psi'(t)-i2\pi \gl a^2 t\psi(t) \big)e^{-i\pi \gl a^2 t^2} dt. 
\end{eqnarray}
Thus, 
\begin{equation}
\label{partial_b}
{\partial _b} U^\psi_x(a, b, \gl) =-\frac 1a  U^{\psi'}_x(a, b, \gl) +i2\pi \gl a  U^{t\psi}_x(a, b, \gl).   
\end{equation}
Similarly, by taking the partial derivatives of both sides of Eq.~\eqref{def_WCT} with respect to \( a \) and \( \lambda \), we obtain the  results:
\begin{align}     
   \label{partial_a} 
\partial_a U_x^\psi(a,b,\lambda) &=-\frac 1 a  U^{\psi}_x(a, b, \gl)-\frac 1 a  U^{t\psi'}_x(a, b, \gl),    \\
   \label{partial_lambda} 
\partial_\lambda U_x^\psi(a,b,\lambda) &=   -i \pi a^2 U_x^{t^2 \psi}(a,b,\lambda).  
\end{align}
Furthermore, we introduce a new differential operator
\begin{align}
\label{Dalambda}
D_{a,\lambda} U_x^\psi(a,b,\lambda): = \partial_a U_x^\psi(a,b,\lambda) -\frac{2\lambda}{a} \partial_\lambda U_x^\psi(a,b,\lambda) .
\end{align}
Note that the operator \(D_{a,\lambda}\) is a linear combination of the partial derivatives  \(\partial_a\) and \(\partial_\lambda\). Specifically, it satisfies the following relation:
\begin{align}
   \label{partial_tpsi} D_{a,\lambda} U_x^\psi(a,b,\lambda) = {\partial_b} U_x^{t\psi}(a, b, \lambda).
\end{align}

 We say $x(t)$ is a (generalized) linear chirp if
 \begin{equation}
 \label{def_chirp}
 x(t)=A(t) e^{i2\pi \phi(t)}= e^{pt+\frac 12 qt^2} e^{i2\pi (ct +\frac 12 r t^2)},   
 \end{equation}
 where $ p,~  q,~  c$ and $r$ are real-valued constants. 
 Based on \( x(t) \) given in Eq.~\eqref{def_chirp}, we take the partial derivative of both sides of Eq.~\eqref{def_WCT} with respect to \(b\),
 \begin{eqnarray*}
 \nonumber {\partial _b} U^\psi_x(a, b, \gl)\hskip -0.6cm &&= \int_{\RR} x'(b+at)  g(t) e^{ -i\pi \gl a^2 t^2}dt\\
 &&= \big(p+qb+i2\pi(c+rb)\big) U^\psi_x(a, b, \gl)+ (q+i2\pi r) a U^{t \psi}_x(a, b, \gl). 
 \end{eqnarray*}
 Hence, when \(U^\psi_x(a, b, \gl)\neq 0\), then
 \begin{equation}
   \label{eqution1stdiff}  \frac{\partial_b U^\psi_x(a, b, \gl)}{U^\psi_x(a, b, \gl)}=\big(p+qb+i2\pi(c+rb)\big) + (q+i2\pi r) a \frac{U^{t\psi}_x(a, b, \gl)}{U^{\psi}_x(a, b, \gl)}. 
 \end{equation}
 Differentiating both sides of Eq.~\eqref{eqution1stdiff} with respect to \( b \), 
 \begin{equation}
\label{derivation1}
     \partial_b\Big(\frac{\partial_b U^\psi_x}{U^\psi_x}\Big)  = (q + i 2 \pi r) + (q + i 2 \pi r) a \partial_b\Big(\frac{U^{t\psi}_x}{U^\psi_x}\Big).  
  \end{equation}

We introduce the notation
\begin{equation*}
D_x^\psi(a,b,\lambda) := U^{\psi'}_x U_x^{t\psi} - U^{\psi}_x U_x^{t \psi'} + i 2\pi \lambda a^2 \big(U_x^{t^2 \psi} U_x^{\psi} - U_x^{t \psi} U_x^{t \psi}\big).
\end{equation*}
For any \((a, b, \lambda)\) such that \(D_x^\psi(a,b,\lambda) \neq 0\), 
Eq.~\eqref{derivation1} together with Eq.~\eqref{partial_b}, yields
\begin{align}
\frac q{2\pi} + i r 
&= \frac 1{2\pi} \frac{\partial_b\Big(\frac{\partial_b U^\psi_x}{U^\psi_x}\Big)}{1 + a \partial_b\Big(\frac{U^{t\psi}_x}{U^\psi_x}\Big)} \nonumber \\
&= i\gl+\frac 1{2 \pi a^2} \frac {{U_x^{\psi}U_x^{\psi''}-U_x^{\psi'}U_x^{\psi'}+i 2\pi \lambda a^2 \big( U_x^{t\psi}U_x^{\psi'}-U_x^{t\psi'}U_x^{\psi}-U_x^{\psi}U_x^{\psi}\big)}}{ U^{\psi'}_xU_x^{t\psi}-U^{\psi}_xU_x^{t\psi'}+i 2\pi \lambda a^2\big(U_x^{t^2\psi}U_x^{\psi}-U_x^{t\psi}U_x^{t\psi}\big)}. \label{derivation2}
\end{align}
Combining this with Eq.~\eqref{eqution1stdiff} gives
\begin{align}
\frac {p+qb}{2\pi}+i(c+rb) 
&= \frac{\partial_b U^\psi_x}{2\pi U^\psi_x} - \bigl(\frac q{2\pi}+ir\bigr) a \frac{U^{t\psi}_x}{U^{\psi}_x} \;= \frac{\partial_b U^\psi_x} {2\pi U^\psi_x} - \frac{a U^{t \psi}_x\,\partial_b\left(\frac{\partial_b U^\psi_x}{U^\psi_x}\right)}{ 2\pi U^\psi_x \big(1 + a \partial_b\bigl(\frac{U^{t\psi}_x}{U^\psi_x}\bigr)\big)} \nonumber \\
&= -\frac{1}{2\pi a}  \frac {U_x^{t \psi}U_x^{\psi''}-U_x^{t\psi'}U_x^{\psi'}+i 2\pi \lambda a^2 \big( U_x^{t^2 \psi}U_x^{\psi'}-U_x^{t\psi'}U_x^{t\psi}-U_x^{t \psi}U_x^\psi \big)}{ U^{\psi'}_xU_x^{t\psi}-U^{\psi}_xU_x^{t\psi'}+i 2\pi \lambda a^2 \big( U_x^{t^2 \psi}U_x^{\psi}-U_x^{t\psi}U_x^{t\psi}\big)}. \label{derivation3}
\end{align}

Taking the imaginary parts of Eq.~\eqref{derivation3} and Eq.~\eqref{derivation2}, we obtain the second-order IF reassignment operator and the chirprate reassignment operator, respectively:
\begin{align}
\label{def_gO}
\gO^{\psi}_x(a, b, \lambda) &:= -\frac{1}{2\pi a} \operatorname{Im}\Big(\frac{U_x^{t \psi}U_x^{\psi''} - U_x^{t\psi'}U_x^{\psi'} + i 2\pi \lambda a^2 \bigl(U_x^{t^2 \psi}U_x^{\psi'} - U_x^{t\psi'}U_x^{t\psi} - U_x^{t \psi}U_x^\psi\bigr)}{U^{\psi'}_xU_x^{t\psi} - U^{\psi}_xU_x^{t\psi'} + i 2\pi \lambda a^2 \bigl(U_x^{t^2 \psi}U_x^{\psi} - U_x^{t\psi}U_x^{t\psi}\bigr)}\Big), \\
\label{def_gL}
\gL^{\psi}_x(a, b, \lambda) &:= \lambda + \frac{1}{2\pi a^2} \operatorname{Im}\Big(\frac{U_x^{\psi}U_x^{\psi''} - U_x^{\psi'}U_x^{\psi'} + i 2\pi \lambda a^2 \bigl(U_x^{t\psi}U_x^{\psi'} - U_x^{t\psi'}U_x^{\psi} - U_x^{\psi}U_x^{\psi}\bigr)}{U^{\psi'}_xU_x^{t\psi} - U^{\psi}_xU_x^{t\psi'} + i 2\pi \lambda a^2 \bigl(U_x^{t^2\psi}U_x^{\psi} - U_x^{t\psi}U_x^{t\psi}\bigr)}\Big).
\end{align}

As shown above, the reassignment operators Eq.~\eqref{def_gO} and Eq.~\eqref{def_gL} rely on the assumption Eq.~\eqref{def_chirp}, yet their derivation is rather involved.
In \cite{chen2024composite}, this approach is extended to third-order PPS approximations within the WCT framework.
Our recent work \cite{jiang2025synchrosqueezed} provides explicit expressions for the resulting third-order reassignment operators and shows their superiority over the second-order ones.
Nevertheless, extending this methodology to higher orders remains challenging, as the complexity of the derivation increases rapidly with the order.

\subsection{ High-order IF and chirprate  reassignment operators} \label{high_order_reassignment}

In this subsection, we develop a new approach for constructing high-order  IF and chirprate reassignment operators. We begin by considering \(x(t)\) 
of the form
\begin{equation}
\label{def_generalized_PPS}
x(t)=A(t) e^{i 2\pi \phi(t)}=e^{d(t)} e^{i 2\pi \phi(t)},  
\end{equation}
where 
\(\phi(t)\) is a real-valued polynomial of degree \(N\). 
When $d(t)$ is constant, the signal $x(t)$ defined in \eqref{def_generalized_PPS} is referred to as a polynomial phase signal (PPS). Our modeling assumption for \eqref{AHM0} requires that $A(t)$ varies slowly, or equivalently, that $d(t)$ behaves as a constant. Such a constraint on $A(t)$ is also necessary for the theoretical derivation of error analysis for the approximation of the proposed high-order reassignment operators to IFs and chirprates. 

Nevertheless, the result established in Proposition 1 of this subsection concerning high-order reassignment operators remains valid when $d(t)$ is a real-valued polynomial of degree $N$. Accordingly, we assume $d(t)$ to be a real-valued polynomial of degree $N$ throughout this and the following subsections, so that Proposition 1 applies to $d(t)$ in its most general form. When both $d(t)$ and $\phi(t)$ are real-valued polynomials, we refer to the signal $x(t)$ in \eqref{def_generalized_PPS} as a generalized PPS.

It is natural to perform a simple Taylor expansion 
\begin{align}  
  \label{Taylor_expansion_WCT}  x(b+at)=e^{\sum_{j=0}^{N}\frac 1{j!}\big( d^{(j)}(b) + i 2\pi \phi^{(j)}(b) \big)(at)^j}.
\end{align} 
Substituting Eq.~\eqref{Taylor_expansion_WCT} into Eq.~\eqref{def_WCT}, we have
\begin{align}
U_x^\psi(a,b,\lambda) 
\label{WCT_polynomials}                     
&= \int_{\mathbb{R}} e^{\sum_{j=0}^{N} \frac1{j!}\big(d^{(j)}(b) + i 2\pi \phi^{(j)}(b) \big)(at)^j} \psi(t) e^{-i \pi \lambda a^2 t^2} dt.
\end{align}
Taking the partial derivative of Eq.~\eqref{WCT_polynomials} with respect to the variable time $b$, one has 
\begin{align}
\label{partial_b_WCT_polynomials}  \partial_b U_x^\psi(a,b,\lambda) &= {\sum_{j=1}^{N} \frac{d^{(j)}(b) +  i 2\pi \phi^{(j)}(b) }{(j-1)!} }a^{j-1}   U_x^{ {t}^{j-1} \psi }(a,b,\lambda). 
\end{align}
With $D_{a,\lambda}$ defined in Eq.~\eqref{Dalambda}, substituting it into both sides of Eq.~\eqref{WCT_polynomials} yields
\begin{align}
\label{D_a_WCT_polynomials}  D_{a,\lambda} U_x^{\psi}(a,b,\lambda) &= {\sum_{j=1}^{N} \frac{d^{(j)}(b) +i 2\pi \phi^{(j)}(b) }{(j-1)!} }a^{j-1}   U_x^{ {t}^{j} \psi }(a,b,\lambda). 
\end{align}
From Eq.~\eqref{partial_tpsi}, we know Eq.~\eqref{D_a_WCT_polynomials} is in fact
\begin{align}
    \label{partial_tb_WCT_polynomials}  \partial_b U_x^{t\psi}(a,b,\lambda) &= {\sum_{j=1}^{N} \frac{d^{(j)}(b) +i 2\pi \phi^{(j)}(b) }{(j-1)!} }a^{j-1}   U_x^{ {t}^{j} \psi }(a,b,\lambda). 
    \end{align}

If we consider Eq.~\eqref{partial_b_WCT_polynomials} and Eq.~\eqref{partial_tb_WCT_polynomials} as a system of equations in the variables \(\frac{d^{(j)}(b) + i 2\pi \phi^{(j)}(b) }{(j-1)!} a^{j-1}\) for \(j=1,2,\dots,N\), then solving this system requires  at least \(N\) equations. 
As we currently have only two, we need to derive additional equations.   
Differentiating both sides of Eq.~\eqref{partial_b_WCT_polynomials} with respect to the variable \(\lambda\), we  obtain
 \begin{align}
    \label{partial_lambdab_WCT_polynomials}  \partial_\lambda \partial_b U_x^\psi(a,b,\lambda) &= {\sum_{j=1}^{N} \frac{d^{(j)}(b) +  i 2\pi \phi^{(j)}(b) }{(j-1)!} }a^{j-1}  \partial_\lambda U_x^{ {t}^{j-1} \psi }(a,b,\lambda). 
    \end{align}
    Since $\psi(t) \in \mathcal{S}(\mathbb{R})$, the transform $U_x^{t^n \psi}(a, b, \lambda)$ is always well-defined, and the interchange of differentiation order is justified.  
    Therefore, combining Eq.~\eqref{partial_lambda}, we can rewrite Eq.~\eqref{partial_lambdab_WCT_polynomials} as 
    \begin{align}
\label{partial_t2b_WCT_polynomials}
        \partial_b  U_x^{t^2\psi}(a,b,\lambda) &= {\sum_{j=1}^{N} \frac{d^{(j)}(b) +  i 2\pi \phi^{(j)}(b)}{(j-1)!} }a^{j-1}   U_x^{ {t}^{j+1} \psi }(a,b,\lambda). 
    \end{align}
    Similarly, differentiating Eq.~\eqref{partial_tb_WCT_polynomials} with respect to \(\lambda\) and  applying Eq.~\eqref{partial_lambda}, we obtain 
    \begin{align*}
       \partial_b  U_x^{t^3\psi}(a,b,\lambda) &= {\sum_{j=1}^{N} \frac{d^{(j)}(b) +  i 2\pi \phi^{(j)}(b) }{(j-1)!} }a^{j-1} U_x^{ {t}^{j+2} \psi }(a,b,\lambda). 
    \end{align*}
Repeat this process until we arrive at the following:
\begin{align*}
    \partial_b U_x^{t^{N-1} \psi} =  \sum_{j=1}^{N} \frac{d^{(j)}(b) +  i 2\pi \phi^{(j)}(b)}{(j-1)!}a^{j-1} U_x^{t^{j+N-2} \psi}.
\end{align*}
We can also express the system in matrix form as:
\begin{align}
    \label{system_of_equations_2}
    \mathbf{y}^N = \mathbf{D}^N \; \mathbf{x}^N,
\end{align}
where 
\begin{align}
    \label{matrix2}
& \mathbf{D}^N:=
    \begin{bmatrix}
        U_x^{\psi} & U_x^{t \psi} & \cdots & U_x^{t^{N-1} \psi} \\[0.5em]
        U_x^{t \psi} & U_x^{t^2 \psi} & \cdots & U_x^{t^{N} \psi} \\[0.5em]
        U_x^{t^2 \psi} & U_x^{t^3 \psi} & \cdots & U_x^{t^{N+1} \psi} \\[0.5em]
        \vdots & \vdots & \ddots & \vdots \\[0.5em]
        U_x^{t^{N-1} \psi} & U_x^{t^{N} \psi} & \cdots & U_x^{t^{2N-2} \psi}
    \end{bmatrix}, 
 \mathbf{y}^N:=\begin{bmatrix}
        \partial_b U_x^{\psi} \\[0.5em]
        \partial_b U_x^{t \psi} \\[0.5em]
        \partial_b U_x^{t^2 \psi} \\[0.5em]
        \vdots \\[0.5em]
        \partial_b U_x^{t^{N-1} \psi}
    \end{bmatrix}, 
  \mathbf{x}^N:=  \begin{bmatrix}
        d'(b) + i 2\pi \phi'(b) \\[0.5em]
        \left(d''(b) + i 2\pi \phi''(b)\right) a \\[0.5em]
        \frac{d'''(b) + i 2\pi \phi'''(b)}{2!} a^2 \\[0.5em]
        \vdots \\[0.5em]
        \frac{d^{(N)}(b) + i 2\pi \phi^{(N)}(b)}{(N-1)!} a^{N-1}
    \end{bmatrix}. 
\end{align}

Let  $D^N_0$ denote the determinant of $\mathbf{D}^N$, that is 
\begin{align}
    D^N_0:=\left|\mathbf{D}^N\right|.
\label{D^N_0}
\end{align}
By replacing the first and second columns of $\mathbf{D}^N$ with \(\mathbf{y}^N\), respectively, we  obtain the determinants of the resulting matrices:  
      \begin{align}
\label{DN_1_2}
        D^N_1 = \begin{vmatrix}
            \partial_b U_x^{\psi}  & U_x^{t \psi} & \cdots & U_x^{t^{N-1} \psi} \\[0.5em]
            \partial_b U_x^{t \psi} & U_x^{t^2 \psi}  & \cdots & U_x^{t^{N} \psi} \\[0.5em]
            \partial_b U_x^{t^2 \psi} & U_x^{t^3 \psi} & \cdots & U_x^{t^{N+1} \psi} \\[0.5em]
            \vdots & \vdots & \ddots & \vdots \\[0.5em]
            \partial_b U_x^{t^{N-1} \psi} & U_x^{t^{N} \psi} & \cdots & U_x^{t^{2N-2} \psi} \\[0.5em]
            \end{vmatrix}, \quad
            D^N_2 = \begin{vmatrix}
                U_x^{\psi}  & \partial_b U_x^{ \psi} & \cdots & U_x^{t^{N-1} \psi} \\[0.5em]
                U_x^{t \psi} & \partial_b U_x^{t \psi}  & \cdots & U_x^{t^{N} \psi} \\[0.5em]
                U_x^{t^2 \psi} &\partial_b  U_x^{t^2 \psi} & \cdots & U_x^{t^{N+1}\psi}\\[0.5em]
                \vdots & \vdots & \ddots & \vdots \\[0.5em]
                U_x^{t^{N-1} \psi} & \partial_b U_x^{t^{N-1} \psi} & \cdots & U_x^{t^{2N-2} \psi} \\[0.5em]
               \end{vmatrix}.
      \end{align} 
    With a threshold \(\epsilon > 0\), we define the region \(D_\epsilon\) as 
       \begin{equation}
         \label{Deps}  D_\epsilon:=\left\{ {(a,b,\lambda): \left|D^N_0(a,b,\lambda)\right| >\epsilon} \right\}.
        \end{equation} 
    Then, according to Cramer's rule, when $(a,b,\lambda) \in D_\epsilon$, 
    \begin{align*}
        d'(b)+ i 2\pi \phi'(b) =\frac{D^N_1}{D^N_0},\quad  a d''(b)+ i 2\pi a \phi''(b) =\frac{D^N_2}{D^N_0}.
    \end{align*}
   By utilizing Eq.~\eqref{partial_b}, $\mathbf{y}^N$ can be represented as 
    \begin{align}
        \label{matrix3}
        \begin{bmatrix}
            \partial_b U_x^{ \psi} \\[0.5em]
            \partial_b U_x^{t \psi} \\[0.5em]
            \partial_b U_x^{t^2 \psi}  \\[0.5em]
            \vdots \\[0.5em]
            \partial_b U_x^{t^{N-1} \psi} \\[0.5em]
        \end{bmatrix}
        &=
        -\frac{1}{a}
        \begin{bmatrix}
             U_x^{\psi'}  \\[0.5em]
            U_x^{t \psi'}   \\[0.5em]
             U_x^{t^2 \psi'}  \\[0.5em]
            \vdots \\[0.5em]
            U_x^{t^{N-1} \psi'}  \\[0.5em]
        \end{bmatrix}
        -\frac{1}{a}
        \begin{bmatrix}
              0  \\[0.5em]
             U_x^{\psi}  \\[0.5em]
             2U_x^{t \psi}  \\[0.5em]
            \vdots \\[0.5em]
           (N-1)U_x^{t^{N-2} \psi}  \\[0.5em]
        \end{bmatrix}
        +
        i 2\pi a \lambda
        \begin{bmatrix}
            U_x^ {t \psi} \\[0.5em]
            U_x^{t^2 \psi} \\[0.5em]
            U_x^{t^3 \psi}  \\[0.5em]
            \vdots \\[0.5em]
            U_x^{t^{N} \psi} \\[0.5em]
        \end{bmatrix}.
    \end{align}
 When we substitute the first column of determinant $D^N_1$ with Eq.~\eqref{matrix3}, we notice that the rightmost part of Eq.~\eqref{matrix3} coincides with the second column of $D^N_1$. 
 Similarly, for determinant $D^N_2$, when we substitute its second column with Eq.~\eqref{matrix3}, the rightmost part of Eq.~\eqref{matrix3} coincides with the second column of $D^N_0$ multiplied by $i2 \pi a \lambda$. 
 In reality, what needs to be calculated is the determinant formed by the first and second columns on the right of Eq.~\eqref{matrix3}.

To construct high-order reassignment operators, we first define the following determinants:
\begin{align}
\label{def_D11}
D^N_{11} &:= \begin{vmatrix}
    \scriptstyle U_x^{\psi'} & \scriptstyle U_x^{t \psi} & \cdots & \scriptstyle U_x^{t^{N-1} \psi} \\
    \scriptstyle U_x^{t \psi'} & \scriptstyle U_x^{t^2 \psi} & \cdots & \scriptstyle U_x^{t^{N} \psi} \\
    \scriptstyle U_x^{t^2 \psi'} & \scriptstyle U_x^{t^3 \psi} & \cdots & \scriptstyle U_x^{t^{N+1} \psi} \\
    \scriptstyle \vdots & \scriptstyle \vdots & \ddots & \scriptstyle \vdots \\
    \scriptstyle U_x^{t^{N-1} \psi'} & \scriptstyle U_x^{t^{N} \psi} & \cdots & \scriptstyle U_x^{t^{2N-2} \psi}
\end{vmatrix}, \quad
D^N_{12} := \begin{vmatrix}
    \scriptstyle 0 & \scriptstyle U_x^{t \psi} & \cdots & \scriptstyle U_x^{t^{N-1} \psi} \\
    \scriptstyle U_x^{\psi} & \scriptstyle U_x^{t^2 \psi} & \cdots & \scriptstyle U_x^{t^{N} \psi} \\
    \scriptstyle 2U_x^{t \psi} & \scriptstyle U_x^{t^3 \psi} & \cdots & \scriptstyle U_x^{t^{N+1} \psi} \\
    \scriptstyle \vdots & \scriptstyle \vdots & \ddots & \scriptstyle \vdots \\
    \scriptstyle (N-1)U_x^{t^{N-2}\psi} & \scriptstyle U_x^{t^{N} \psi} & \cdots & \scriptstyle U_x^{t^{2N-2} \psi}
\end{vmatrix},
\end{align}
and
\begin{align*}
D^N_{21} &:= \begin{vmatrix}
    \scriptstyle U_x^{\psi} & \scriptstyle U_x^{\psi'} & \cdots & \scriptstyle U_x^{t^{N-1} \psi} \\
    \scriptstyle U_x^{t \psi} & \scriptstyle U_x^{t \psi'} & \cdots & \scriptstyle U_x^{t^{N} \psi} \\
    \scriptstyle U_x^{t^2 \psi} & \scriptstyle U_x^{t^2 \psi'} & \cdots & \scriptstyle U_x^{t^{N+1} \psi} \\
    \scriptstyle \vdots & \scriptstyle \vdots & \ddots & \scriptstyle \vdots \\
    \scriptstyle U_x^{t^{N-1} \psi} & \scriptstyle U_x^{t^{N-1} \psi'} & \cdots & \scriptstyle U_x^{t^{2N-2} \psi}
\end{vmatrix}, \quad
D^N_{22} := \begin{vmatrix}
    \scriptstyle U_x^{\psi} & \scriptstyle 0 & \cdots & \scriptstyle U_x^{t^{N-1} \psi} \\
    \scriptstyle U_x^{t \psi} & \scriptstyle U_x^{\psi} & \cdots & \scriptstyle U_x^{t^{N} \psi} \\
    \scriptstyle U_x^{t^2 \psi} & \scriptstyle 2U_x^{t \psi} & \cdots & \scriptstyle U_x^{t^{N+1} \psi} \\
    \scriptstyle \vdots & \scriptstyle \vdots & \ddots & \scriptstyle \vdots \\
    \scriptstyle U_x^{t^{N-1} \psi} & \scriptstyle (N-1)U_x^{t^{N-2} \psi} & \cdots & \scriptstyle U_x^{t^{2N-2} \psi}
\end{vmatrix}.
\end{align*}
Using these determinants, we define the following auxiliary quantities:
\begin{align*}    
\tilde{\omega}_{N}^{\psi}(a,b,\lambda) &:= \frac{D^N_1}{2 \pi D^N_0} = -\frac{1}{2 \pi a} \frac{D^N_{11} + D^N_{12}}{D^N_0}, \\
\tilde{q}_{N}^{\psi}(a,b,\lambda) &:= \frac{1}{2 \pi a}\frac{D^N_2}{D^N_0} = i \lambda -\frac{1}{2 \pi a^2} \frac{D^N_{21} + D^N_{22}}{D^N_0}.
\end{align*}
Taking the imaginary parts then yields the desired $N$th-order IF and chirprate estimators:
\begin{align}
\label{omega1}
\widehat{\omega}_{N}^{\psi}(a,b,\lambda) &= \operatorname{Im}\bigl(\tilde{\omega}_{N}^{\psi}(a,b,\lambda)\bigr) = -\frac{1}{2 \pi a} \operatorname{Im}\Big(\frac{D^N_{11} + D^N_{12}}{D^N_0}\Big), \\
\label{lambda1}
\widehat{q}_{N}^{\psi}(a,b,\lambda) &= \operatorname{Im}\bigl(\tilde{q}_{N}^{\psi}(a,b,\lambda)\bigr) = \lambda - \frac{1}{2 \pi a^2} \operatorname{Im}\Big(\frac{D^N_{21} + D^N_{22}}{D^N_0}\Big).
\end{align}

We now summarize the main result of this subsection in the following proposition.
\begin{pro}
Suppose $x(t)$ is a signal given by Eq.~\eqref{def_generalized_PPS}. Let $\widehat{\omega}_{N}^{\psi}(a,b,\lambda)$ and $\widehat{q}_{N}^{\psi}(a,b,\lambda)$ be the IF and chirprate reassignment operators defined by Eq.~\eqref{omega1} and Eq.~\eqref{lambda1}, respectively.
Then for any $a, b, \lambda$ with $D_0^N(a, b, \lambda) \neq 0$, the following holds:
\begin{align}
\hat{\omega}_N^\psi(a,b,\lambda) = \phi'(b),\quad \hat{q}_N^\psi(a,b,\lambda) = \phi''(b). \label{reassignment_operators_condition}
\end{align}
\end{pro}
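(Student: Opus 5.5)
The argument has three parts: establish the linear system \eqref{system_of_equations_2} exactly for a generalized PPS, solve it by Cramer's rule, and then rewrite the Cramer numerators as the determinants $D^N_{11}+D^N_{12}$ and $D^N_{21}+D^N_{22}$.

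\textbf{Step 1: the system holds exactly.} Since $d$ and $\phi$ are polynomials of degree at most $N$, the Taylor expansion \eqref{Taylor_expansion_WCT} is exact. Differentiating \eqref{WCT_polynomials} under the integral sign in $b$ therefore gives \eqref{partial_b_WCT_polynomials}; this is justified because $\psi\in\mathcal{S}(\mathbb{R})$ and $|x(b+at)|$ grows at most like $e^{C(1+|t|^N)}$. Here I must assume enough decay of $\psi$, or that $d$ has even degree with negative leading coefficient, to make every $U_x^{t^m\psi}$ finite. I will state this as the standing hypothesis implicit in the paper.

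To obtain the remaining rows, I will argue directly rather than via $\partial_\lambda$: replacing $\psi$ by $t^k\psi$ in \eqref{WCT_polynomials} and differentiating in $b$ gives
\[
\partial_b U_x^{t^k\psi}=\sum_{j=1}^N \frac{d^{(j)}(b)+i2\pi\phi^{(j)}(b)}{(j-1)!}\,a^{j-1}U_x^{t^{j-1+k}\psi},\qquad k=0,\dots,N-1 .
\]
This holds because $\partial_b$ of the exponent equals $\sum_{j\ge1}\frac{1}{(j-1)!}(d^{(j)}+i2\pi\phi^{(j)})(at)^{j-1}$ exactly; the top-degree term causes no problem, since $d^{(N+1)}=\phi^{(N+1)}=0$. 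These $N$ equations are precisely $\mathbf{y}^N=\mathbf{D}^N\mathbf{x}^N$.

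\textbf{Step 2: Cramer's rule and the identification of numerators.} When $D_0^N\neq0$, Cramer's rule gives $x^N_1=D^N_1/D^N_0$ and $x^N_2=D^N_2/D^N_0$. Next I expand the column $\mathbf{y}^N$ using \eqref{matrix3}. For this I need $\partial_b U_x^{t^k\psi}=-\frac1a U_x^{t^k\psi'}-\frac ka U_x^{t^{k-1}\psi}+i2\pi a\lambda U_x^{t^{k+1}\psi}$, which follows by integrating by parts in \eqref{def_WCT} with $t^k\psi$ in place of $\psi$, since $(t^k\psi)'=t^k\psi'+kt^{k-1}\psi$.

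By multilinearity in the replaced column:
\begin{itemize}
\item In $D^N_1$, the $\lambda$-part of that column equals the existing second column $(U_x^{t^{k+1}\psi})_k$. It therefore contributes zero, and $D^N_1=-\frac1a(D^N_{11}+D^N_{12})$.
\item In $D^N_2$, the $\lambda$-part equals $i2\pi a\lambda$ times the original second column, so it contributes $i2\pi a\lambda D^N_0$. Hence $D^N_2=-\frac1a(D^N_{21}+D^N_{22})+i2\pi a\lambda D^N_0$.
\end{itemize}
These give exactly the two displayed expressions for $\tilde\omega_N^\psi$ and $\tilde q_N^\psi$.

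\textbf{Step 3: taking imaginary parts.} We have $\tilde\omega_N^\psi=\frac{1}{2\pi}\bigl(d'(b)+i2\pi\phi'(b)\bigr)$, whose imaginary part is $\phi'(b)$. Similarly, $\tilde q_N^\psi=\frac{1}{2\pi a}\,a\bigl(d''(b)+i2\pi\phi''(b)\bigr)$, whose imaginary part is $\phi''(b)$. Both use only that $d$, $\phi$, $a$ are real, which proves \eqref{reassignment_operators_condition}.

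\textbf{Main obstacle.} The algebra is routine. The delicate point is rigor in Step 1: the WCT of a generalized PPS must converge absolutely, together with its $b$-derivative and the integration by parts with vanishing boundary terms. I would handle this by assuming, as the paper implicitly does, that $\psi$ decays fast enough relative to $e^{d(b+at)}$ (for instance, $\psi$ compactly supported or Gaussian-type decay dominating $e^{|d|}$) to make all $U_x^{t^m\psi}$ and $U_x^{t^n\psi'}$ well defined. The derivative interchange then follows by dominated convergence, locally uniformly in $b$.
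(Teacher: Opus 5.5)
Your proof is correct and is essentially the paper's argument: the exact linear system \eqref{system_of_equations_2}, Cramer's rule, and the multilinear expansion of the replaced column via \eqref{matrix3}. In that expansion the $\lambda$-part vanishes in $D^N_1$ and contributes $i2\pi a\lambda D^N_0$ in $D^N_2$, after which you take imaginary parts. Your one deviation is obtaining the rows by substituting $t^k\psi$ directly rather than through $D_{a,\lambda}$ and $\partial_\lambda$; this is the equivalent shortcut the paper itself notes in the remark following the proposition. Your integrability caveat is also well taken: for non-constant polynomial $d$, the paper's claim that $\psi\in\mathcal{S}(\mathbb{R})$ alone makes every $U_x^{t^m\psi}$ well defined does not hold in general.
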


\begin{proof}
The result follows directly from Cramer's rule applied to the linear system Eq.~\eqref{matrix2}. A detailed derivation is provided in the preceding derivations.
\end{proof}

By replacing the $j$-th column of the determinant $D^N_0$ with $\mathbf{y}^N$, we obtain the determinant $D^N_j$, which satisfies
\begin{align*}
\frac{d^{(j)}(b)+ i 2\pi \phi^{(j)}(b)}{(j-1)!}a^{j-1} = \frac{D^N_j}{D^N_0},\quad 3 \leq j\leq N.
\end{align*}
This allows us to estimate the high-order phase derivatives as
\[
\phi^{(j)}(b)=\frac{(j-1)!}{2 \pi a^{j-1}}  \operatorname{Im}\Big(\frac{D^N_j}{D^N_0}\Big).
\]
Note that the rightmost columns in Eq.~\eqref{matrix3} correspond to the second column of the determinant $D^N_j$; consequently, we only need to compute the determinant formed by the first and second columns on the right-hand side of Eq.~\eqref{matrix3}.

\begin{mrem}
Notice that replacing the wavelet \(\psi(t)\) in Eq.~\eqref{partial_b_WCT_polynomials} with \(t\psi(t)\) immediately yields Eq.~\eqref{partial_tb_WCT_polynomials}. The authors of \cite{li2022self} derived the second-order IF 
reassignment operator for their proposed self-matched extracting wavelet transform by replacing \(\psi(t)\) with \(t\psi(t)\) and \(t^2\psi(t)\), thereby forming a linear system of two equations. In fact, one can replace \(\psi(t)\) in Eq.~\eqref{partial_b_WCT_polynomials} with any alternative wavelet--for instance, \(\psi'(t)\)--to obtain a new equation. 
Specifically, replacing \(\psi(t)\) with \(N-1\) distinct wavelets, denoted \(\psi_1(t), \cdots, \psi_{N-1}(t)\), produces a system of N equations analogous to Eq.~\eqref{system_of_equations_2}. 
Under appropriate conditions, the components of \(\mathbf{x}^N\), which include the Nth-order IF and chirprate reassignment operators that satisfy Eq.~\eqref{reassignment_operators_condition}, can be solved from this linear system using Cramer's rule.

However, this approach raises concerns about the properties of the associated SWCT. On one hand, the SWCT is built upon the WCT \(U^\psi_x(a, b, \gl)\), which is entirely determined by the single wavelet \(\psi(t)\). On the other hand, the Nth-order IF and chirprate reassignment operators obtained through the aforementioned method depend on the auxiliary wavelets \(\psi_1(t), \cdots, \psi_{N-1}(t)\). The role of these auxiliary wavelets in shaping the properties of the resulting SWCT remains unclear, even in the specialized case where \(\psi_j(t) = t^j\psi(t)\) for \(j = 1, \cdots, N-1\).

Our discussion from Eq.~\eqref{partial_b_WCT_polynomials} to Eq.~\eqref{system_of_equations_2} demonstrates that the linear system Eq.~\eqref{system_of_equations_2} arises entirely from the WCT \(U^\psi_x(a, b, \lambda)\) defined by a single wavelet \(\psi(t)\). 
More precisely, Eq.~\eqref{partial_tb_WCT_polynomials}, which is equivalent to Eq.~\eqref{D_a_WCT_polynomials}, is a property of \(U^\psi_x(a, b, \lambda)\), not merely a variation of Eq.~\eqref{partial_b_WCT_polynomials} with \(\psi(t)\) replaced by a different wavelet \(\psi_1(t) = t\psi(t)\).
Similarly, Eq.~\eqref{partial_t2b_WCT_polynomials} (which corresponds to Eq.~\eqref{partial_lambdab_WCT_polynomials}) is another property of \(U^\psi_x(a, b, \lambda)\).
Analogously, all remaining equations  in the linear system Eq.~\eqref{system_of_equations_2} are properties of \(U^\psi_x(a, b, \lambda)\).
 \hfill  \scalebox{1}{$\square$}
\end{mrem}

\subsection{Special cases and reduction to CWT}

We now illustrate the proposed framework with the concrete cases \(N=2\) and \(N=3\), which lead to simplified expressions for the reassignment operators.

For \(N=2\), the new second-order IF and chirprate reassignment operators in Eq.~\eqref{omega1} and Eq.~\eqref{lambda1} simplify to
\begin{align}
\label{def_gO1}
\widehat{\omega}_{2}^{\psi} &:= -\frac{1}{2 \pi a} \operatorname{Im} \Big( \frac{U_x^{t^2\psi}U_x^{\psi'} - U_x^{t\psi}U_x^{\psi} - U_x^{t\psi}U_x^{t\psi'}}{U_x^{t^2\psi}U_x^{\psi} - U_x^{t\psi}U_x^{t\psi}} \Big), \\
\label{def_gL1}
\widehat{q}_{2}^{\psi} &:= \lambda + \frac{1}{2 \pi a^2} \operatorname{Im} \Big( \frac{U_x^{t\psi}U_x^{\psi'} - U_x^{\psi}U_x^{\psi} - U_x^{\psi}U_x^{t\psi'}}{U_x^{t^2\psi}U_x^{\psi} - U_x^{t\psi}U_x^{t\psi}} \Big).
\end{align}
Compared with Eq.~\eqref{def_gO} and Eq.~\eqref{def_gL}, the expressions Eq.~\eqref{def_gO1} and Eq.~\eqref{def_gL1} are much simpler and do not require \(U_x^{\psi''}(a, b, \lambda)\).

For \(N=3\), the third-order IF and chirprate reassignment operators are given by
\begin{align*}
\widehat{\omega}_{3}^{\psi}(a,b,\lambda) = -\frac{1}{2 \pi a} \operatorname{Im}\Big(\frac{D^3_{11} + D^3_{12}}{D^3_0}\Big), \;
\widehat{q}_{3}^{\psi}(a,b,\lambda) = \lambda - \frac{1}{2 \pi a^2} \operatorname{Im}\Big(\frac{D^3_{21} + D^3_{22}}{D^3_0}\Big),
\end{align*}
where
\begin{align*}
    & D^3_0:={U_x^{\psi}(U_x^{t^2 \psi}U_x^{t^4 \psi}-U_x^{t^3 \psi}U_x^{t^3 \psi})-U_x^{t \psi}(U_x^{t \psi}U_x^{t^4 \psi}-U_x^{t^2 \psi}U_x^{t^3 \psi})+U_x^{t^2\psi}(U_x^{t \psi}U_x^{t^3 \psi}-U_x^{t^2 \psi}U_x^{t^2 \psi})}; \\
     &D^3_{11}:={U_x^{\psi'}(U_x^{t^2 \psi}U_x^{t^4 \psi}-U_x^{t^3 \psi}U_x^{t^3 \psi})-U_x^{t \psi'}(U_x^{t \psi}U_x^{t^4 \psi}-U_x^{t^2 \psi}U_x^{t^3 \psi})+U_x^{t^2\psi'}(U_x^{t \psi}U_x^{t^3 \psi}-U_x^{t^2 \psi}U_x^{t^2 \psi})}; \\
     &D^3_{12}:={-U_x^{ \psi}(U_x^{t \psi}U_x^{t^4 \psi}-U_x^{t^2 \psi}U_x^{t^3 \psi})+2 U_x^{t\psi}(U_x^{t \psi}U_x^{t^3 \psi}-U_x^{t^2 \psi}U_x^{t^2 \psi})}; \\
    &D^3_{21} :=-{U_x^{\psi'}(U_x^{t \psi}U_x^{t^4 \psi}-U_x^{t^2 \psi}U_x^{t^3 \psi})+U_x^{t \psi'}(U_x^{ \psi}U_x^{t^4 \psi}-U_x^{t^2 \psi}U_x^{t^2 \psi})-U_x^{t^2\psi'}(U_x^{ \psi}U_x^{t^3 \psi}-U_x^{t \psi}U_x^{t^2 \psi})}; \\
    & D^3_{22} :={U_x^{\psi}(U_x^{ \psi}U_x^{t^4 \psi}-U_x^{t^2 \psi}U_x^{t^2 \psi})-2U_x^{t\psi}(U_x^{ \psi}U_x^{t^3 \psi}-U_x^{t^2 \psi}U_x^{t \psi})}. 
\end{align*}

In contrast to the third-order reassignment operators in \cite{chen2024composite, jiang2025synchrosqueezed}, our high-order IF and chirprate reassignment operators achieve both a simplified representation and reduced computational overhead, leading to significantly improved implementation efficiency.

Another special case is when \(\lambda = 0\). In this case, the WCT reduces to the continuous wavelet transform (CWT) of \(x(t)\) with wavelet \(\psi(t)\):
\begin{align*}
W_x^\psi(a,b) &:=\int_{\mathbb{R}} x(t)\frac 1a \psi\Big(\frac{t-b}{a}\Big)dt 
= \int_{\mathbb{R}} x(b+at) \psi(t) dt. 
\end{align*}
Let \(E^N_0(a, b)\), \(E^N_{11}(a, b)\), and \(E^N_{12}(a, b)\) be the quantities obtained from 
\(D^N_0(a, b, \lambda)\), \(D^N_{11}(a, b, \lambda)\), and \(D^N_{12}(a, b, \lambda)\) in Eq.~\eqref{D^N_0} and Eq.~\eqref{def_D11} 
by replacing \(U_x^{t^m\psi}(a, b, \lambda)\) and \(U_x^{t^m\psi'}(a, b, \lambda)\) with 
\(W_x^{t^m\psi}(a, b)\) and \(W_x^{t^m\psi'}(a, b)\), respectively. Define
\begin{align}
\label{omega1_2D}
\widehat{\Omega}_{N}^{\psi}(a,b): = -\frac{1}{2 \pi a}\operatorname{Im}\Big(\frac{E^N_{11}(a, b) + E^N_{12}(a, b)}{E^N_0(a, b)}\Big).  
\end{align}
Then \(\widehat{\Omega}_{N}^{\psi}(a,b)\) is an \(N\)th-order IF reassignment operator in the time-scale plane satisfying
\[
\widehat{\Omega}_{N}^{\psi}(a,b) = \phi'(b)
\]
for any \(x(t)\) of the form Eq.~\eqref{def_generalized_PPS} and any \((a, b)\) with \(E^N_0(a, b) \neq 0\).

Although the authors in \cite{hu2019high} proposed a method to obtain the \(N\)th-order IF reassignment operator based on the CWT, their derivation is quite complicated, heavily reliant on iteration, and computationally expensive.
In contrast, the reassignment operator in Eq.~\eqref{omega1_2D} is more straightforward and intuitive.

\subsection{High-order synchrosqueezed wavelet-chirplet transform}

The high-order IF and chirprate reassignment operators derived above enable a sharpened representation of multicomponent signals. By reassigning the WCT coefficients to the estimated IF and chirprate curves, we define the high-order synchrosqueezed wavelet-chirplet transform (HSWCT) as follows.
\begin{mdef}
    \label{definition_HSWCT}
    Given \( x(t) \in \mathcal{S}'(\mathbb{R}) \) and \( \psi(t) \in \mathcal{S}(\mathbb{R}) \) with a threshold \(\epsilon > 0\), let \(\widehat{\omega}_{N}^{\psi}(a,b,\lambda)\) and \(\widehat{q}_{N}^{\psi}(a,b,\lambda)\) be 
    the reassignment operations defined by Eq.~\eqref{omega1} and Eq.~\eqref{lambda1}, respectively. Then, we define the HSWCT  as
    \begin{equation*}
        \mathcal{U}^{N,\psi}_x(\xi, b, \gamma) := \iint\limits_{\{(a, \lambda) : (a,b,\lambda) \in D_\epsilon\}} 
        U_x^\psi(a,b,\lambda) \, \delta\!\left(\xi - \widehat{\omega}_{N}^{\psi}(a,b,\lambda)\right) 
        \delta\!\left(\gamma - \widehat{q}_{N}^{\psi}(a,b,\lambda)\right) \frac{da}{a} \, d\lambda,
    \end{equation*}
    where \(D_\epsilon\) is a region defined in Eq.~\eqref{Deps}.   
\end{mdef}
In Definition \ref{definition_HSWCT}, \(\delta\) denotes the Dirac delta function.

Building upon the HSWCT framework, we further define the multiple HSWCT (MHSWCT). Since the HSWCT provides a TFC representation of the signal, we need to employ the scale-frequency relationship
\[
\xi = \frac{\mu_0}{a},
\]
where $\mu_0 := -\arg\max_{\eta \in \mathbb{R}} |\hat{\psi}(\eta)|$ and we assume that $\psi(t)$ is such a wavelet that $\mu_0 > 0$. 
Under the same conditions as in Definition~\ref{definition_HSWCT}, define the initial transform as $\mathcal{U}^{N,\psi}_{x,1}(\xi,b,\gamma) := \mathcal{U}^{N,\psi}_x(\xi,b,\gamma)$. The MHSWCT is then defined recursively for $j = 2, 3, \dots$ by
\begin{equation*}
\mathcal{U}^{N,\psi}_{x, j}(\xi,b,\gamma) := \iint\limits_{\{(\eta, \lambda): (\mu_0/\eta, b, \lambda) \in D_\epsilon\}}
\mathcal{U}^{N,\psi}_{x,j-1}(\eta, b, \lambda) 
\delta\!\big(\xi - \widehat{\omega}_{N}^{\psi}(\mu_0/\eta, b, \lambda)\big)
\delta\!\big(\gamma - \widehat{q}_{N}^{\psi}(\mu_0/\eta, b, \lambda)\big)
\,\frac{d\eta}{\eta} \, d\lambda.
\end{equation*}

\section{Theoretical error estimates}
In this section, we establish the error analysis for the approximation of the proposed high-order reassignment operators to IFs and chirprates. 
To commence this analysis, we need to impose some specific conditions on the  multicomponent signals given in Eq.~\eqref{AHM0}.

\begin{mdef}
    \label{define_set}
    Let $\epsilon_1, \epsilon_2$ be small positive constants, $N$ be a positive integer, and $x(t)$ be a multicomponent signal given in \eqref{AHM0}. The set $\mathcal{A}^{N}_{\epsilon_1, \epsilon_2}$ consists of multicomponent signals such that for all $t \in \mathbb{R}$, each mode $x_k(t)$ \((k = 1, 2, \cdots, K)\) satisfies:
    \begin{itemize}
        \item The amplitude function $A_k(t) \in L^{\infty}(\mathbb{R})$ is nonnegative; 
        \item The phase function $\phi_k(t) \in C^{N+1}(\mathbb{R})$ with $ \phi'_k(t) > 0$;
        \item The growth constraints: 
              $\lvert A'_k(t) \rvert \leq \epsilon_1$ and 
              $\lvert \phi_k^{(N+1)}(t) \rvert \le \epsilon_2$.
    \end{itemize}
\end{mdef}


In this paper, we also  assume the signal components \(x_k\) and \(x_l\) \((1 \leq k, l \leq K, k \neq l)\) satisfy
\begin{equation}
    \label{condition_separation}
    \text{either } \frac{|\phi'_k(b) - \phi'_l(b)|}{\phi'_k(b) + \phi'_l(b)} \geq \frac{\Delta_1}{\mu_0}, \, \text{ or } |\phi''_k(b) - \phi''_l(b)| \geq 2\Delta_2, \, b \in \mathbb{R},
\end{equation}
where \(0<\Delta_1 <\mu_0\) and \(\Delta_2 > 0\). There may exist time instants where $\phi'_k(b)$ and $\phi'_l(b)$ cross.

For \(1 \leq k \leq K\), define
\begin{equation} \label{def_zk}
Z_k := \left\{(a, b, \lambda) : |\mu_0 - a\phi'_k(b)| < \Delta_1 \text{ and } |\lambda - \phi''_k(b)| < \Delta_2, \, b \in \mathbb{R}\right\}.
\end{equation}
If the multicomponent signal \(x(t) \in \mathcal{A}^{N}_{\epsilon_1, \epsilon_2}\) satisfies the separation condition Eq.~\eqref{condition_separation}, then these sets are pairwise disjoint:
\begin{align}
Z_k \cap Z_l = \emptyset, \quad k \neq l. \label{separation_condition}
\end{align}

For signals satisfying the above conditions, the coefficient $|U^{t^m \psi}_{x_k}(a,b,\lambda)|$ exhibits significant concentration within the region $Z_k$, while its magnitude remains minor in $Z_l$ for $l \neq k$. 
This distinctive behavior arises from the localized nature of the WCT and the separation of signal components in the TSC domain. 
Consequently, this property not only enables the effective discrimination of individual signal components by analyzing the WCT coefficients but also provides a solid foundation for the subsequent error analysis of the reassignment operators.

\subsection{Lemmas}

Under the assumption that \(x(t) \in \mathcal{A}^{N}_{\epsilon_1,\epsilon_2}\) with sufficiently small \(\epsilon_1,\epsilon_2 > 0\),  then each component \(x_k(b+at)\) can be locally approximated as a 
polynomial phase signal within the time frame \(at\):
\begin{equation*}
x_k(b+at) = A_k(b+at) e^{i 2 \pi \phi_k(b+at)} = x_{N,k}(a,b,t) + x_{r,k}(a,b,t),
\end{equation*}
where
\begin{equation*}
x_{N, k}(a,b,t) = A_k(b) e^{i 2 \pi \phi_k(b)} e^{\sum_{j=1}^{N} \frac{i 2\pi}{j!} \phi^{(j)}_k(b)(at)^j}
\end{equation*}
and
\begin{equation*}
x_{r, k}(a,b,t) = x_k(b+at) - x_{N,k}(a,b,t).
\end{equation*}
Here, \(x_{N,k}(a,b,t)\) represents the main term, while \(x_{r,k}(a,b,t)\) denotes the remainder term of \(x_k(b+at)\).

To facilitate the subsequent discussion, for \(m=0,1,2,\cdots\), we introduce the following notations:
\begin{align}
    \label{notation1}  I_l :&=\int_{\mathbb{R}}|t^l \psi(t)|dt;\\  
\label{notation2} \mathcal{R}^{t^m\psi}_{x_k}(a,b,\lambda) :&= \int_\RR  x_{N,k}(a,b,t)  t^m \psi(t) e^{-i\pi \lambda a^2 t^2}dt. 
\end{align} 
Notice that 
\begin{equation}
\label{PFT1}
 \mathcal{R}^{t^m\psi}_{x_k}(a,b,\lambda)=x_k(b) {\cF^{N}}(t^m\psi)\big(-a\phi_k'(b), a^2(\gl-\phi_k''(b)), -a^3\phi_k'''(b), \cdots, -a^{N}\phi_k^{N}(b)\big),  
\end{equation}
where $\cF^{N}(t^m\psi)$ is the $N$th-order polynomial Fourier transform of $t^m\psi(t)$ defined by \eqref{def_PFT}. 

The following lemma provides a bound for this approximation error in terms of the WCT.

\begin{lem}\label{lem1}
Let \(x(t) \in \mathcal{A}^{N}_{\epsilon_1,\epsilon_2}\) for some \(\epsilon_1>0\) and \(\epsilon_2>0\) and \(U^{t^m\psi}_x(a, b, \lambda)\) denote the WCT of \(x(t)\) with wavelet function \(t^m \psi(t)\), where $m$ is a nonnegative integer. 
For each $k$,  let $\mathcal{R}^{t^m\psi}_{x_k}(a,b,\lambda)$ be the quantity defined by Eq.~\eqref{notation2}. Then 
\begin{equation}
\label{approx_Uxk}
 |U^{t^m \psi}_{x_k} (a,b,\lambda)- \mathcal{R}^{t^m\psi}_{x_k}(a,b,\lambda)|\le \Pi_{k, m}(a,b), 
\end{equation}
where 
\begin{equation}
\label{def_Pimk} 
\Pi_{k, m}(a, b):= \epsilon_1  a I_{m+1}+ \epsilon_2  A_k(b)\frac{2 \pi a^{N+1}}{(N+1)!}I_{N+m+1}.
\end{equation}
\end{lem}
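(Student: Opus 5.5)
The plan is to write the difference as an integral of the remainder term and bound it pointwise. By Eq.~\eqref{def_WCT} and Eq.~\eqref{notation2},
\[
U^{t^m\psi}_{x_k}(a,b,\lambda)-\mathcal{R}^{t^m\psi}_{x_k}(a,b,\lambda)=\int_{\RR} x_{r,k}(a,b,t)\, t^m\psi(t)e^{-i\pi\lambda a^2t^2}\,dt .
\]
Since the chirp factor has modulus one, it suffices to show
\[
|x_{r,k}(a,b,t)|\le \epsilon_1 a|t| + \epsilon_2 A_k(b)\frac{2\pi a^{N+1}|t|^{N+1}}{(N+1)!}
\]
for all $t$. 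Multiplying by $|t^m\psi(t)|$ and integrating then gives exactly $\Pi_{k,m}(a,b)$, with $I_{m+1}$ and $I_{N+m+1}$ as in Eq.~\eqref{notation1}. Here $a>0$, which is implicit in the scale variable.

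To obtain this pointwise bound, split the remainder by adding and subtracting $A_k(b)e^{i2\pi\phi_k(b+at)}$:
\[
x_{r,k}=\big(A_k(b+at)-A_k(b)\big)e^{i2\pi\phi_k(b+at)}+A_k(b)\Big(e^{i2\pi\phi_k(b+at)}-e^{i2\pi P_N(t)}\Big),
\]
where $P_N(t)=\sum_{j=0}^N\frac{1}{j!}\phi_k^{(j)}(b)(at)^j$ is the degree-$N$ Taylor polynomial of $\phi_k(b+\cdot)$ at $b$.

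The first term is bounded using the mean value theorem and $|A_k'|\le\epsilon_1$. This gives $|A_k(b+at)-A_k(b)|\le \epsilon_1 a|t|$, and the exponential factor has modulus one.

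For the second term, use $|e^{i\alpha}-e^{i\beta}|\le|\alpha-\beta|$ for real $\alpha,\beta$. This applies because $\phi_k$ is real-valued. The term is then at most $2\pi A_k(b)\,|\phi_k(b+at)-P_N(t)|$. Taylor's theorem with Lagrange remainder, valid since $\phi_k\in C^{N+1}$, gives
\[
|\phi_k(b+at)-P_N(t)|=\frac{|\phi_k^{(N+1)}(\xi)|}{(N+1)!}(a|t|)^{N+1}\le \frac{\epsilon_2 a^{N+1}|t|^{N+1}}{(N+1)!}.
\]
This produces the second part of the pointwise bound.

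There is no serious obstacle. The only points to check are that the phase is real, which justifies the Lipschitz bound on $e^{i\cdot}$, and that the integrals $I_{m+1}$ and $I_{N+m+1}$ are finite. The latter holds because $\psi\in\mathcal{S}(\mathbb{R})$. One also needs the remainder integral to be well-defined, which holds because $x_k$ is bounded: $A_k\in L^\infty$.
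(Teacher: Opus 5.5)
Your proof is correct and follows the paper's own argument in Appendix~A essentially step for step: the same add-and-subtract split of $x_{r,k}$, the mean value bound $\epsilon_1 a|t|$ for the amplitude part, the Lipschitz bound on $e^{i\cdot}$ combined with the Lagrange remainder for the phase part, and integration against $|t^m\psi(t)|$ to produce $I_{m+1}$ and $I_{N+m+1}$. Your remarks on the realness of the phase, $a>0$, and the finiteness of the integrals only make explicit what the paper leaves implicit.
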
 

\begin{proof}
The proof is provided in Appendix~\ref{app:proof_lem1}.
\end{proof}

We next examine how the derivative \(\partial_b U_{x_k}^{t^m\psi}\) relates to the 
higher-order moments \(U_{x_k}^{t^{m+j-1}\psi}\). The following lemma provides a 
quantitative bound for this relationship.

\begin{lem} \label{lem2}
Assume all conditions remain the same as in Lemma \ref{lem1}. Then 
\begin{equation}
\label{bounded_by_Ckn} 
\Big|\partial_b U_{x_k}^{t^m \psi}(a,b,\lambda)-\sum_{j=1}^N \frac {i2\pi \phi_k^{(j)}(b)}{(j-1)!}a^{j-1}
U_{x_k}^{t^{m+j-1} \psi}(a,b,\lambda)\Big| \le C_{k, m}(a,b),     
\end{equation}
where 
\begin{equation}
\label{def_Ckn}
C_{k, m}(a,b):=\epsilon_1 I_m +\epsilon_2\frac{ 2\pi a^N}{N!}\big( \epsilon_1 a I_{m+N+1} +A_k(b) I_{m+N}\big). 
\end{equation}
\end{lem}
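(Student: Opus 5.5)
We differentiate under the integral sign in the form \eqref{def_WCT}, which gives
\[
\partial_b U_{x_k}^{t^m\psi}(a,b,\lambda)=\int_{\RR} x_k'(b+at)\,t^m\psi(t)\,e^{-i\pi\lambda a^2t^2}\,dt .
\]
Since $x_k=A_ke^{i2\pi\phi_k}$, we have
\[
x_k'(b+at)=A_k'(b+at)\,e^{i2\pi\phi_k(b+at)}+i2\pi\phi_k'(b+at)\,x_k(b+at).
\]
The first term contributes at most $\epsilon_1\int|t^m\psi|=\epsilon_1 I_m$, because $|A_k'|\le\epsilon_1$. This accounts for the first summand of $C_{k,m}$.

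For the second term we Taylor-expand the IF around $b$ to order $N-1$ with Lagrange remainder:
\[
\phi_k'(b+at)=\sum_{j=1}^{N}\frac{\phi_k^{(j)}(b)}{(j-1)!}(at)^{j-1}+\frac{\phi_k^{(N+1)}(\xi_t)}{N!}(at)^N .
\]
We multiply this by $i2\pi x_k(b+at)t^m\psi(t)e^{-i\pi\lambda a^2t^2}$ and integrate. The polynomial part reproduces exactly
\[
\sum_{j=1}^N\frac{i2\pi\phi_k^{(j)}(b)}{(j-1)!}a^{j-1}U_{x_k}^{t^{m+j-1}\psi}(a,b,\lambda),
\]
the sum subtracted in \eqref{bounded_by_Ckn}. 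What remains is the remainder integral
\[
\frac{2\pi a^N}{N!}\int_{\RR}\bigl|\phi_k^{(N+1)}(\xi_t)\bigr|\,A_k(b+at)\,|t^{m+N}\psi(t)|\,dt .
\]
We use $|\phi_k^{(N+1)}|\le\epsilon_2$. The only nontrivial point is bounding $A_k(b+at)$ in terms of $A_k(b)$. By the mean value theorem,
\[
A_k(b+at)\le A_k(b)+\epsilon_1 a|t|.
\]
Substituting this gives
\[
\epsilon_2\frac{2\pi a^N}{N!}\bigl(A_k(b)I_{m+N}+\epsilon_1 aI_{m+N+1}\bigr),
\]
which is exactly the second summand of $C_{k,m}$. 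Adding the two bounds gives \eqref{bounded_by_Ckn}.

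The main obstacle is justification rather than algebra. We must check that differentiation under the integral is allowed. This follows by dominated convergence, since $A_k\in L^\infty$, $A_k'$ and $\phi_k'$ grow at most polynomially through the Taylor bound, and $\psi\in\mathcal S(\RR)$. We must also check that the Lagrange form with an intermediate point $\xi_t$ is valid pointwise in $t$; this holds because $\phi_k\in C^{N+1}$. Finally, we take $a>0$ throughout, as scales are positive.
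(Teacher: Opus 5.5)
Your proposal is correct and follows the paper's proof step for step. The $A_k'$ term gives $\epsilon_1 I_m$, and the order-$(N-1)$ Taylor remainder of $\phi_k'$ with $|\phi_k^{(N+1)}|\le\epsilon_2$ controls the phase term; the bound $A_k(b+at)\le A_k(b)+\epsilon_1 a|t|$ then yields both pieces of the second summand of $C_{k,m}$, and your justification of differentiating under the integral sign (using the polynomial growth of $\phi_k'$) is a detail the paper leaves implicit.
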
 
\begin{proof}
The proof is provided in Appendix~\ref{app:proof_lem2}.
\end{proof}


For \( m = 0,\cdots,2N-1 \), let 
\(\Upsilon_{k, m}(b)\) be a function satisfying the following condition:
\begin{align}
  \label{Upsilonmk}
  \sup_{\{(a, \gl): (a,b,\lambda)\notin Z_k\}}  
\Big|{\cF^{N}}(t^m\psi)\Big(-a\phi_k'(b), a^2(\gl-\phi_k''(b)), -a^3\phi_k'''(b), \cdots, -a^N\phi_k^{(N)}(b)\Big)\Big|
&\leq \Upsilon_{k, m}(b).
\end{align}
Consider any distinct indices \(k \neq l\) and \((a,b,\lambda) \in Z_l\). Then, from Eq.~\eqref{approx_Uxk} and Eq.~\eqref{Upsilonmk} we obtain
\begin{align}
\big| U_{x_k}^{t^m \psi}(a,b,\lambda) \big| 
&\le \big| U_{x_k}^{t^m \psi}(a,b,\lambda) - \mathcal{R}_{x_k}^{t^m\psi}(a,b,\lambda) \big| 
+ \big| \mathcal{R}_{x_k}^{t^m\psi}(a,b,\lambda) \big| \nonumber \\
&\le \Pi_{k,m}(a,b) + A_k(b) \Upsilon_{k,m}(b). \label{eq:bound_Uxk_proof}
\end{align}

Subsequently, we present another pivotal lemma that quantifies the error term 
\begin{equation}
\label{Resnk}  
  \mathrm{Res}_{l, m}(a, b, \gl):=\partial_b U_x^{t^m \psi}(a,b,\lambda) - \sum_{j=1}^{N} \frac{i2\pi \phi_l^{(j)}(b)}{(j-1)!}a^{j-1} U_x^{t^{m+j-1}\psi}(a,b,\lambda),
\end{equation}
which  will play a fundamental role in the proof of the main theorem.

\begin{lem}\label{residual_estimate}
Let \(x(t) \in \mathcal{A}^{N}_{\epsilon_1,\epsilon_2}\) for some \(\epsilon_1>0\) and \(\epsilon_2>0\), 
with \(x(t)\) satisfying Eq.~\eqref{condition_separation}. 
Define \(\mathrm{Res}_{l, m}(a, b, \lambda)\), \(\Pi_{k, m}(a,b)\), \( C_{k,m}(a,b)\), 
and \(\Upsilon_{k, m}(b)\) by Eq.~\eqref{Resnk}, Eq.~\eqref{def_Pimk}, Eq.~\eqref{def_Ckn}, 
and Eq.~\eqref{Upsilonmk}, respectively.
Then for any \((a,b,\lambda) \in Z_l\),
\begin{equation*}
\left| \mathrm{Res}_{l, m}(a, b, \lambda) \right| \leq \Lambda_{l, m}(a, b),
\end{equation*}
where
    \begin{align}
\label{def_Lam_mk}
  \Lambda_{l, m}(a, b):=\sum_{k=1}^K C_{k, m}(a,b)+ \sum_{k\not=l}\sum_{j=1}^{N} \frac{2 \pi a^{j-1}}{(j-1)!} \Big|\phi_k^{(j)}(b)-\phi_l^{(j)}(b)\big | \big( \Pi_{k, m+j-1}(a,b) + A_k(b)  \Upsilon_{k, m+j-1}(b) \big).
    \end{align}
\end{lem}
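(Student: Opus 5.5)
By linearity of the WCT in the signal, $U_x^{t^m\psi}=\sum_{k=1}^K U_{x_k}^{t^m\psi}$ and $\partial_b U_x^{t^m\psi}=\sum_{k}\partial_b U_{x_k}^{t^m\psi}$. The plan is to decompose $\mathrm{Res}_{l,m}$ componentwise, adding and subtracting each component's own phase derivatives:
\begin{align*}
\mathrm{Res}_{l,m}(a,b,\lambda)
&=\sum_{k=1}^K\Big(\partial_b U_{x_k}^{t^m\psi}-\sum_{j=1}^N\frac{i2\pi\phi_k^{(j)}(b)}{(j-1)!}a^{j-1}U_{x_k}^{t^{m+j-1}\psi}\Big)\\
&\quad+\sum_{k=1}^K\sum_{j=1}^N\frac{i2\pi\big(\phi_k^{(j)}(b)-\phi_l^{(j)}(b)\big)}{(j-1)!}a^{j-1}U_{x_k}^{t^{m+j-1}\psi}(a,b,\lambda).
\end{align*}
In the second double sum, every term with $k=l$ vanishes identically, so only the indices $k\neq l$ survive. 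This is the key structural observation: it is why no information about $U_{x_l}$ itself is required.

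The first sum is bounded term by term using Lemma~\ref{lem2}. Its $k$-th summand has modulus at most $C_{k,m}(a,b)$, so the first sum contributes at most $\sum_{k=1}^K C_{k,m}(a,b)$.

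For the second sum, fix $k\neq l$. Since $(a,b,\lambda)\in Z_l$ and the sets are pairwise disjoint by \eqref{separation_condition}, we have $(a,b,\lambda)\notin Z_k$. Estimate \eqref{eq:bound_Uxk_proof} was derived for exactly this situation, from Lemma~\ref{lem1} combined with \eqref{PFT1} and the definition \eqref{Upsilonmk}. Applying it with moment index $m+j-1$ gives $|U_{x_k}^{t^{m+j-1}\psi}|\le \Pi_{k,m+j-1}(a,b)+A_k(b)\Upsilon_{k,m+j-1}(b)$. Multiplying by $\frac{2\pi a^{j-1}}{(j-1)!}|\phi_k^{(j)}(b)-\phi_l^{(j)}(b)|$ and summing over $j$ and over $k\neq l$ reproduces the second part of $\Lambda_{l,m}$. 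The triangle inequality then finishes the proof.

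The only point needing care is the index range. The moment index $m+j-1$ must lie in the range $0,\dots,2N-1$ for which $\Upsilon_{k,\cdot}$ is defined. For $m\le N-1$ (the case actually used in the system \eqref{system_of_equations_2}) we have $m+j-1\le 2N-2$, so this holds, and otherwise one simply assumes $\Upsilon$ is defined for the needed indices. Beyond this, the step that actually requires an argument is justifying $(a,b,\lambda)\notin Z_k$ from the separation condition. That disjointness is already recorded in \eqref{separation_condition}, so it can be invoked directly.
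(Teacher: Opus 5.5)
Your proposal is correct and follows the paper's proof essentially step for step. The paper likewise adds and subtracts $\sum_{j}\frac{i2\pi\phi_k^{(j)}(b)}{(j-1)!}a^{j-1}U_{x_k}^{t^{m+j-1}\psi}$ for each component, bounds the first sum by Lemma~\ref{lem2}, drops the vanishing $k=l$ terms of the second sum, and bounds the $k\neq l$ terms by \eqref{eq:bound_Uxk_proof}, which applies on $Z_l$ because $Z_k\cap Z_l=\emptyset$. Your remark that the moment index $m+j-1$ stays within $0,\dots,2N-2$, where $\Upsilon_{k,\cdot}$ is defined for the relevant $m\le N-1$, is a detail the paper leaves implicit.
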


\begin{proof}
The proof is provided in Appendix~\ref{app:proof_lem3}.
\end{proof}

This lemma provides a uniform bound for the residual \(\mathrm{Res}_{l,m}\) on the region \(Z_l\). 
The bound \(\Lambda_{l,m}\) comprises terms arising from both the interference from other 
components (via Lemma \ref{lem2} and the separation condition) and the intrinsic approximation 
errors of each component (via Lemma \ref{lem1}). This estimate will be directly used in the 
proof of Theorem \ref{error_reassignment} to control the errors of the high-order IF and 
chirprate reassignment operators.

\subsection{Main results}\label{error_reassignment}
We now present the main theorems establishing error bounds between the reassignment operators $\widehat{\omega}^{\psi}_{N}$ and $\widehat{q}^{\psi}_{N}$ (defined in Eq.~\eqref{omega1} and Eq.~\eqref{lambda1}, respectively) and the true IF and chirprate of a signal component.

\begin{theo}\label{theorem_eta_lambda}
Suppose \(x(t)\) is of the form Eq.~\eqref{AHM0} with \(x(t) \in \mathcal{A}^N_{\epsilon_1, \epsilon_2}\) and satisfies the separation condition Eq.~\eqref{condition_separation}. 
Let \(\epsilon_0 > 0\) be a threshold and \(D^N_0\) be the determinant of \(\mathbf{D}^N\) defined by Eq.~\eqref{matrix2}.
For each \(l\) with \(1 \le l \le K\), and for any \((a, b, \lambda) \in Z_l\) (see Eq.~\eqref{def_zk}) such that \(|D^N_0(a, b, \lambda)| > \epsilon_0^{-1}\), the following holds:
\begin{align*}
|\widehat{\omega}_{N}^{\psi}(a,b,\lambda) - \phi_l'(b)| \leq Bd_{1,l},\;
|\widehat{q}_{N}^{\psi}(a,b,\lambda) - \phi_l''(b)| \leq Bd_{2,l},
\end{align*}
where
\begin{align}
\label{def_Bd12}
Bd_{1,l} = \frac{\epsilon_0}{2\pi} \sum_{n=1}^{N} \Lambda_{l, n-1} \left|M_{n,1}\right|, \quad
Bd_{2,l} = \frac{\epsilon_0}{2\pi a} \sum_{n=1}^{N} \Lambda_{l, n-1} \left|M_{n,2}\right|,
\end{align}
with \(\Lambda_{l, n}\) defined by Eq.~\eqref{def_Lam_mk} and \(M_{n,j}\) denoting the algebraic cofactor of \(D^N_0\) for the element in the \(n\)-th row and \(j\)-th column.
\end{theo}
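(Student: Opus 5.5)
The approach is to view the residual relations of Lemma~\ref{residual_estimate} as a perturbed version of the exact linear system \eqref{system_of_equations_2}, and then apply Cramer's rule to the perturbation. Fix $l$ and a point $(a,b,\lambda)\in Z_l$ with $|D^N_0|>\epsilon_0^{-1}$. Define the exact-phase vector
\[
\mathbf{x}^N_l:=\Big(i2\pi\phi_l'(b),\; i2\pi\phi_l''(b)a,\;\dots,\; \tfrac{i2\pi\phi_l^{(N)}(b)}{(N-1)!}a^{N-1}\Big)^T .
\]
For $m=0,\dots,N-1$, the definition \eqref{Resnk} says precisely that
\[
\mathbf{y}^N=\mathbf{D}^N\mathbf{x}^N_l+\mathbf{r},\qquad \mathbf{r}:=\big(\mathrm{Res}_{l,0},\dots,\mathrm{Res}_{l,N-1}\big)^T,
\]
where $\mathbf{D}^N$ and $\mathbf{y}^N$ are built from the full signal $x$ as in \eqref{matrix2}. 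Because $D^N_0\neq0$, this gives
\[
(\mathbf{D}^N)^{-1}\mathbf{y}^N=\mathbf{x}^N_l+(\mathbf{D}^N)^{-1}\mathbf{r}.
\]

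Next I read off the first two coordinates. By Cramer's rule, the $j$-th coordinate of $(\mathbf{D}^N)^{-1}\mathbf{y}^N$ equals $D^N_j/D^N_0$. In particular, $D^N_1/D^N_0=2\pi\tilde\omega^\psi_N$ and $D^N_2/D^N_0=2\pi a\,\tilde q^\psi_N$ by the definitions preceding \eqref{omega1}. Here I need the identity $D^N_1=-\frac1a(D^N_{11}+D^N_{12})$, obtained by substituting \eqref{matrix3}. I will note that \eqref{matrix3} is just \eqref{partial_b} applied with $t^m\psi$ in place of $\psi$, so it holds for an arbitrary signal and not only for a generalized PPS. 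On the other side, the inverse is $(\mathbf{D}^N)^{-1}=\operatorname{adj}(\mathbf{D}^N)/D^N_0$, and $\operatorname{adj}(\mathbf{D}^N)_{jn}=M_{n,j}$. Hence
\[
2\pi\tilde\omega^\psi_N=i2\pi\phi_l'(b)+\frac{1}{D^N_0}\sum_{n=1}^N M_{n,1}\mathrm{Res}_{l,n-1},
\]
\[
2\pi a\,\tilde q^\psi_N=i2\pi a\phi_l''(b)+\frac{1}{D^N_0}\sum_{n=1}^N M_{n,2}\mathrm{Res}_{l,n-1}.
\]

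To conclude, I take imaginary parts, using $|\operatorname{Im}z|\le|z|$. This gives
\[
|\widehat\omega^\psi_N-\phi_l'(b)|\le \frac{1}{2\pi|D^N_0|}\sum_{n=1}^N|M_{n,1}|\,|\mathrm{Res}_{l,n-1}|.
\]
Applying $|D^N_0|^{-1}<\epsilon_0$ together with Lemma~\ref{residual_estimate}, which gives $|\mathrm{Res}_{l,n-1}|\le\Lambda_{l,n-1}$ on $Z_l$, yields $Bd_{1,l}$. Dividing the second relation by $2\pi a$ yields $Bd_{2,l}$ in the same way.

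The main obstacle, mild as it is, is the bookkeeping. I have to check that the imaginary parts of the operators in \eqref{omega1} and \eqref{lambda1} coincide with those of $D^N_1/(2\pi D^N_0)$ and $D^N_2/(2\pi aD^N_0)$; this includes the $i\lambda$ term in $\tilde q$, which arises from the third column of \eqref{matrix3} being proportional to the second column of $\mathbf{D}^N$. I also have to get the cofactor indexing right, namely that it is the transpose of the cofactor matrix that appears. Everything else is a direct combination of Cramer's rule with Lemma~\ref{residual_estimate}.
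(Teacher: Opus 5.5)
Your proposal is correct and follows essentially the same route as the paper. Both arguments reduce to bounding $|\tilde\omega^\psi_N - i\phi_l'(b)|$ and $|\tilde q^\psi_N - i\phi_l''(b)|$, write $D^N_1 - i2\pi\phi_l'(b)D^N_0$ and $D^N_2 - i2\pi a\phi_l''(b)D^N_0$ as the cofactor expansions $\sum_{n=1}^N M_{n,j}\,\mathrm{Res}_{l,n-1}$, and conclude with $|D^N_0|^{-1}<\epsilon_0$ and Lemma~\ref{residual_estimate}; the paper gets the expansions by column operations on the determinant, while you get them from $\mathbf{y}^N=\mathbf{D}^N\mathbf{x}^N_l+\mathbf{r}$ and the adjugate. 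Your handling of the chirprate case (the $i2\pi a\phi_l''(b)$ shift and the $i\lambda$ term) is cleaner than the paper's display, which writes $\phi_l'(b)$ where $a\phi_l''(b)$ is meant.
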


\begin{proof}   
 Since \(\widehat{\omega}_{N}^{\psi}(a,b,\lambda)\) and \(\widehat{q}_{N}^{\psi}(a,b,\lambda)\) are the imaginary  parts of \(\tilde{\omega}_{N}^{\psi}(a,b,\lambda)\) and \(\tilde{q}_{N}^{\psi}(a,b,\lambda)\) respectively, we have 
$$
|\wh {\omega}_{N}^{\psi}(a,b,\lambda) - \phi_l'(b)|\le 
|\tilde{\omega}_{N}^{\psi}(a,b,\lambda) - i\phi_l'(b)|, \quad |\wh{q}_{N}^{\psi}(a,b,\lambda) - \phi_l''(b)|
\le |\tilde{q}_{N}^{\psi}(a,b,\lambda) - i\phi_l''(b)|. 
$$
Thus we only need to demonstrate that
\[
|\tilde{\omega}_{N}^{\psi}(a,b,\lambda) - i\phi_l'(b)| \leq  Bd_{1,l}, \quad |\tilde{q}_{N}^{\psi}(a,b,\lambda) - i\phi_l''(b)| \leq   Bd_{2,l}. 
\]

Recall $D^N_0,  {D_1^N}$ are the determinants of matrices defined by \eqref{DN_1_2}. Clearly, we have   
 \begin{align*}
 {D_1^N}-   i 2 \pi \phi'_l(b) D^N_0 &=
    \begin{vmatrix}
        \partial_b U_x^{ \psi}  - i 2 \pi \phi'_l(b) U_x^{\psi} & U_x^{t \psi} & \cdots & U_x^{t^{N-1} \psi} \\
        \partial_b U_x^{t \psi} - i 2 \pi \phi'_l(b) U_x^{t \psi}  & U_x^{t^2 \psi}  & \cdots & U_x^{t^N \psi} \\
        \partial_b U_x^{t^2 \psi} - i 2 \pi \phi'_l(b) U_x^{t^2 \psi}  & U_x^{t^3 \psi} & \cdots & U_x^{t^{N+1} \psi} \\
        \vdots & \vdots & \ddots & \vdots \\
        \partial_b U_{x}^{t^{N-1} \psi} - i 2 \pi \phi'_l(b) U_x^{t^{N-1} \psi}  & U_x^{t^N \psi} & \cdots & U_x^{t^{2N-2} \psi} \\
    \end{vmatrix} \\
    &= 
    \begin{vmatrix}
  \partial_b U_x^{\psi} -       \sum_{j=1}^{N}\frac{i 2\pi \phi_l^{(j)}(b) }{(j-1)!} a^{j-1} U_x^{t^{j-1} \psi} & U_x^{t \psi} & \cdots & U_x^{t^{N-1} \psi} \\
 \partial_b U_x^{t\psi} -        \sum_{j=1}^{N}\frac{i 2\pi \phi_l^{(j)}(b) }{(j-1)!} a^{j-1}U_x^{t^{j} \psi}  & U_x^{t^2 \psi}  & \cdots & U_x^{t^N \psi} \\
  \partial_b U_x^{t^2 \psi} -       \sum_{j=1}^{N}\frac{i 2\pi \phi_l^{(j)}(b) }{(j-1)!} a^{j-1}U_x^{t^{j+1} \psi} & U_x^{t^3 \psi} & \cdots & U_x^{t^{N+1} \psi} \\
        \vdots & \vdots & \ddots & \vdots \\
 \partial_b U_x^{t^{N-1} \psi} -        \sum_{j=1}^{N}\frac{i 2\pi \phi_l^{(j)}(b) }{(j-1)!} a^{j-1}U_x^{t^{j+N-2} \psi} & U_x^{t^N \psi} & \cdots & U_x^{t^{2N-2} \psi} \\
    \end{vmatrix} \\ 
    &= 
    \begin{vmatrix}
         \mathrm{Res}_{l, 0} & U_x^{t \psi} & \cdots & U_x^{t^{N-1} \psi} \\
         \mathrm{Res}_{l, 1} & U_x^{t^2 \psi}  & \cdots & U_x^{t^N \psi} \\
         \mathrm{Res}_{l, 2} & U_x^{t^3 \psi} & \cdots & U_x^{t^{N+1} \psi} \\
        \vdots & \vdots & \ddots & \vdots \\
         \mathrm{Res}_{l, N-1} & U_x^{t^N \psi} & \cdots & U_x^{t^{2N-2} \psi} \\
    \end{vmatrix}=\sum_{m=1}^{N} \mathrm{Res}_{l, m-1} \; M_{m,1}.  
\end{align*}
By applying similar reasoning, we can also obtain: 
 \begin{align*}
{D_2^N} - i 2 \pi a \phi''_l(t) D^N_0 = \sum_{m=1}^{N} \mathrm{Res}_{l, m-1} \; M_{m,2}.  
\end{align*}

Thus,
\begin{align*}
 &    |\tilde{\omega }_{N}^{\psi}(a,b,\lambda) - i\phi_l'(b)| = \Big| \frac{D_1^N - i 2 \pi \phi'_l(b) D^N_0}{2 \pi D^N_0}\Big| \\
&\qquad \leq \frac{\epsilon_0}{2\pi} \sum_{m=1}^{N} \left| \mathrm{Res}_{l, m-1}\right| \; \left|M_{m,1}\right|
\le  \frac{\epsilon_0}{2\pi} \sum_{m=1}^{N} \Lambda_{l, m-1}\; \left|M_{m,1}\right|, \\
&  |\tilde{q }_{N}^{\psi}(a,b,\lambda) - i\phi_l''(b)| =    \Big| \frac{D_2^N - i 2 \pi \phi'_l(b) D^N_0}{2 \pi a D^N_0}\Big|\\
&\qquad\leq \frac{\epsilon_0}{2\pi a}
\; \sum_{m=1}^{N}\left| \mathrm{Res}_{l, m-1}\right|\; \left|M_{m,2}\right| \leq \frac{\epsilon_0}{2\pi a}
\; \sum_{m=1}^{N} \Lambda_{l, m-1} \; \left|M_{m,2}\right|.  
\end{align*}
\end{proof}

Note that the error bounds \(Bd_{1, l}\) and \(Bd_{2, l}\) in Eq.~\eqref{def_Bd12} for the approximation of the IF and chirprate reassignment operators depend on \(a\) and  \(b\). However, for \((a, b, \lambda) \in Z_l\), the scale variable \(a\) is confined to the interval
\begin{align*}
\frac{\mu_0 - \Delta_1}{\phi'_l(b)} < a < \frac{\mu_0 + \Delta_1}{\phi'_l(b)},
\end{align*}
which is determined entirely by \(b\). Moreover, by Eq.~\eqref{eq:bound_Uxk_proof}, for any \((a,b,\lambda) \in Z_l\), we have
\begin{align*}
|U^{t^m \psi}_{x} (a,b,\lambda)| &\le  |U^{t^m \psi}_{x_l} (a,b,\lambda)|+\sum_{k\neq l} |U^{t^m \psi}_{x_k} (a,b,\lambda)| \le  G_{l, m}(b), 
\end{align*}
where 
\[
G_{l, m}(b):= A_l(b) I_m + \sum_{k \neq l} \big(\Pi_{k, m}\big( \tfrac{\mu_0 + \Delta_1}{\phi'_l(b)}, b\big) + A_k(b) \Upsilon_{k, m}(b)\big). 
\]

Since \(|M_{m+1,1}|\) and \(|M_{m+1,2}|\) are finite sums of products of \(|U^{t^m \psi}_{x} (a,b,\lambda)|\), they are bounded by the corresponding sums of products of \(G_{l, m}(b)\). 
In addition, the quantity \(\Lambda_{l, m-1}(a, b)\) appearing in \(Bd_{1, l}\) and \(Bd_{2, l}\) is bounded by \(\Lambda_{l, m-1}\big(\frac{\mu_0 + \Delta_1}{\phi'_l(b)}, b\big)\). 
Consequently, for a given \(N\), one can derive bounds for \(Bd_{1, l}\) and \(Bd_{2, l}\) that depend only on \(b\). For general \(N\), we omit the explicit expressions due to the complexity of the determinant of an \((N-1) \times (N-1)\) matrix.

Theorem \ref{error_reassignment} establishes error bounds for the high-order IF and chirprate reassignment operators introduced in Section 2.3. On the region \(Z_l\), where the \(l\)-th component dominates, the estimates \(\hat{\omega}_N^\psi\) and \(\hat{q}_N^\psi\) approximate the true IF \(\phi_l'\) and chirprate \(\phi_l''\). 
The approximation errors are controlled by \(\epsilon_1\) (amplitude variation), \(\epsilon_2\) (higher-order phase variation), and the separation between components.
In each region \(Z_l\), a single component dominates the signal, whereas outside \(Z_l\), the component energies are relatively small. This can cause the denominator to approach zero and introduce significant singularities, so the error analysis may no longer hold in these regions.

\subsection{ Special case: IF approximation error in high-order synchrosqueezed wavelet transforms}
When \(\lambda = 0\), this result reduces to the first rigorous error analysis for 
high-order synchrosqueezed wavelet transforms, filling a theoretical gap in the literature.
We conclude this section by examining the approximation of the IF via the $N$th-order SST reassignment operator \(\widehat{\Omega}_{N}^{\psi}(a,b)\) in the wavelet domain.
For this purpose, we require that the signal components \(x_k\) and \(x_l\) \((1 \leq k, l \leq K, k \neq l)\) satisfy the following separation condition:
\begin{equation}
    \label{condition_separation_CWT}
  \frac{|\phi'_k(b) - \phi'_l(b)|}{\phi'_k(b) + \phi'_l(b)} 
\geq \frac{\Delta_1}{\mu_0},
\end{equation}
with \(0<\Delta_1<\mu_0\). This condition ensures that the components are sufficiently separated in the time-frequency plane, allowing us to define the region
\begin{equation*}
 O_k := \left\{(a, b) : |\mu_0 - a\phi'_k(b)| < \Delta_1, \; b \in \mathbb{R}\right\},
\end{equation*}
where the reassignment operator provides a reliable estimate of the IF.
Let \(\widetilde{\Upsilon}_{k, m}(b)\) be a function satisfying the following condition:
\begin{align*}
  \sup_{\{a: (a,b,\lambda)\notin O_k\}} 
\Big|{\cF^N}(t^m\psi)\Big(-a\phi_k'(b), -a^2\phi_k''(b)), -a^3\phi_k'''(b), \cdots, -a^N\phi_k^{(N)}(b)\Big)\Big|
&\leq \wt \Upsilon_{k, m}(b).
\end{align*}
Then for \(x(t) \in \mathcal{A}^N_{\epsilon_1, \epsilon_2}\) of the form Eq.~\eqref{AHM0} satisfying the separation condition Eq.~\eqref{condition_separation_CWT},
for any \((a, b)\) satisfying \((a, b) \in O_l\) and \(|E^N_0(a, b)| > \epsilon_0^{-1}\), we have 
\begin{align}
\label{HSST_approx}
 | \widehat{\Omega}_{N}^{\psi}(a,b)-\phi'(b)|\leq \widetilde{Bd}_{1, l},
\end{align}
where 
\begin{align*}
  \widetilde{Bd}_{1,l} = \frac{\epsilon_0}{2\pi} \sum_{n=1}^{N}\widetilde{\Lambda}_{l, n-1} \; \left|\widetilde{M}_{n,1}\right|,
\end{align*}
and \(\widetilde{\Lambda}_{l, n}\) is defined by Eq.~\eqref{def_Lam_mk} with \(\Upsilon_{k, n}(b)\) replaced by \(\widetilde{\Upsilon}_{k, n}(b)\); 
\(\widetilde{M}_{n,1}\) denotes the algebraic cofactor of \(E^N_0\) for the element in the \(n\)th row and first column.

 Eq. \eqref{HSST_approx} provides an approximation error between an arbitrary-order IF reassignment operator and the instantaneous frequency. To the best of our knowledge, no theoretical analysis exists in the literature concerning the approximation of arbitrary-order SST IF reassignment operators to the IF. As a by-product of this work, our derived theorem establishes such an analysis, thereby filling a gap in the theoretical framework of high-order SSTs.


\section{  Methodological Framework}
To implement the HSWCT numerically, this section details the key components of the computational framework. After a brief introduction to the Morlet-type wavelet, we present the numerical implementation, parameter selection strategy, and mode retrieval algorithm in the following subsections.

Throughout this work, we employ the Morlet-type wavelet
\begin{equation}
\psi_\sigma(t) = \frac{1}{\sigma \sqrt{2\pi}} e^{-\frac{t^2}{2\sigma^2}} e^{-i2\pi t}. \label{Morlet_type_wavelet}
\end{equation}
The Gaussian envelope ensures optimal time-frequency concentration, and the complex exponential makes the wavelet analytic--a desirable property for analyzing non-stationary signals. 
The parameter \(\sigma\) controls the width of the Gaussian envelope, balancing time and frequency resolution: smaller \(\sigma\) improves time resolution, while larger \(\sigma\) enhances frequency resolution.

For this wavelet, we have the simple relation
\begin{align*}
\psi'_\sigma(t) = -\frac{1}{\sigma^2} t \psi_\sigma(t) - 2\pi i \psi_\sigma(t),
\end{align*}
which, when substituted into Eq.~\eqref{omega1} and Eq.~\eqref{lambda1}, yields
\begin{align}
\hat{\omega}_{N}^{\psi}(a,b,\lambda) = \frac{1}{a} - \frac{1}{2 \pi a}\operatorname{Im}\left(\frac{D^N_{12}}{D^N_0}\right), \quad \hat{q}_{N}^{\psi}(a,b,\lambda) = \lambda - \frac{1}{2 \pi a^2}\operatorname{Im}\left(\frac{D^N_{22}}{D^N_0}\right),\label{Morlet_type_reassignment}
\end{align}
thus avoiding explicit numerical differentiation and the computation of the determinants of \(D^N_{11}\) and \(D^N_{21}\).
The Morlet-type wavelet therefore  provides a convenient and effective foundation for implementing the HSWCT.

\subsection{Numerical implementation}
For efficient computation, we reformulate the WCT of a signal \(x(t)\) into a structure amenable to FFT-based implementation.
 Using the Morlet-type wavelet \(\psi_\sigma\) defined in Eq.~\eqref{Morlet_type_wavelet}, we start from the definition Eq.~\eqref{def_WCT} and obtain
\begin{align*}
U_x^{t^m\psi_\sigma}(a, b, \lambda) &= \int_{\mathbb{R}} x(b + at) \, t^m \psi_\sigma(t) e^{- i\pi\lambda a^2t^2} \, dt  \nonumber\\
&= \int_{\mathbb{R}} \frac{1}{a} e^{-\frac{i2\pi\eta b}{a}} \hat{x}(\eta/a) \, \mathcal{F}^2(t^m \psi_\sigma)( - \eta, a^2\lambda) \, d\eta \nonumber \\
&= \int_{\mathbb{R}} \big( \hat{x}(\eta) \, \mathcal{F}^2(t^m\psi_\sigma)(-a\eta, a^2\lambda) \big) \, e^{i2\pi b\eta} \, d\eta.
\end{align*}

This formulation leads to an efficient three-step computational pipeline: first, compute the Fourier transform \(\hat{x}(\eta)\) of the signal via an FFT; second, for each scale \(a\) and chirprate \(\lambda\), multiply \(\hat{x}(\eta)\) by the precomputed function \(\mathcal{F}^2(t^m\psi_\sigma)(-a\eta, a^2\lambda)\); and third, apply an  IFFT to obtain the WCT coefficients.

The function \(\mathcal{F}^2(\psi_\sigma)(\eta_1, \eta_2)\) can be derived explicitly using the Gaussian integral formula \cite{cohen1995time}. For real \(\alpha\) and \(\beta\) with \(\alpha > 0\),
\begin{align*}
\int_{-\infty}^{\infty} e^{-(\alpha + i\beta)t^2 + i\omega t} \, dt = \frac{\sqrt{\pi}}{\sqrt{\alpha + i\beta}} e^{-\frac{\omega^2}{4(\alpha + i\beta)}}, \label{gaussian_integral}
\end{align*}
where \(\sqrt{\alpha + i\beta}\) denotes the principal square root. Applying this to \(\psi_\sigma\) gives
\begin{equation}
\mathcal{F}^2(\psi_\sigma)(\eta_1, \eta_2) =\frac{1}{\sqrt{1+i2\pi\sigma^2 \eta_2}} e^{-\frac{2\pi^2\sigma^2 (1+\eta_1)^2}{1+i2\pi \sigma^2\eta_2}}. \label{C_psi_explicit}
\end{equation}
For \(m \geq 1\), the functions \(\mathcal{F}^2(t^m\psi_\sigma)(\eta_1, \eta_2)\) can be obtained via straightforward integration techniques, such as differentiation with respect to parameters or integration by parts.

Let \(L_0\) be the signal length, \(J_0\) the number of scales, and \(L_c\) the number of chirprate samples.
This inverse FFT-based approach offers significant computational advantages. Specifically, for a fixed moment order \(m\), computing \(U_x^{t^m \psi_\sigma}\) requires \(J_0 \cdot L_c\) IFFTs, each of complexity \(O(L_0 \log L_0)\). Thus, the computational cost per moment order is \(O(J_0 \cdot L_c \cdot L_0 \log L_0)\).
For the \(N\)th-order HSWCT, the simplified reassignment formulae Eq.~\eqref{Morlet_type_reassignment} require computing \(U_x^{t^m\psi}\) for all moments \(m = 0, 1, \ldots, 2N-2\). This involves \(2N-1\) moment orders, i.e., \(2N-1\) times the number required for the WCT. Consequently, the total number of FFT/IFFT operations increases by this factor, but the asymptotic complexity per moment order remains \(O(J_0 \cdot L_c \cdot L_0 \log L_0)\).
Once the arrays \(U_x^{t^m\psi}\) are obtained, the reassignment operators are constructed via simple element-wise operations (additions and multiplications) on these precomputed arrays of size \(J_0 \times L_c \times L_0\). Since these operations are linear in the number of elements and require no additional FFTs, they do not affect the overall asymptotic complexity.

For a detailed discussion of the WCT implementation and the numerical aspects of the reassignment algorithms, we refer the readers to \cite{jiang2025synchrosqueezed}. A MATLAB implementation of the HSWCT is publicly available at \text{https://github.com/Lishuixin065/HSWCT} to facilitate reproducibility and further research.

\subsection{Parameter selection strategy}
To choose the parameter \(\sigma\) for the wavelet \(\psi_\sigma\), we employ R\'enyi entropy, which measures the concentration of 3D TSC representations. In the WCT framework, the R\'enyi entropy is given by
\begin{equation}
E_{\sigma}: =
 \frac{1}{{1 - \ell }}\log _2 
  \frac{\iint_{\mathbb{R}^2}\int_0^\infty \left|U^{\psi_\sigma}_x(a, b, \lambda) \right|^{2\ell}\frac {da}a db d\lambda} {\left(\iint_{\mathbb{R}^2}\int_0^\infty 
  \left| U^{\psi_\sigma}_x(a, b, \lambda) \right|^2\frac {da}a db d\lambda \right)^{\ell}}, \label{renyi_entropy}
\end{equation}
where \(\ell > 2\) is typically used. A lower R\'enyi entropy indicates a more concentrated TSC representation. Therefore, we select \(\sigma\) by minimizing the R\'enyi entropy:
\begin{align}
  \label{find_sigma}  \sigma_{1} = \arg\min_{\sigma>0} E_\sigma.
\end{align}
In this paper, we set \(\ell = 2.2\) for the R\'enyi entropy calculation.

This  strategy provides a reasonable range of suitable values and consistently yields an appropriate \(\sigma\) in many practical applications, leading to sharper and more concentrated TSC representations. 
Such an approach is widely adopted in the literature \cite{stankovic2001measure,baraniuk2002measuring,thakur2013synchrosqueezing}.

Another practical consideration is the selection of the optimal order \(N\) in HSWCT. In practice, the choice of \(N\) is often guided by empirical knowledge of the signal characteristics. 
While a higher order \(N\) can theoretically provide more accurate estimates of the IF and chirprate, it also increases computational complexity and sensitivity to noise, which may degrade performance.
A practical criterion for selecting the optimal order \(N\) is based on the R\'enyi entropy of the TFC representation. 
Define the R\'enyi entropy of order \(\ell\) (with \(\ell > 2\)) for the HSWCT in the TFC space as
\begin{equation*}
E_{\ell}(N): = \frac{1}{{1 - l}}\log _2 \frac{\iiint_{\mathbb{R}^3} \big |\mathcal{U}^{N,\psi_\sigma}_x(\xi, b, \gamma) \big|^{2 \ell}\, d\xi\, db\, d\gamma} {\Big(\iiint_{\mathbb{R}^3} \big| \mathcal{U}^{N,\psi_\sigma}_x(\xi, b, \gamma) \big|^2\, d\xi\, db\, d\gamma \Big)^{\ell}},
\end{equation*}
where \(\mathcal{U}^{N,\psi_\sigma}_x(\xi, b, \gamma)\) denotes the HSWCT coefficient of signal \(x\) in the TFC space. In this work, we set \(\ell = 2.2\) as in the previous subsection. 
The optimal order $N$ is then selected as the smallest value for which the decrement $\Delta E_{\ell}(N) := E_{\ell}(N) - E_{\ell}(N+1)$ drops below a given threshold, indicating that further increasing the order provides negligible gains in energy concentration. 
This entropy-based criterion ensures that the chosen order strikes an optimal balance between TFC representation concentration and computational complexity.

Additionally, cross-validation or other data-driven methods may assist in validating the selected order, balancing estimation accuracy with computational efficiency. 
However, it should be noted that this criterion does not constitute a fixed rule; rather, the optimal choice of \(N\) is application-dependent and should be fine-tuned according to the specific signal characteristics and analysis objectives.

\subsection{Mode retrieval }
We now turn our attention to mode retrieval. In the 3D TFC space, signal separation operator (SSO) reconstruction algorithms based on the CT and  WCT have been developed in \cite{chui2021time,chui2023analysis}. 
Suppose \((\check \xi_k(b), \check \gamma_k(b))\) are the ridge points extracted from the TFC representation \(\mathcal{U}^{N,\psi}_x(\xi,b,\gamma)\) or \(\mathcal{U}^{N,\psi}_{x, j}(\xi,b,\gamma)\). These ridge curves provide estimates of the IF \(\phi'_k(t)\) and chirprate \(\phi''_k(t)\).
We refer to \cite{chen2023disentangling,zhang2022local} for details on 3D ridge detection algorithms.

When a mode is well approximated by a linear chirp over local time intervals, i.e., for \(|at|\) sufficiently small,
\begin{align} \label{assume_chirp}
x_k(b + a t) \approx A_k(b) e^{i2\pi \big( \phi_k(b) + \phi'_k(b)(at) + \frac{\phi''_k(b)}{2}(at)^2 \big)},
\end{align}
the WCT of \(x_k\) then reduces to
\begin{align*}
U^{\psi_\sigma}_{x_k}(a, b, \lambda) &\approx \int_{\mathbb{R}} A_k(b) e^{i2\pi \phi_k(b)} e^{i2\pi \big( \phi'_k(b) a t + \frac{1}{2}\phi''_k(b) (at)^2 \big)} \psi(t) e^{-i\pi \lambda a^2 t^2} \, dt \\
&= x_k(b) \; \mathcal{F}^2(\psi_\sigma)\left( -a \phi'_k(b), \, a^2(\lambda - \phi''_k(b)) \right).
\end{align*}

Within the region \(Z_k\), the dominant contribution comes from the target component \(x_k(b)\), while interferences from other components are relatively small. 
Under this condition, one may approximate the mode directly via
\begin{align}\label{direct_SSO}
    x_k(b)\approx \frac{U^{\psi_\sigma}_{x}\big(1/\check{\xi}_k(b),\, b,\, \check{\gamma}_k(b)\big)}{\mathcal{F}^2(\psi_\sigma)\big(-1,\, 0\big)}.
\end{align}
However, as shown in Eq.~\eqref{C_psi_explicit}, the magnitude of the WCT coefficient \(U^{\psi_\sigma}_{x_k}(a, b, \lambda)\) is approximately
\[
|U^{\psi_\sigma}_{x_k}(a, b, \lambda)| \approx \frac{A_k(b)}{\big(1 + 4\pi^2 \sigma^4 a^4(\lambda - \phi''_k(b))^2\big)^{1/4}} 
e^{-\frac{2\pi^2 \sigma^2 (1-a \phi'_k(b))^2}{1 + 4\pi^2 \sigma^4 a^4(\lambda - \phi''_k(b))^2}},
\]
which shows that the decay along the chirprate direction is not exponential, but rather follows a power-law decay of the form \((1 + 4\pi^2 \sigma^4 a^4(\lambda - \phi''_k(b))^2)^{-1/4}\). 
This means that even when the region \(Z_k\) (where the target component dominates) is well-defined, the WCT coefficient may still be significantly influenced by interferences from other components, especially when the chirprates of those components are close to that of the target component.
Consequently, even when \(\check{\gamma}_k(b)\) is close to the true chirprate, the approximation error in Eq.~\eqref{direct_SSO} may become significant in practice.

To address this issue, a group WCT-based SSO scheme has been proposed in \cite{li2022chirplet,jiang2025synchrosqueezed}, following a similar idea to that in \cite{li2022chirplet}. This approach effectively mitigates inter-mode interference and yields more accurate mode reconstructions. The modes \(x_k(t), k=1, \cdots, K\), can then be recovered by solving the linear system
\begin{align}
    \label{solution_SSO}
	 \begin{bmatrix}
			x_1(b) \\
	    x_2(b) \\
	    \vdots \\
	    x_K(b)
	 \end{bmatrix}
	 &\approx 
	 \begin{bmatrix}
		 e_{1,1} & e_{1,2} & \cdots & e_{1,K} \\
		 e_{2,1} & e_{2,2} & \cdots & e_{2,K} \\
		 \vdots & \vdots & \ddots & \vdots \\
		 e_{K,1} & e_{K,2} & \cdots & e_{K,K}
	 \end{bmatrix}^{-1} 
	 \begin{bmatrix}
		U_x^{\psi_\sigma }(1/\check \xi_1(b),b,\check\gamma_1(b)) \\
	  U_x^{\psi_\sigma }(1/\check \xi_2(b),b,\check\gamma_2(b)) \\
	\vdots \\
	 U_x^{\psi_\sigma}(1/\check \xi_K(b),b,\check\gamma_K(b))
	 \end{bmatrix},
 \end{align}
where the entries of the mixing matrix are given by
\[
e_{l, k} = \mathcal{F}^2(\psi_\sigma)\big(-{\check \xi_k(b)}/{\check \xi_l(b)},\; (\check \gamma_l(b)-\check \gamma_k(b))/{\check \xi_l(b)}^2\big).
\]
The matrix \(E := [e_{l, k}]_{1 \leq l, k\leq K}\) is assumed to be nonsingular; otherwise, \(E^{-1}\) denotes its pseudo-inverse.

For the HSWCT, the choice of $\sigma$ involves a trade-off. On one hand, a smaller $\sigma$ helps preserve the validity of the local linear chirp assumption in Eq.~\eqref{assume_chirp}, especially when the phase of signal components varies rapidly. 
Moreover, a smaller $\sigma$  helps further reduce the recovery errors \cite{li2026time}.
On the other hand, a larger $\sigma$ ensures sufficient decay along the chirprate dimension, which is crucial for the accuracy of the reassignment estimates.

In light of these considerations, this paper adopts a two-parameter strategy. We first employ \(\sigma_1\) obtained from Eq.~\eqref{find_sigma} in the HSWCT to obtain accurate IF and chirprate estimates. 
Subsequently, for mode reconstruction, we use a smaller parameter \(\sigma_2\) in the recovery formula Eq.~\eqref{solution_SSO}. In this work, we empirically set \(\sigma_2\) to a value smaller than \(\sigma_1\). 
Based on our experiments, choosing \(\sigma_2 = \frac{1}{3}\sigma_1\) works well in practice, as it better satisfies the local linear chirp assumption Eq.~\eqref{assume_chirp} and yields more accurate mode reconstructions.

\section{Experimental results}
In this section, we  validate the proposed HSWCT and its associated mode retrieval scheme. Our primary goal is to demonstrate that HSWCT can produce sharply concentrated TFC representations, even for signals with strongly modulated IFs. 
Building on this, we then assess the effectiveness of the high-order reassignment operators in accurately extracting individual signal modes. 
Specifically, by systematically comparing HSWCTs of different orders, we highlight the superior resolution of higher-order transforms and quantify their accuracy.
\begin{figure}[H]
    \centering
    \begin{tabular}{cc}
        \subfloat[IFs of signal $x(t)$]{%
            \includegraphics[width=0.28\textwidth]{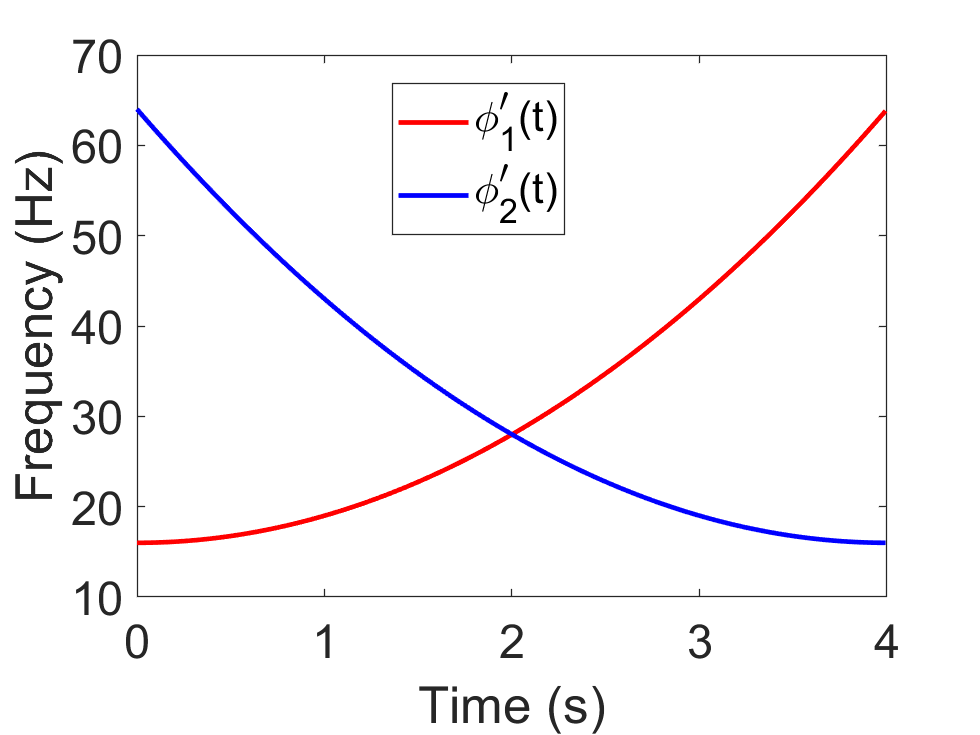}%
        } &
        \subfloat[Chirprates of signal $x(t)$]{%
            \includegraphics[width=0.28\textwidth]{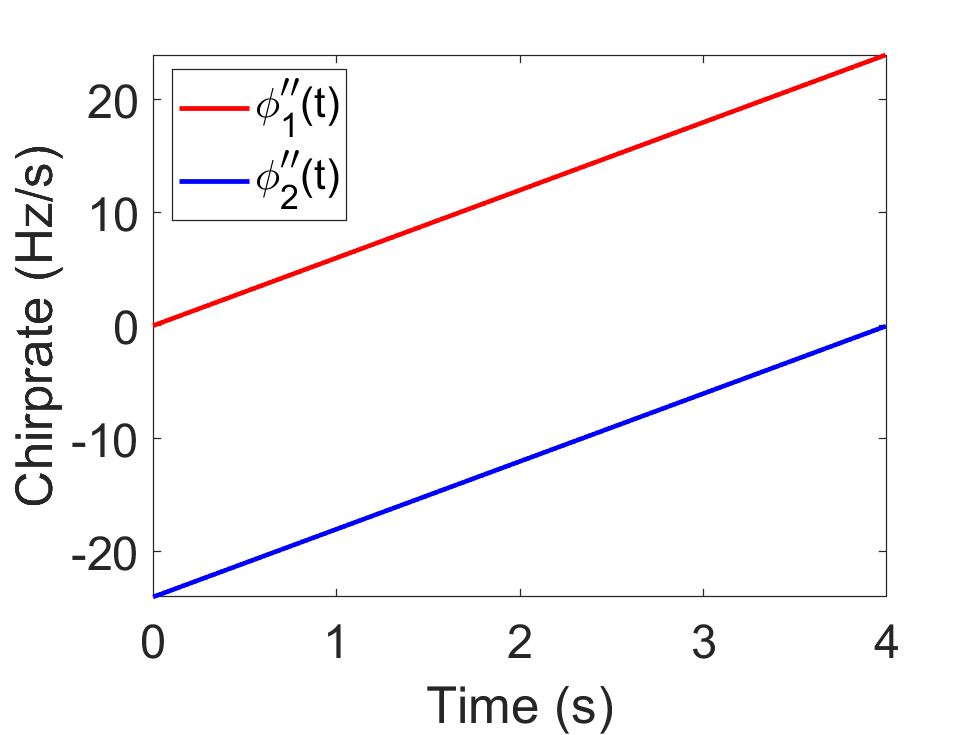}%
        } \\
    \end{tabular}
    \caption{\small IFs and chirprates of signal $x(t)$}
    \label{fig:IFs_CRs_cubic}
\end{figure}

Let \(x(t)\) be a signal with overlapping IFs, consisting of two cubic-phase signals:
\begin{align*}
x(t)=x_1(t) + x_2(t), \; 
    x_1(t) = e^{-0.01t^3 + 0.02t}  e^{2 \pi i (t^3 +16t) },\; 
    x_2(t) = e^{-0.02t^3 + 0.05t^2-0.1} e^{2\pi i ((t-4)^3 + 16t)}, 
\end{align*}
where \(0\le t\le 4\). 
The signal \(x(t)\) is sampled at a rate of \(\Delta t = 1/128\). 
The optimal parameter \(\sigma_1 = 4.4\) is obtained from Eq.~\eqref{find_sigma}.
The IFs and chirprates of \(x_1(t)\) and \(x_2(t)\) are shown in Fig.~\ref{fig:IFs_CRs_cubic}. Panel (a) shows the intersection of the IF curves at \(t = 2\) s and \(\omega = 28\) Hz, while Panel (b) shows that the corresponding chirprates are well separated.

Fig.~\ref{figure:time-frequency_slices_at_different_chirprates_of_cubic_signal} shows the time-frequency slices of the second-order HSWCT, \( \mathcal{U}_x^{2,\psi_{\sigma_1}}(\xi,b,\gamma) \), and the third-order HSWCT, \( \mathcal{U}_x^{3,\psi_{\sigma_1}}(\xi,b,\gamma) \), at different chirprate values \(\gamma = 0, -12, 12\). 
It can be observed that the third-order HSWCT provides a more concentrated 3D representation of \(x(t)\) than its second-order counterpart.
\begin{figure}[H]
    \centering
    \setlength{\tabcolsep}{3pt} 
    \begin{tabular}{ccc} 
        \begin{subfigure}[t]{0.28\textwidth} 
            \centering
            \includegraphics[width=\linewidth]{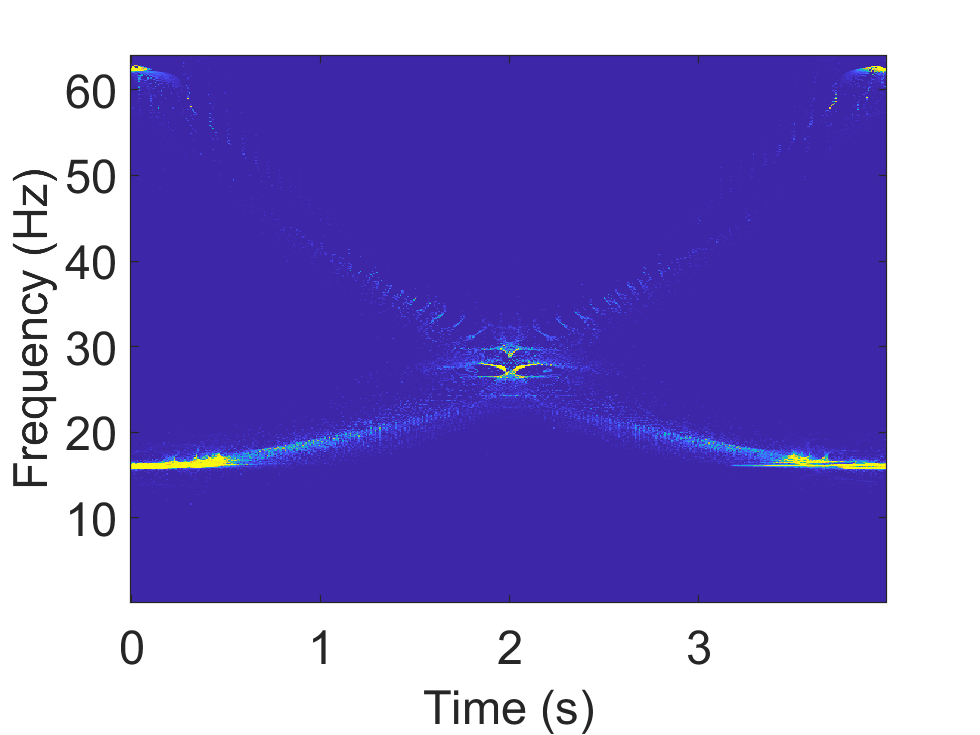}
            \caption{}
        \end{subfigure} &
        \begin{subfigure}[t]{0.28\textwidth}
            \centering
            \includegraphics[width=\linewidth]{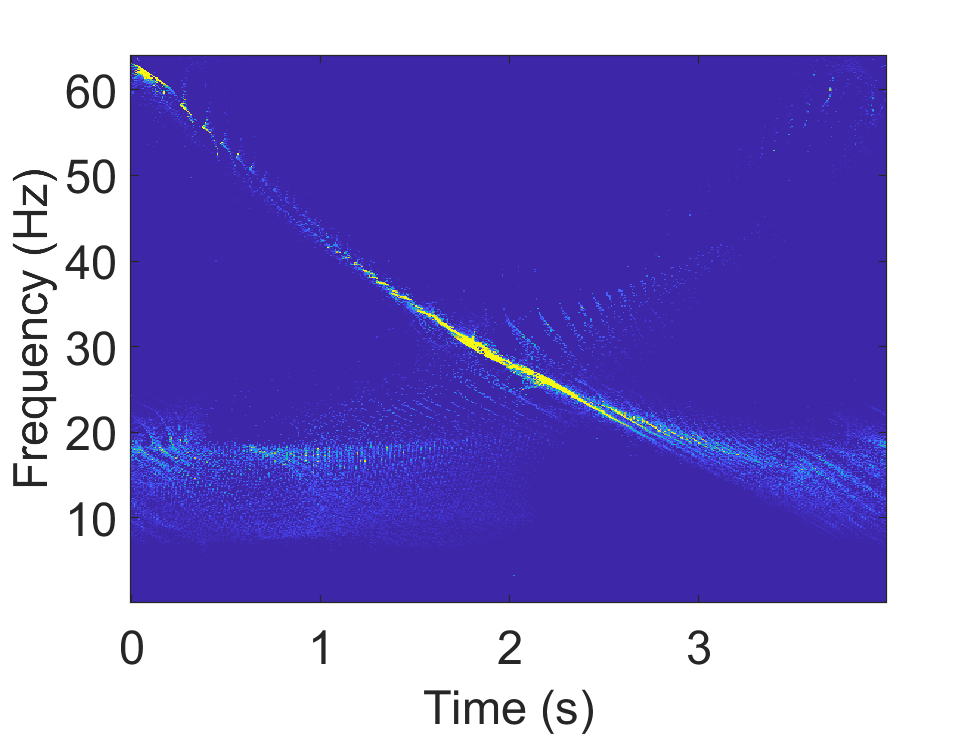}
            \caption{}
        \end{subfigure} &
        \begin{subfigure}[t]{0.28\textwidth}
            \centering
            \includegraphics[width=\linewidth]{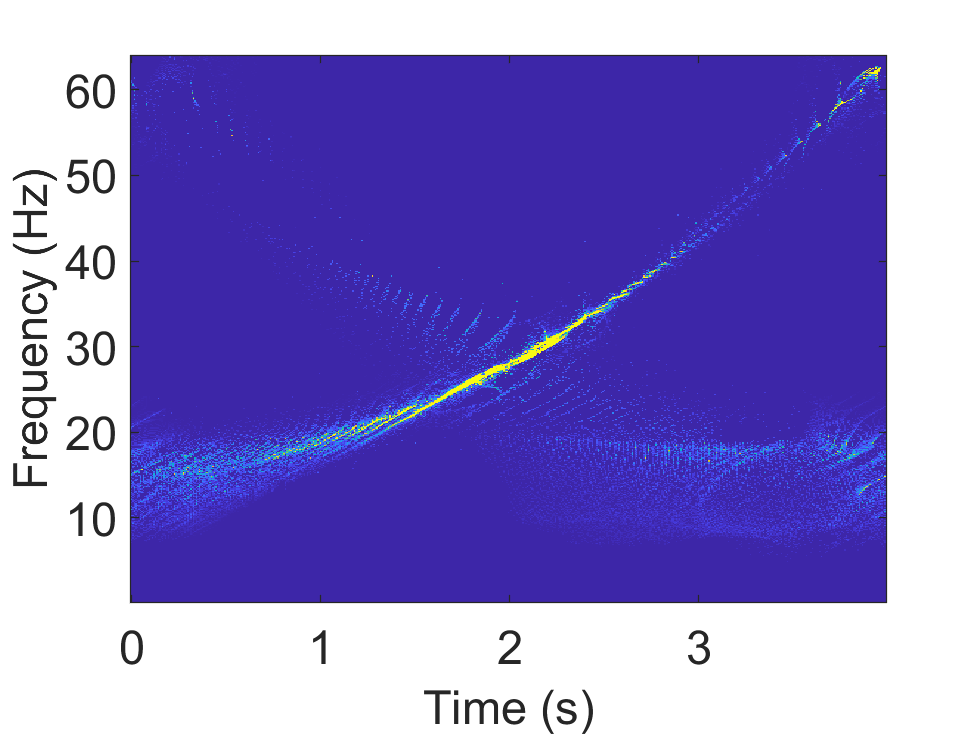}
            \caption{}
        \end{subfigure}\\
        \begin{subfigure}[t]{0.28\textwidth}
            \centering
            \includegraphics[width=\linewidth]{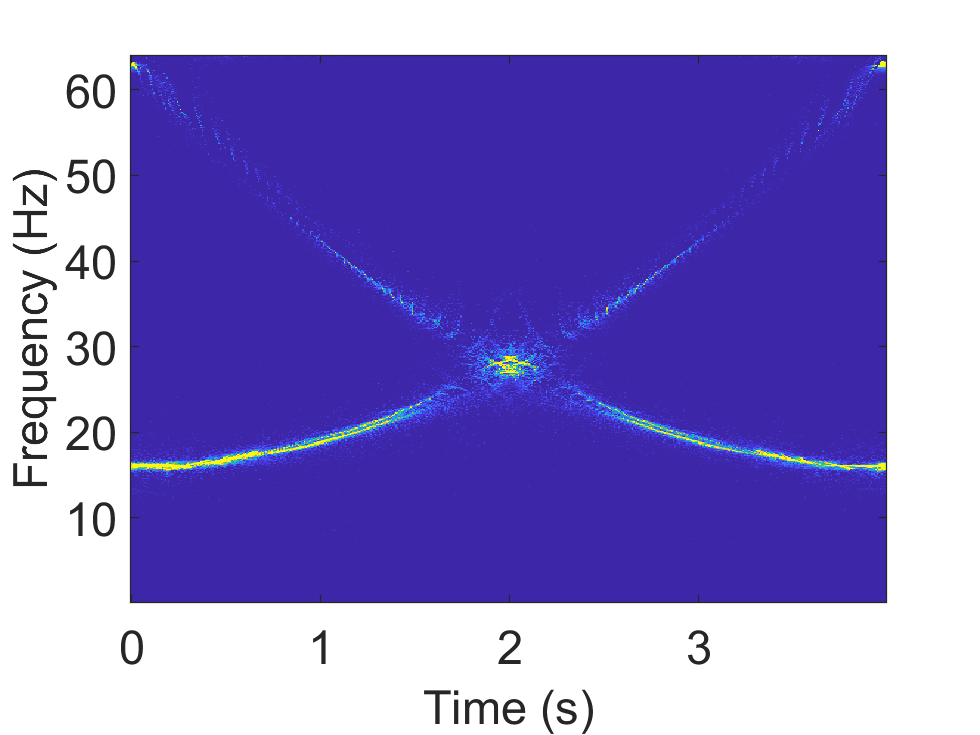}
            \caption{}
        \end{subfigure} &
        \begin{subfigure}[t]{0.28\textwidth}
            \centering
            \includegraphics[width=\linewidth]{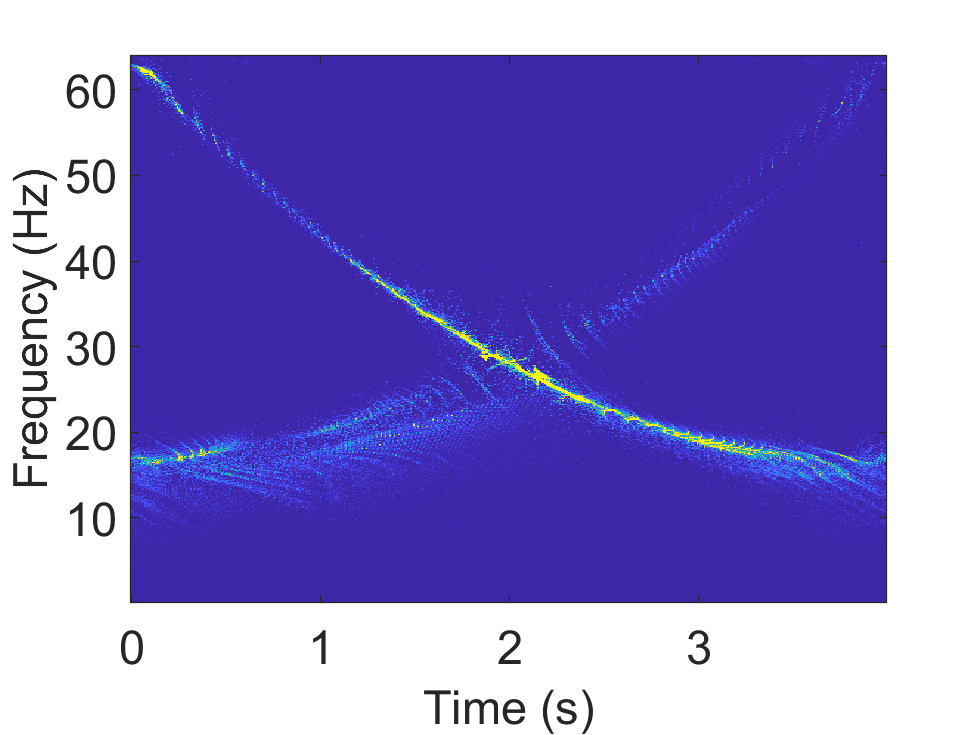}
            \caption{}
        \end{subfigure} &
        \begin{subfigure}[t]{0.28\textwidth}
            \centering
            \includegraphics[width=\linewidth]{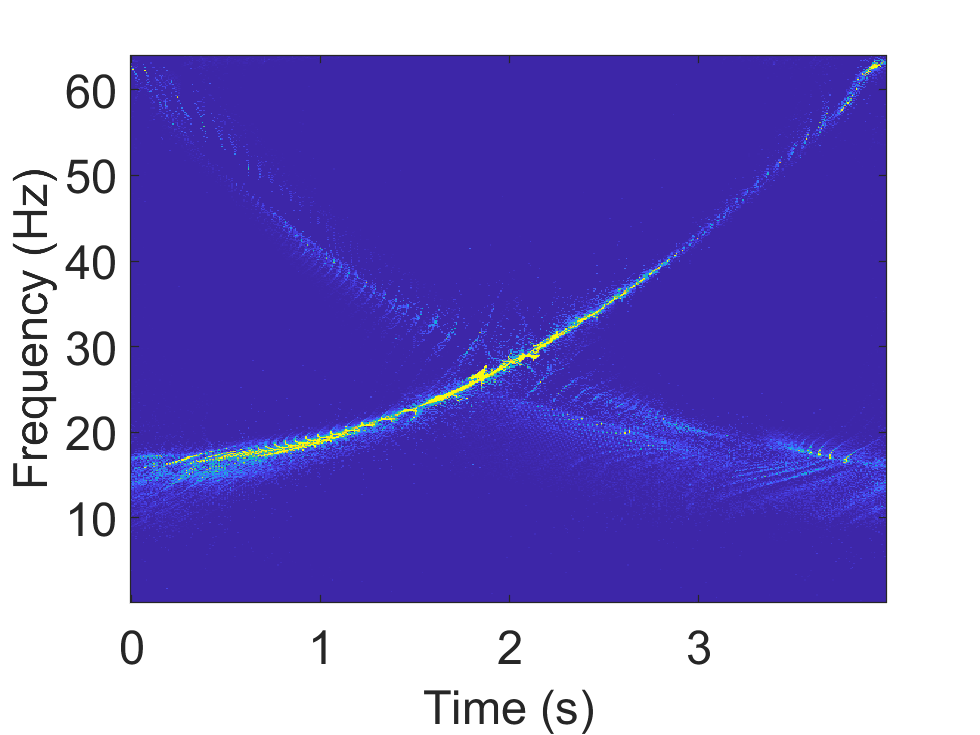}
            \caption{}
        \end{subfigure}
    \end{tabular}
    \caption{Time-frequency slices at different chirprate values. 
        (a)-(c) Second-order HSWCT \( \mathcal{U}_x^{2,\psi_{\sigma_1}}(\xi,b,\gamma) \) with \(\gamma=0, -12, 12\), respectively; 
        (d)-(f) Third-order HSWCT \( \mathcal{U}_x^{3,\psi_{\sigma_1}}(\xi,b,\gamma) \) with \(\gamma=0, -12, 12\), respectively.}
    \label{figure:time-frequency_slices_at_different_chirprates_of_cubic_signal}
\end{figure}

Fig.~\ref{figure:IFs_and_CRs_estimation_of_cubic_signal} displays the IF and chirprate ridges extracted from the second-order HSWCT \(\mathcal{U}_x^{2,\psi_{\sigma_1}}(\xi,b,\gamma)\) and the third-order HSWCT \(\mathcal{U}_x^{3,\psi_{\sigma_1}}(\xi,b,\gamma)\). 
Both methods provide accurate IF estimates, as shown in panels (a)-(b) and (e)-(f). However, their performance in chirprate estimation differs substantially: the third-order successfully captures the chirprate ridges in panels (g)-(h) (except near the boundaries), while the second-order suffers from significant estimation errors in panels (c)-(d).
\begin{figure}[H]
    \centering
    \setlength{\tabcolsep}{2pt} 
    \begin{tabular}{cccc} 
        \begin{subfigure}[t]{0.22\textwidth}
            \centering
            \includegraphics[width=\linewidth]{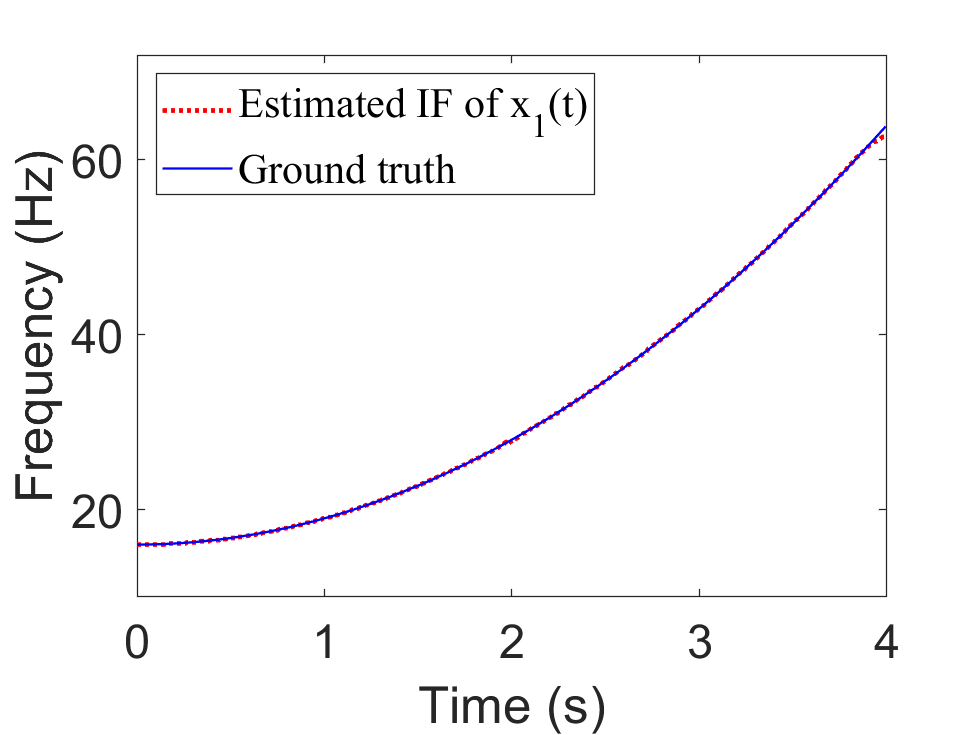}
            \caption{}  
        \end{subfigure} &
        \begin{subfigure}[t]{0.22\textwidth}
            \centering
            \includegraphics[width=\linewidth]{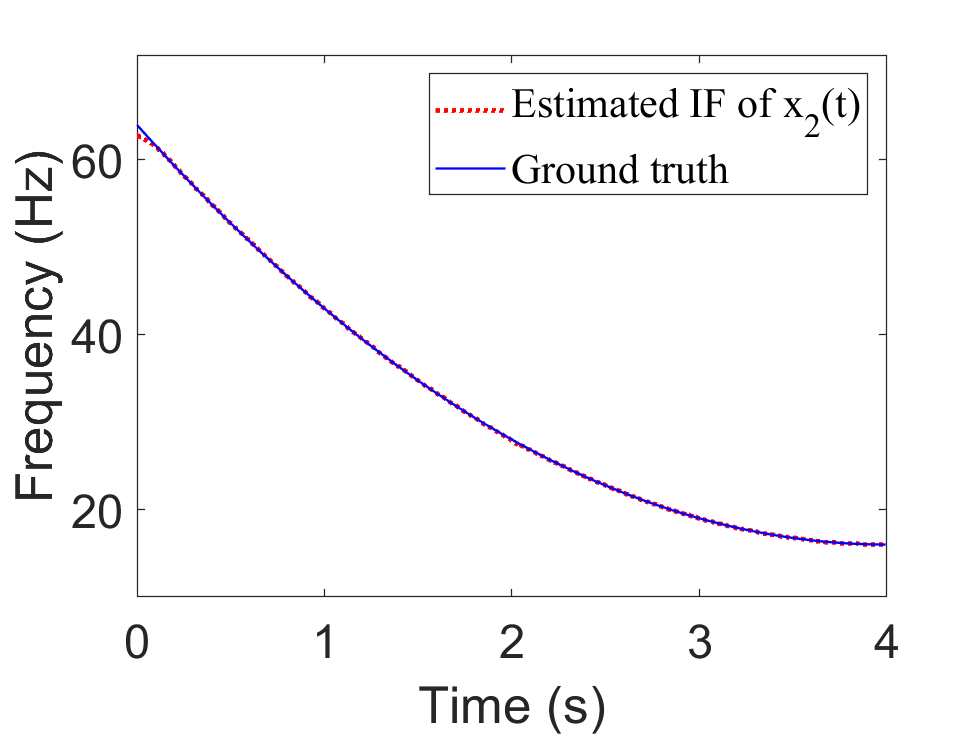}
            \caption{}
        \end{subfigure} &
        \begin{subfigure}[t]{0.22\textwidth}
            \centering
            \includegraphics[width=\linewidth]{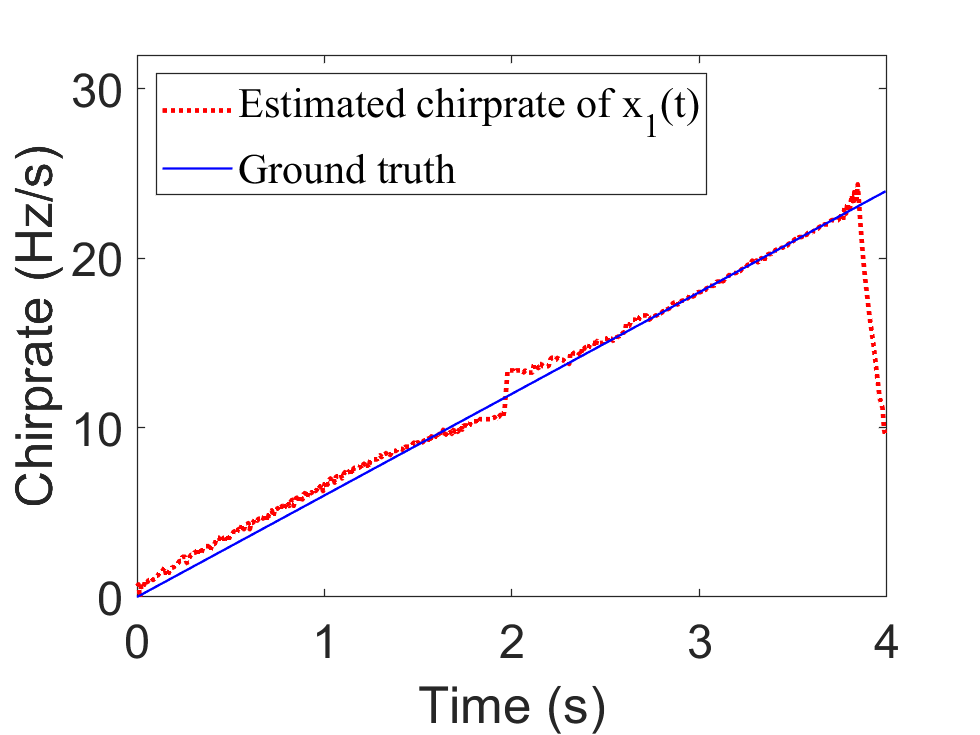}
            \caption{}
        \end{subfigure} &
        \begin{subfigure}[t]{0.22\textwidth}
            \centering
            \includegraphics[width=\linewidth]{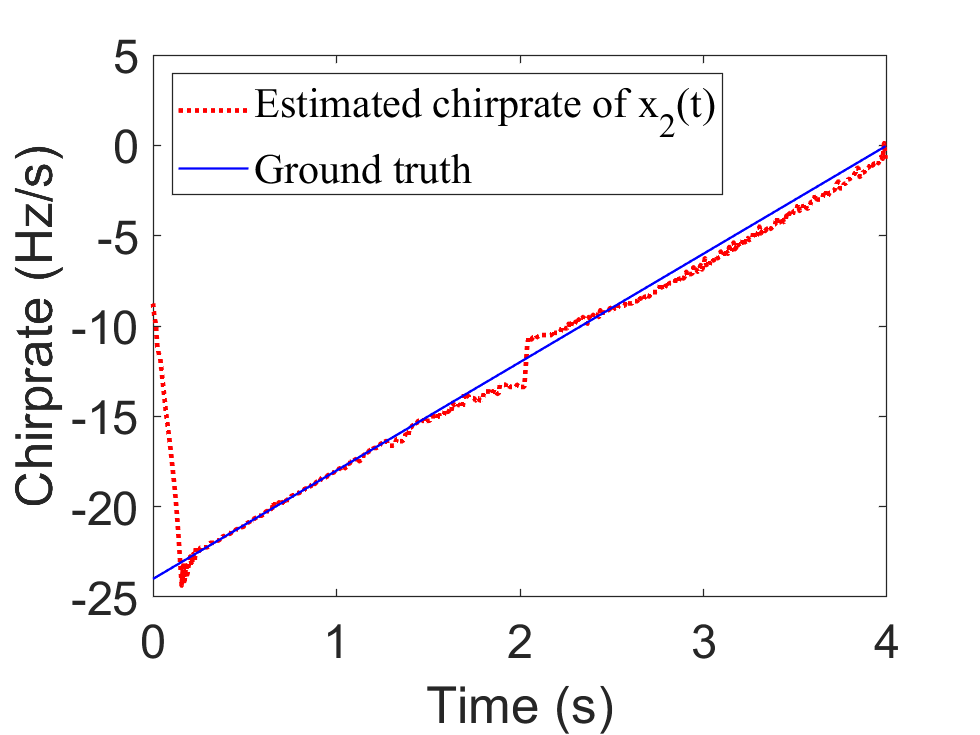}
            \caption{}
        \end{subfigure} \\
        \begin{subfigure}[t]{0.22\textwidth}
            \centering
            \includegraphics[width=\linewidth]{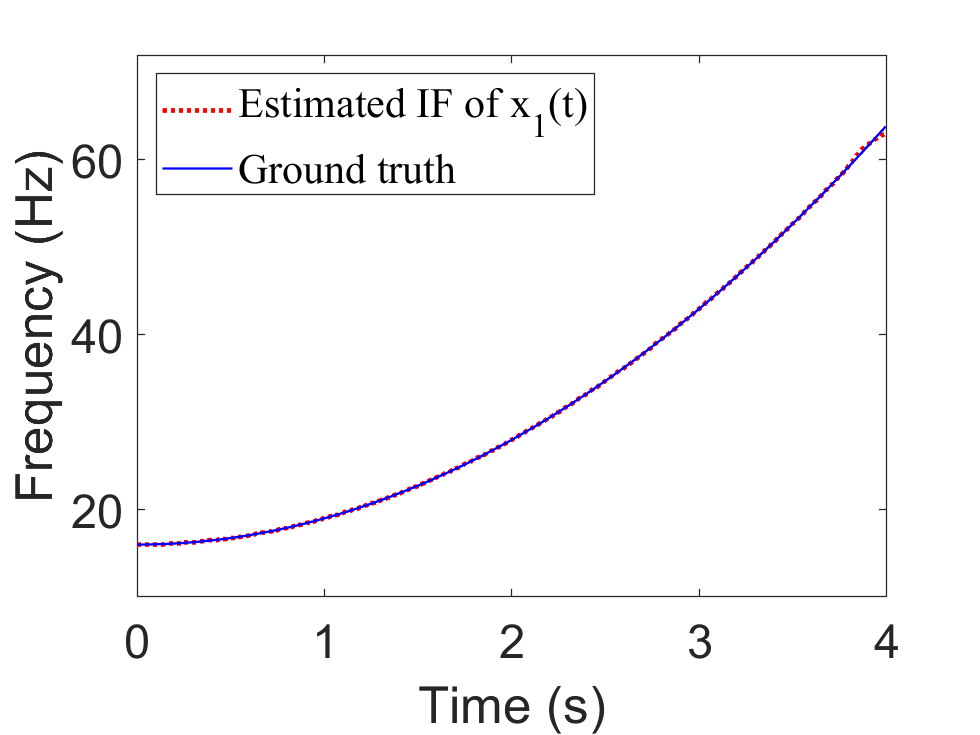}
            \caption{}
        \end{subfigure} &
        \begin{subfigure}[t]{0.22\textwidth}
            \centering
            \includegraphics[width=\linewidth]{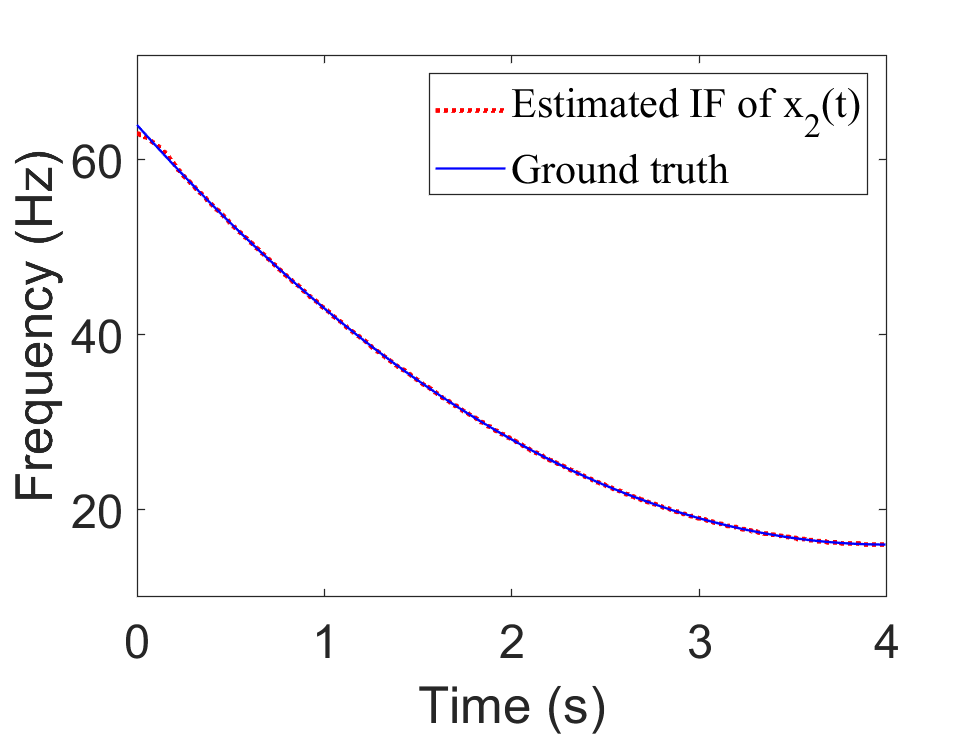}
            \caption{}
        \end{subfigure} &
        \begin{subfigure}[t]{0.22\textwidth}
            \centering
            \includegraphics[width=\linewidth]{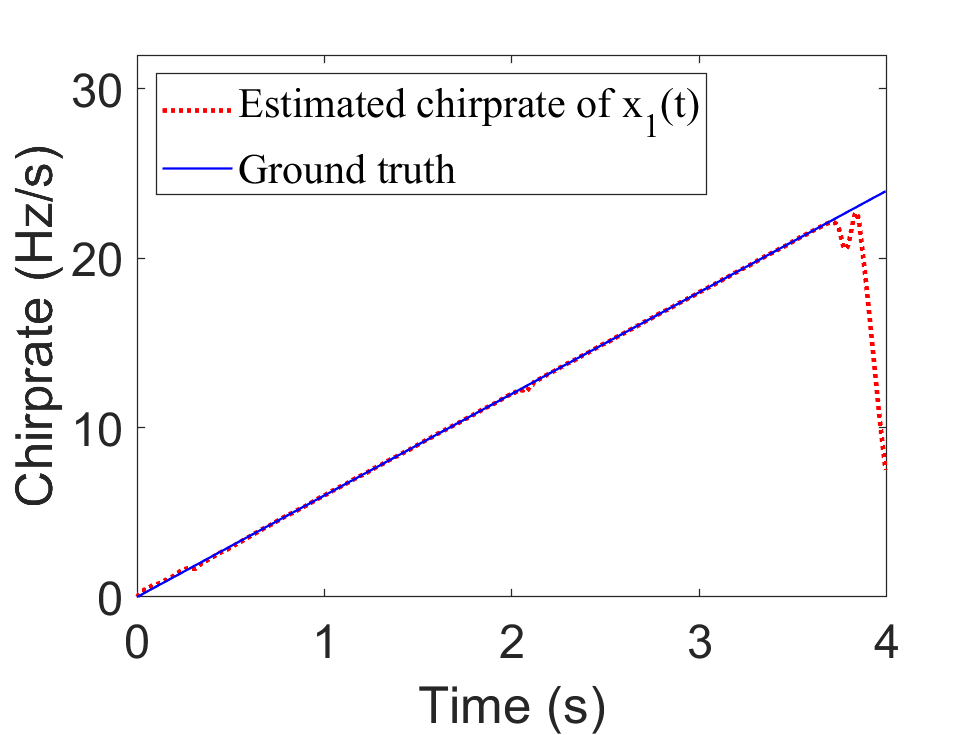}
            \caption{}
        \end{subfigure} &
        \begin{subfigure}[t]{0.22\textwidth}
            \centering
            \includegraphics[width=\linewidth]{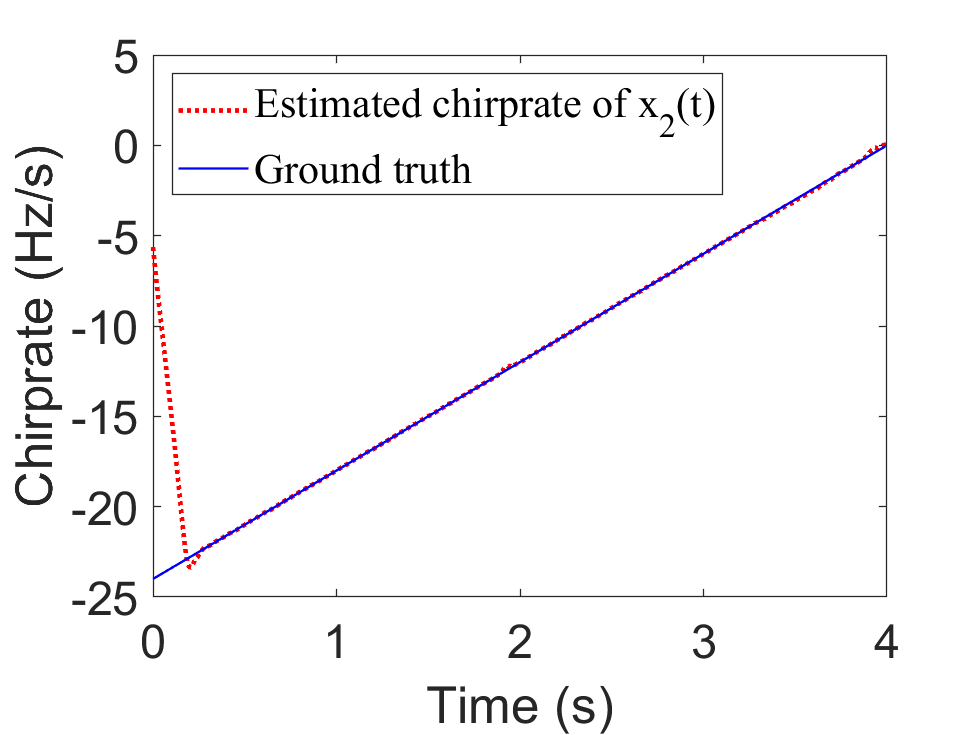}
            \caption{}
        \end{subfigure}
    \end{tabular}
    \caption{Extracted IF and chirprate ridges from: (a)-(d) second-order HSWCT; (e)-(h) third-order HSWCT.}
    \label{figure:IFs_and_CRs_estimation_of_cubic_signal}
\end{figure}

\begin{figure}[H]
    \centering
    \setlength{\tabcolsep}{5pt}
    \begin{tabular}{ccc}
        \begin{subfigure}[t]{0.28\textwidth}  
            \centering
            \includegraphics[width=\linewidth]{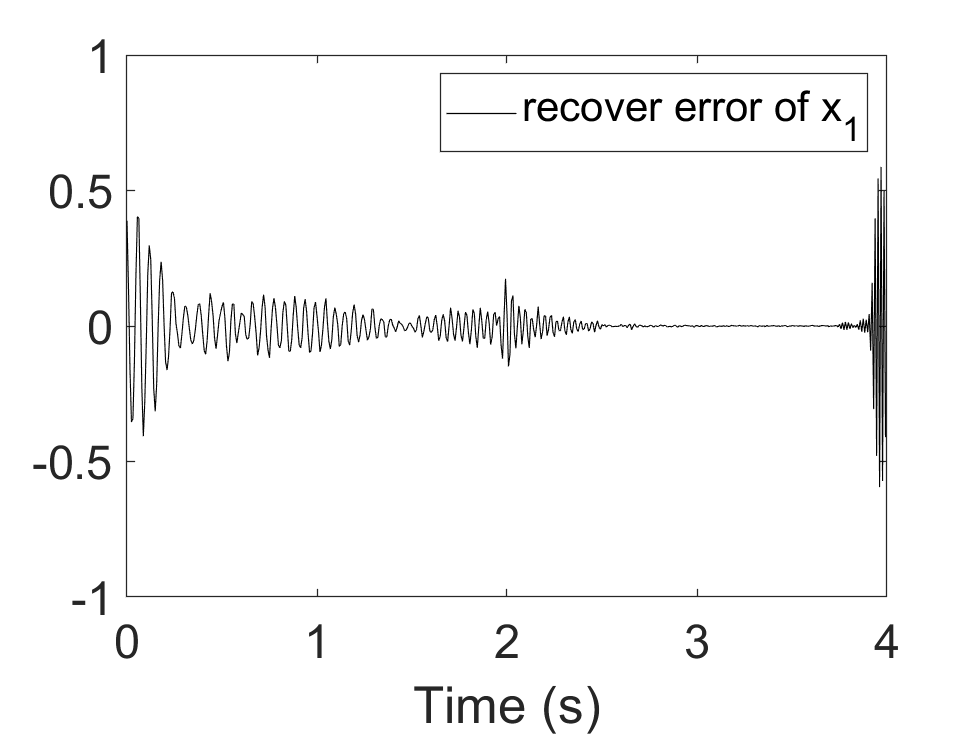}
            \caption{}
        \end{subfigure} &
        \begin{subfigure}[t]{0.28\textwidth}
            \centering
            \includegraphics[width=\linewidth]{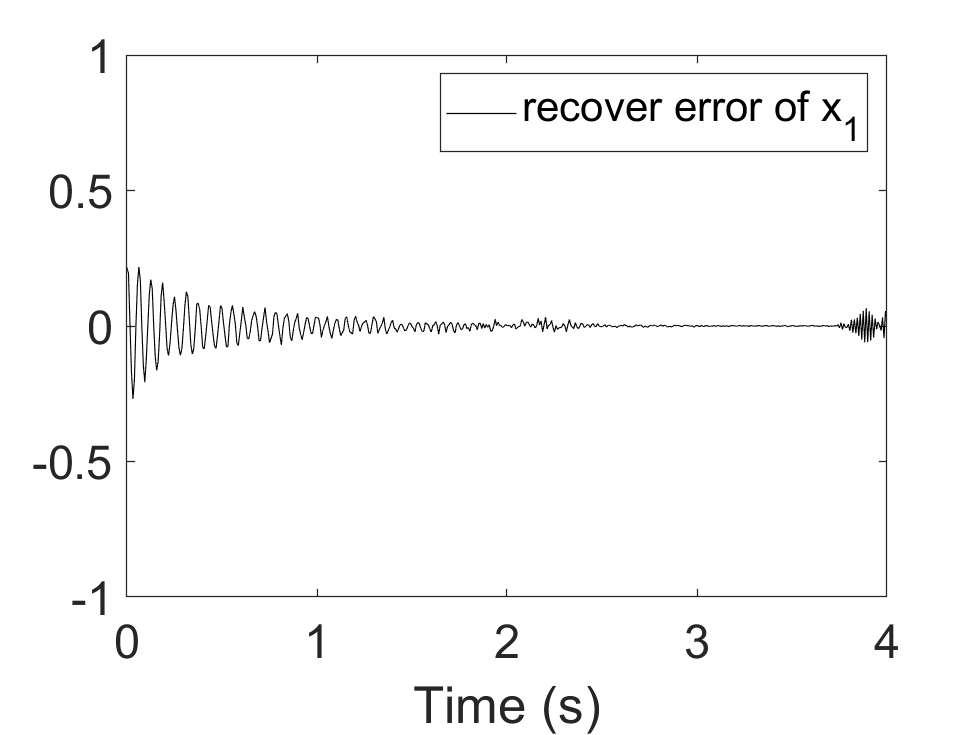}
            \caption{}
        \end{subfigure} &
        \begin{subfigure}[t]{0.28\textwidth}
            \centering
            \includegraphics[width=\linewidth]{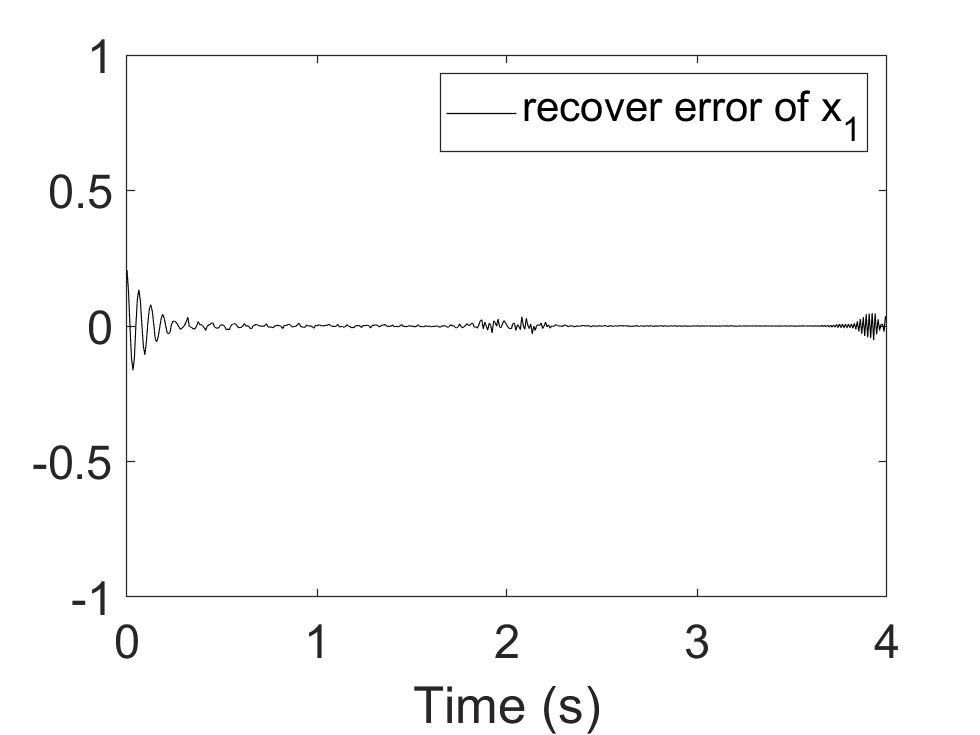}
            \caption{}
        \end{subfigure} \\
        \begin{subfigure}[t]{0.28\textwidth}
            \centering
            \includegraphics[width=\linewidth]{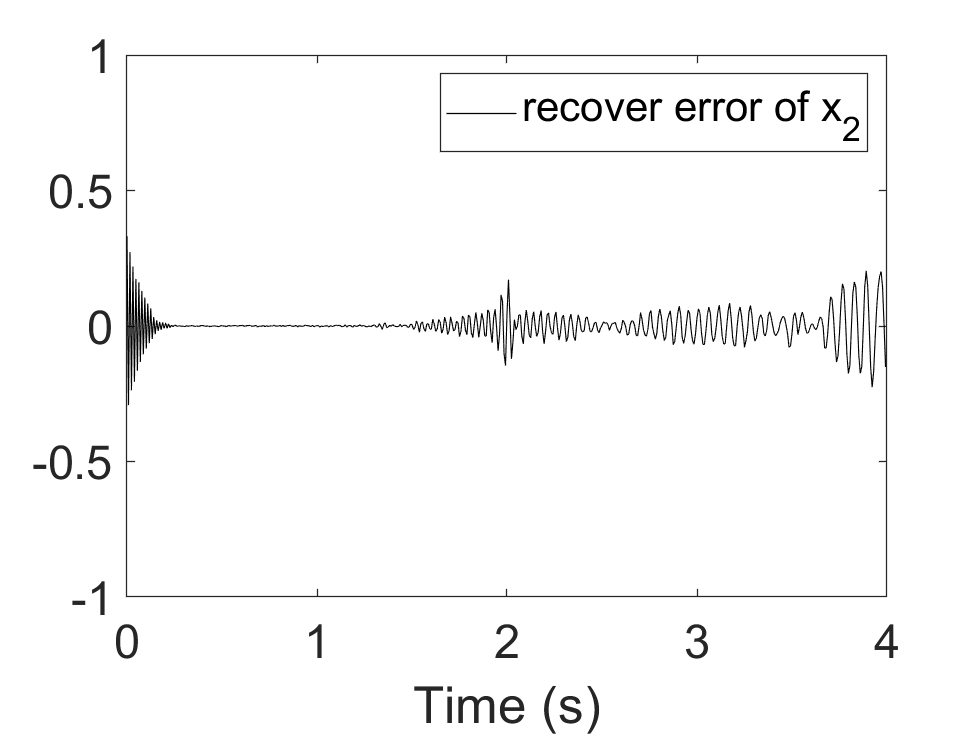}
            \caption{}
        \end{subfigure} &
        \begin{subfigure}[t]{0.28\textwidth}
            \centering
            \includegraphics[width=\linewidth]{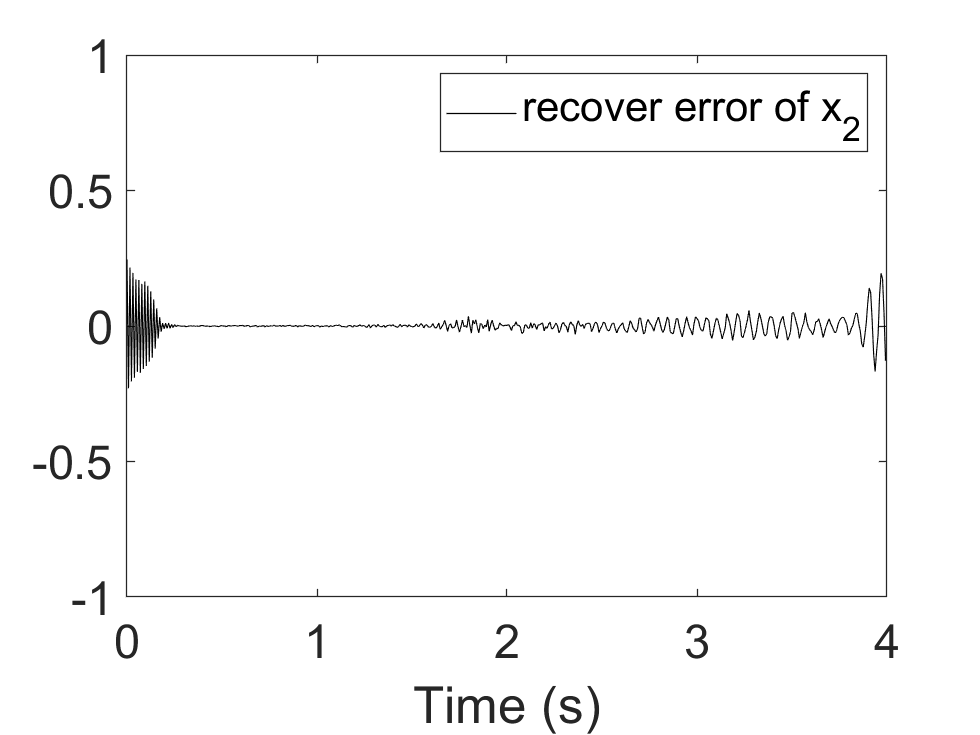}
            \caption{}
        \end{subfigure} &
        \begin{subfigure}[t]{0.28\textwidth}
            \centering
            \includegraphics[width=\linewidth]{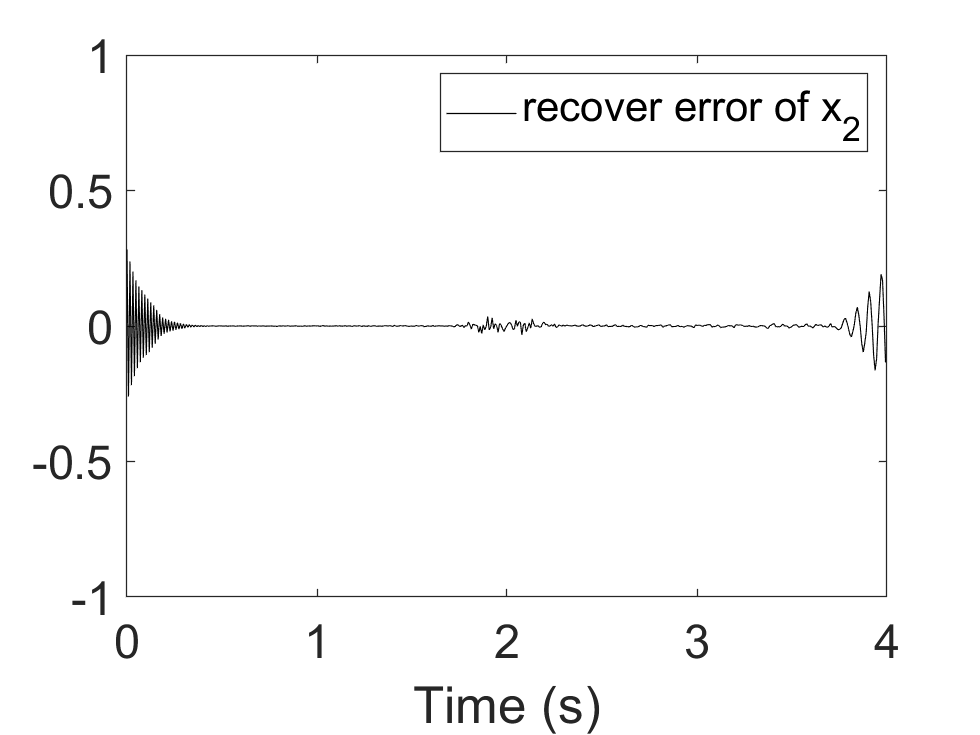}
            \caption{}
        \end{subfigure}
    \end{tabular}
    \caption{Errors between recovered modes and ground truth modes (real part) using extracted IF and chirprate ridges. Top row: error 1; bottom row: error 2. From left to right: second-order HSWCT with \(\sigma_1\), third-order HSWCT with \(\sigma_1\), third-order HSWCT with \(\sigma_2\).}
    \label{figure:Recover errors of cubic signal}
\end{figure}

Let \(\tilde{x}_k(t)\) (\(k=1,2\)) denote the signals \(x_k(t)\) reconstructed by Eq.~\eqref{solution_SSO} using \(\psi_{\sigma_1}(t)\) and \(\psi_{\sigma_2}(t)\), respectively (recall \(\sigma_2 = \frac{1}{3}\sigma_1\)).
Fig.~\ref{figure:Recover errors of cubic signal} displays the reconstruction errors (real part), \(\tilde{x}_k(t) - x_k(t)\), obtained from the estimated IF and chirprate curves. 
Panels (a) and (d) correspond to the second-order HSWCT \(\mathcal{U}_x^{2,\psi_{\sigma_1}}(\xi,b,\gamma)\), while panels (b) and (e) correspond to the third-order HSWCT \(\mathcal{U}_x^{3,\psi_{\sigma_1}}(\xi,b,\gamma)\).
The results show that accurate IF and chirprate estimation leads to more precise mode retrieval, whereas estimation errors compromise reconstruction quality.
Furthermore, panels (c) and (f) reveal that employing \(\psi_{\sigma_2}(t)\) for mode retrieval substantially reduces these errors, validating the use of \(\sigma_2\) for mode retrieval as a practical and effective strategy.

Next, we examine the impact of different estimator orders on a sinusoidal frequency modulation signal defined as

\begin{equation}
\label{def_yt}
y(t) = y_1(t) + y_2(t), \quad y_1(t) = A_1(t) e^{2\pi i \phi_1(t)}, \quad y_2(t) = A_2(t) e^{2\pi i \phi_2(t)}, \quad t \in [0, 4],
\end{equation}
where
\[
A_1(t) = \exp\left(-0.01 t^4 + 0.08 t^3 - 0.26 t^2 + 0.3 t - 0.16\right), \quad \phi_1(t) = 41t - \frac{32}{\pi} \sin\left(\frac{\pi}{2} t\right),
\]
\[
A_2(t) = \exp\left(-0.02 t^4 + 0.15 t^3 - 0.48 t^2 + 0.63 t - 0.32\right), \quad \phi_2(t) = 41t + \frac{32}{\pi} \sin\left(\frac{\pi}{2} t\right).
\]
The corresponding IFs and chirprates are given by
\begin{align*}
\phi_1'(t) &= 41 - 16 \cos\left(\frac{\pi}{2} t\right), \quad \phi_1''(t) = 8\pi \sin\left(\frac{\pi}{2} t\right), \\
\phi_2'(t) &= 41 + 16 \cos\left(\frac{\pi}{2} t\right), \quad \phi_2''(t) = -8\pi \sin\left(\frac{\pi}{2} t\right).
\end{align*}
The signal \(y(t)\) is sampled at 128 Hz, and the optimal parameter \(\sigma_1 = 4.9\) is obtained from Eq.~\eqref{find_sigma}. The IFs of the two components intersect at \(t = 1\) s and \(t = 3\) s, both at frequency \(\omega = 41\) Hz.
Fig.~\ref{fig:IFs_CRs_Doppler} shows the IFs and chirprates of \(y(t)\). Although the IFs and chirprates occasionally overlap, their respective overlapping instants do not coincide, thus always satisfying the separation condition Eq.~\eqref{separation_condition}.
\begin{figure}[H]
    \centering
    \begin{tabular}{cc}
        \subfloat[IFs of signal $y(t)$]{
            \includegraphics[width=0.32\textwidth]{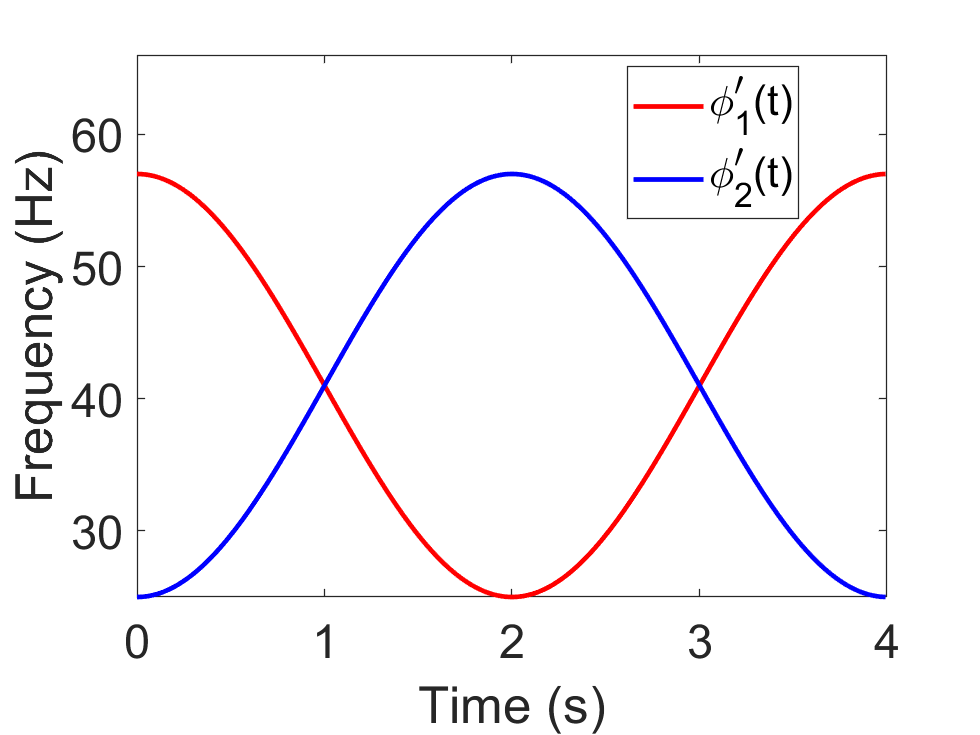}
        } &
        \subfloat[Chirprates of signal $y(t)$]{
\includegraphics[width=0.32\textwidth]{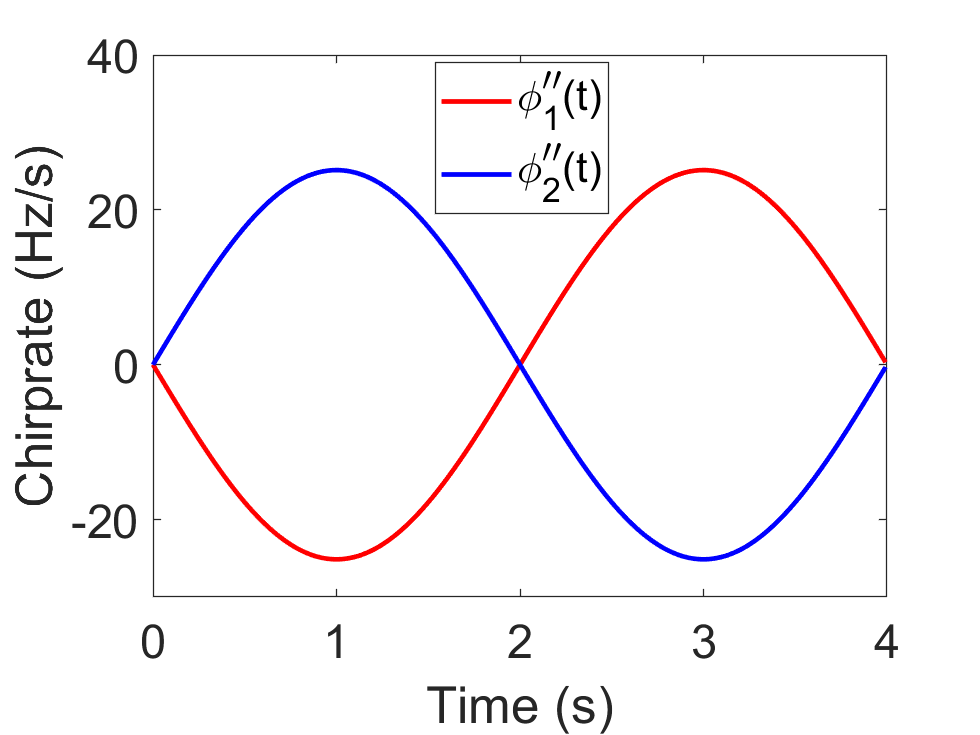}
        } 
    \end{tabular}
    \caption{\small IFs and chirprates of signal $y(t)$}
    \label{fig:IFs_CRs_Doppler}
\end{figure}

To measure the accuracy of the extracted IFs and chirprates, we compute their root mean square error (RMSE). Let \(\mathbf{f}\) denote the discrete ground-truth values of \(\phi'_k(t)\) (for IF) or \(\phi''_k(t)\) (for chirprate), and \(\widetilde{\mathbf{f}}\) the corresponding estimates. The RMSE is defined as
\[
E_f = \frac{1}{\sqrt{n}} \left\lVert \mathbf{f} - \widetilde{\mathbf{f}} \right\rVert_2.
\]
To avoid boundary effects, we compute the error over the central 75\% of the signal, i.e., 
\(
f_{\left\lfloor n/8 \right\rfloor}, \dots, f_{\left\lfloor 7n/8 \right\rfloor}.
\)
Near the signal edges, the time window lacks sufficient data on one side, leading to 
incomplete frequency localization and increased estimation bias. Including 
these boundary points in the RMSE would therefore introduce errors that do 
not accurately reflect the method's intrinsic performance. 
By focusing on the central portion, the RMSE reflects the estimator's performance under reliable data conditions.

\begin{figure}[H]
    \centering
    \begin{tabular}{cccc}
        \begin{subfigure}[t]{0.22\textwidth}
            \centering
            \resizebox{\linewidth}{!}{\includegraphics{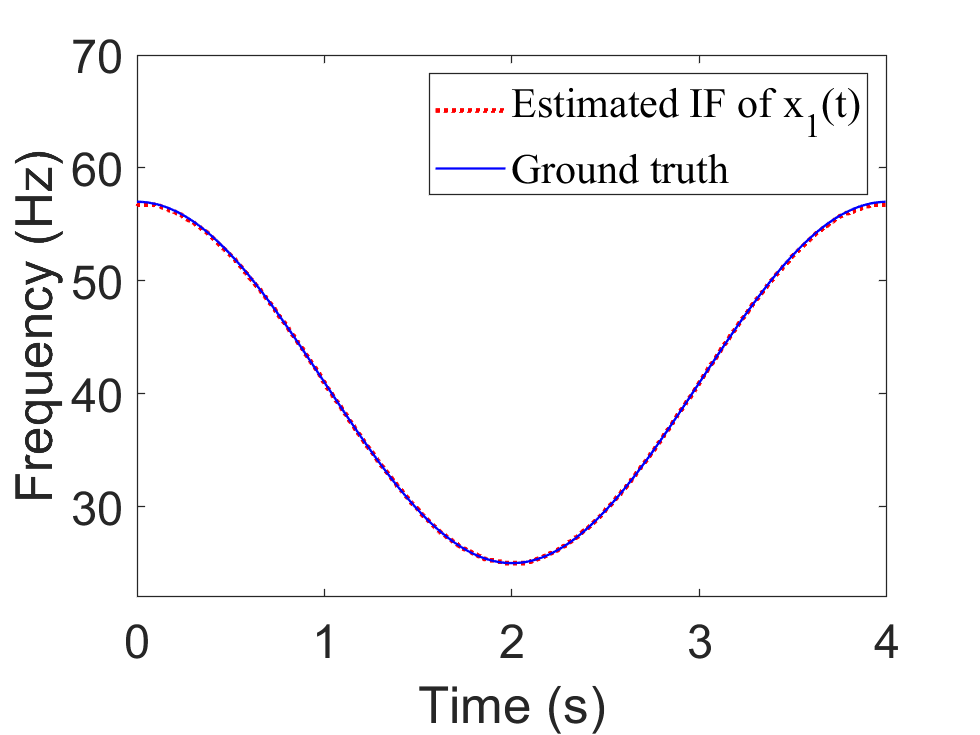}}
        \end{subfigure} &
        \begin{subfigure}[t]{0.22\textwidth}
            \centering
            \resizebox{\linewidth}{!}{\includegraphics{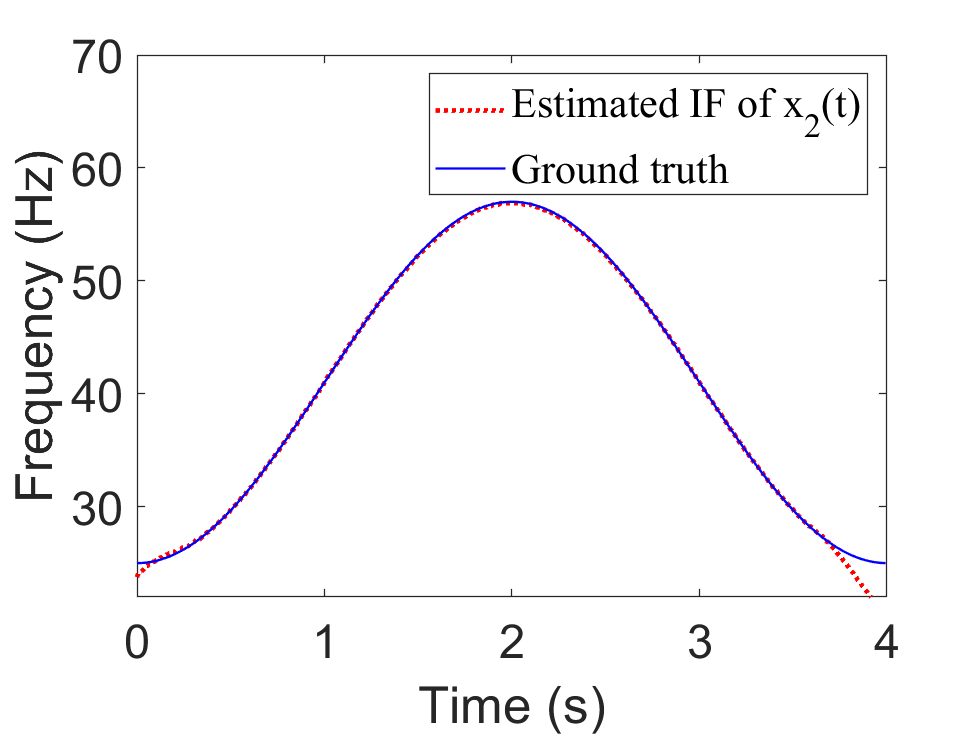}}
        \end{subfigure} &
        \begin{subfigure}[t]{0.22\textwidth}
            \centering
            \resizebox{\linewidth}{!}{\includegraphics{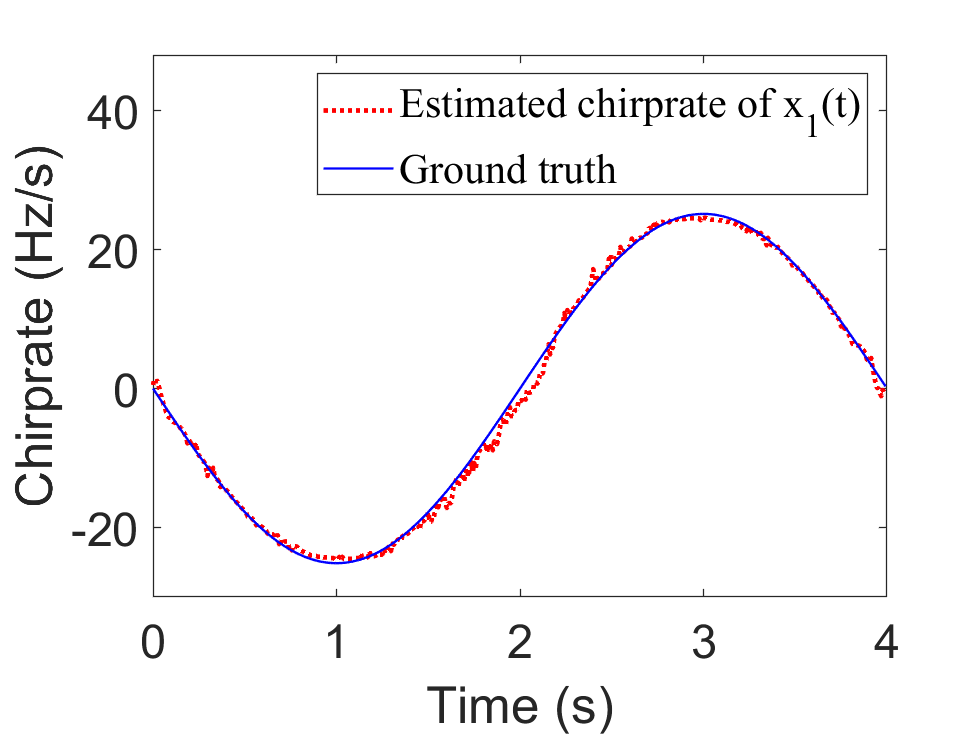}}
        \end{subfigure} &
        \begin{subfigure}[t]{0.22\textwidth}
            \centering
            \resizebox{\linewidth}{!}{\includegraphics{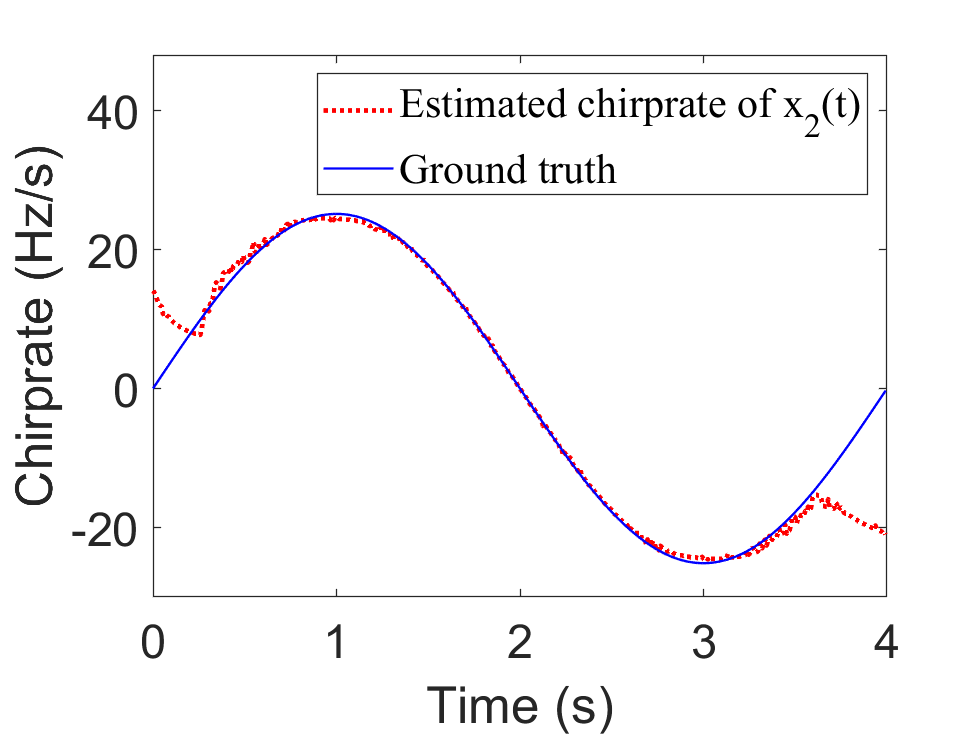}} 
        \end{subfigure} \\
        \begin{subfigure}[t]{0.22\textwidth}
            \centering
            \resizebox{\linewidth}{!}{\includegraphics{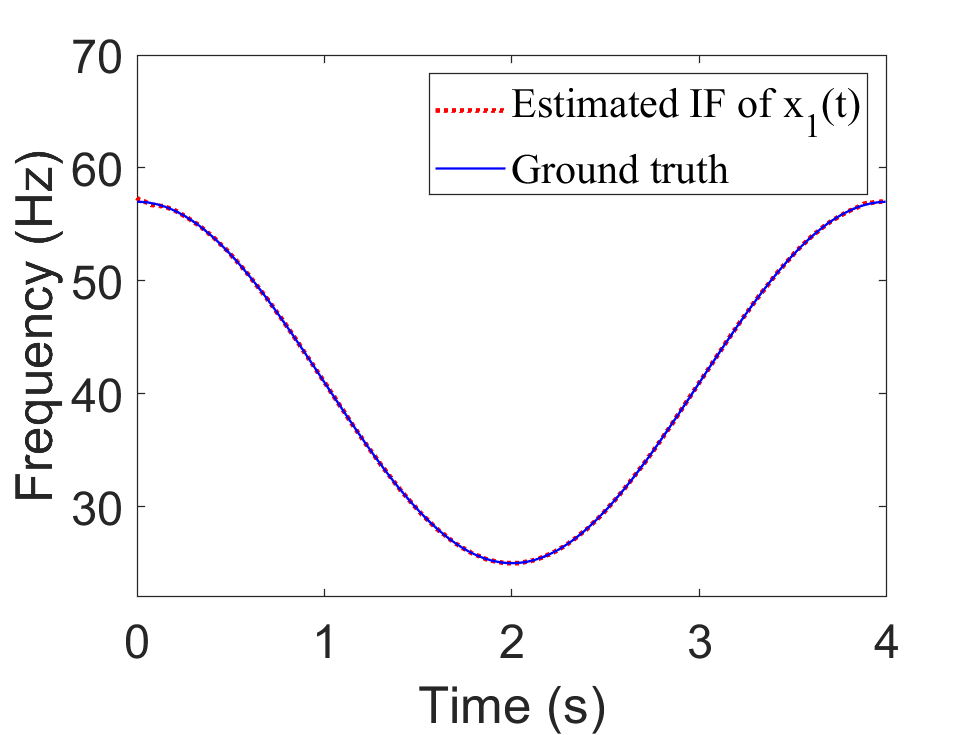}} 
        \end{subfigure} &
        \begin{subfigure}[t]{0.22\textwidth}
            \centering
            \resizebox{\linewidth}{!}{\includegraphics{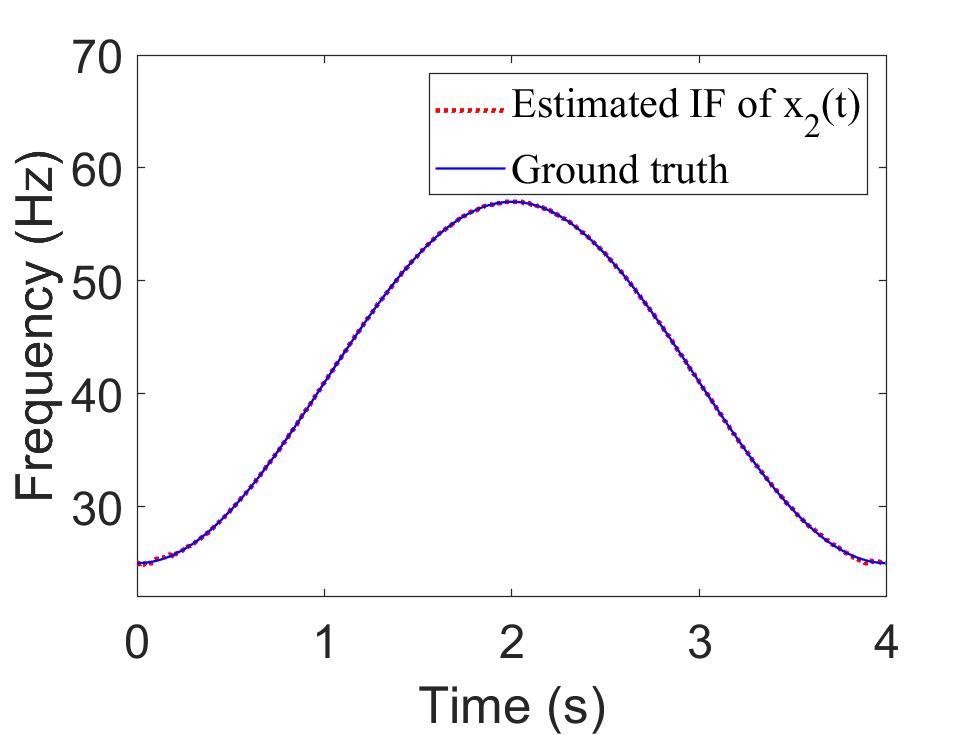}}    
        \end{subfigure} &
        \begin{subfigure}[t]{0.22\textwidth}
            \centering
            \resizebox{\linewidth}{!}{\includegraphics{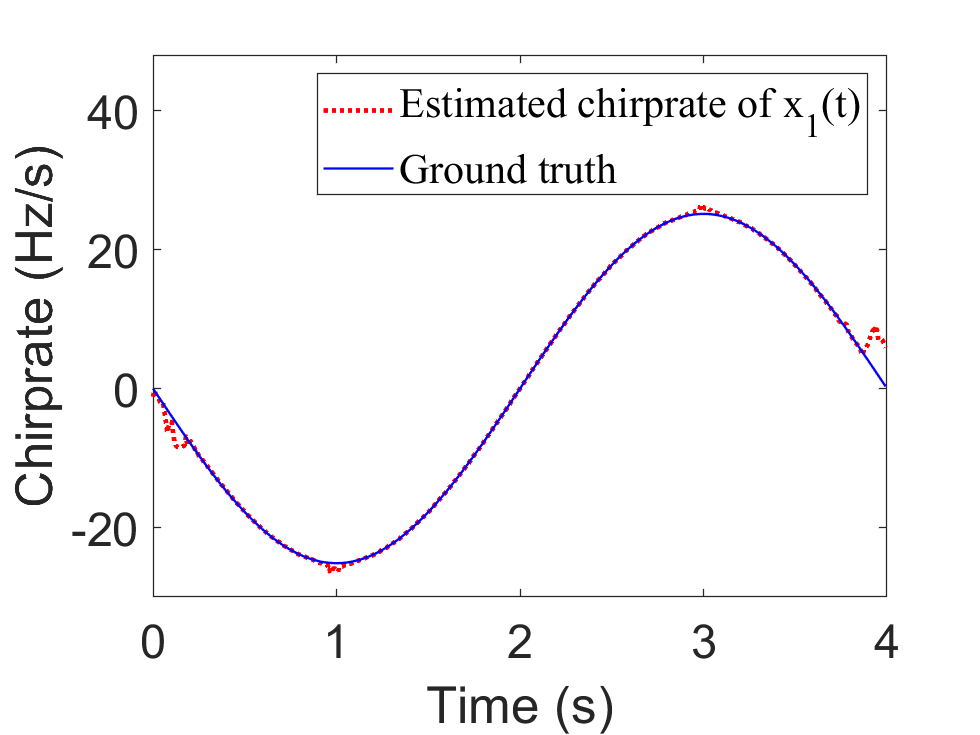}}    
        \end{subfigure} &
        \begin{subfigure}[t]{0.22\textwidth}
            \centering
            \resizebox{\linewidth}{!}{\includegraphics{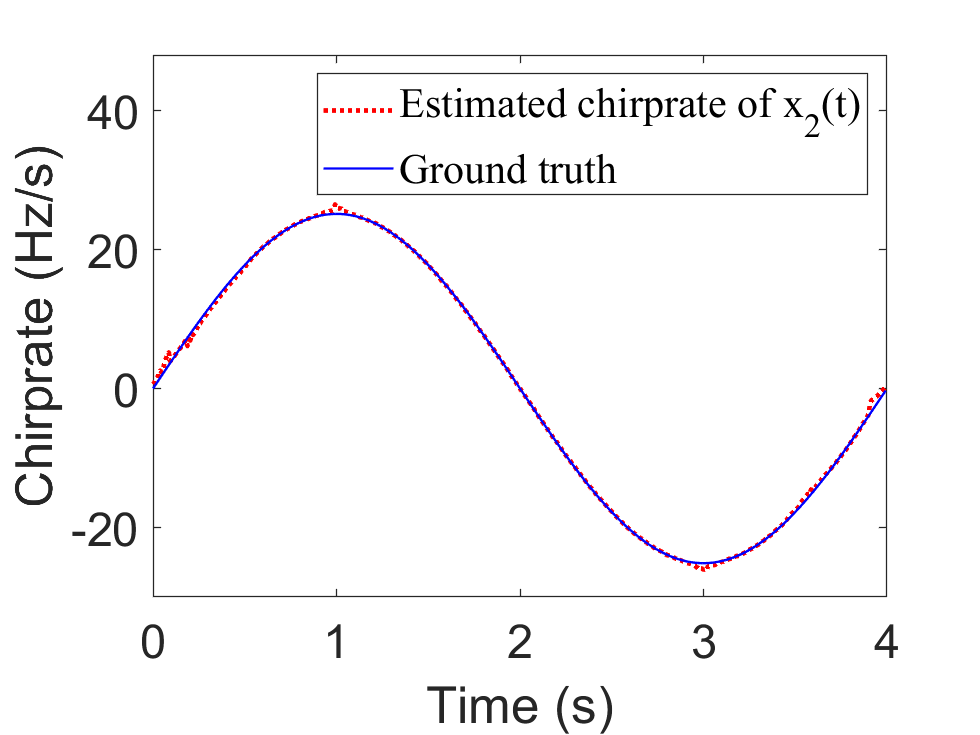}}    
        \end{subfigure} \\
        \begin{subfigure}[t]{0.22\textwidth}
            \centering
            \resizebox{\linewidth}{!}{\includegraphics{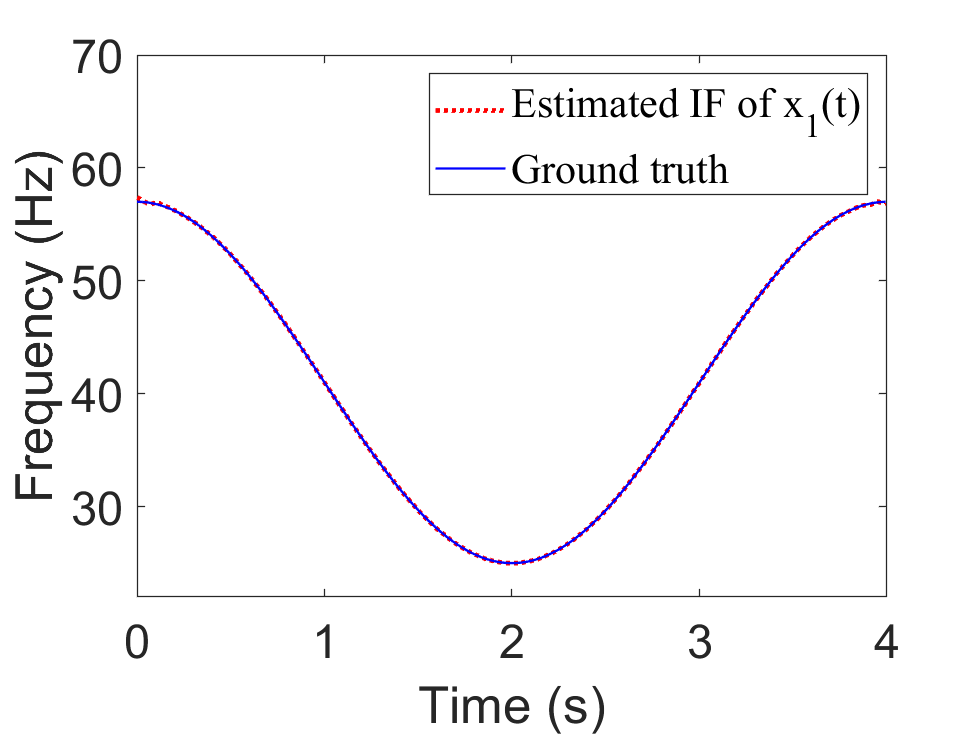}} 
        \end{subfigure} &
        \begin{subfigure}[t]{0.22\textwidth}
            \centering
            \resizebox{\linewidth}{!}{\includegraphics{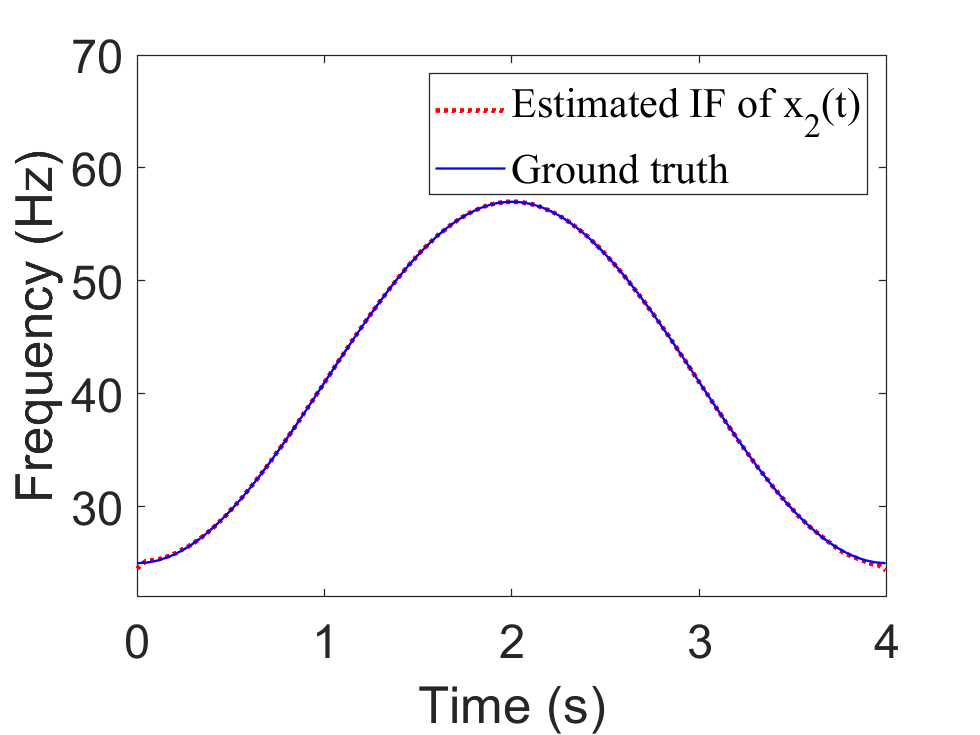}}      
        \end{subfigure} &
        \begin{subfigure}[t]{0.22\textwidth}
            \centering
            \resizebox{\linewidth}{!}{\includegraphics{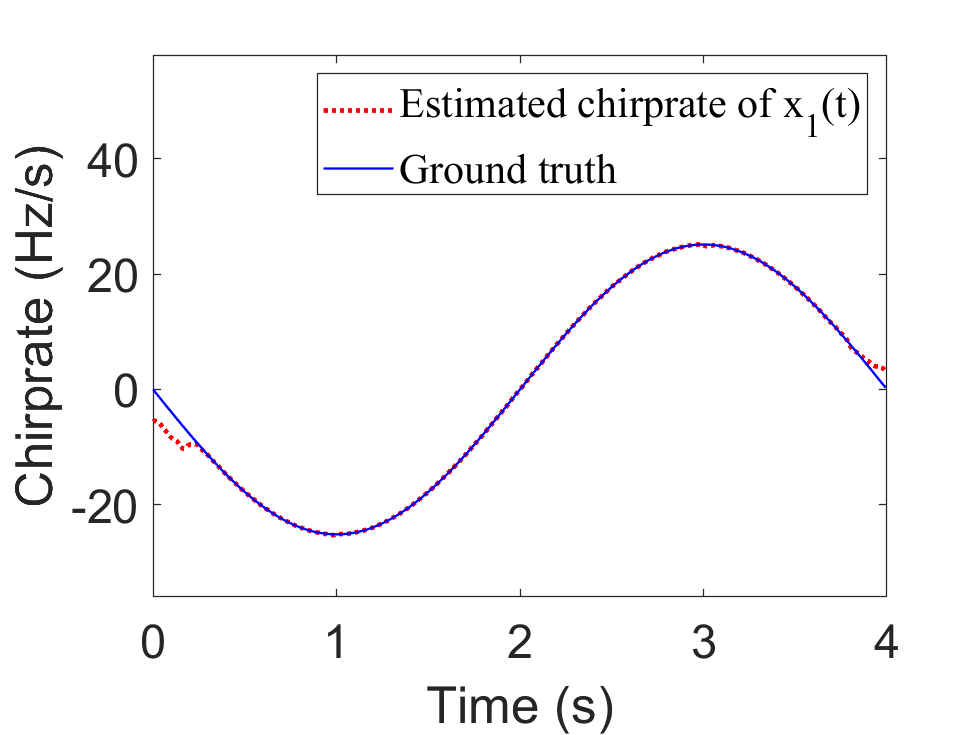}}  
        \end{subfigure} &
        \begin{subfigure}[t]{0.22\textwidth}
            \centering
            \resizebox{\linewidth}{!}{\includegraphics{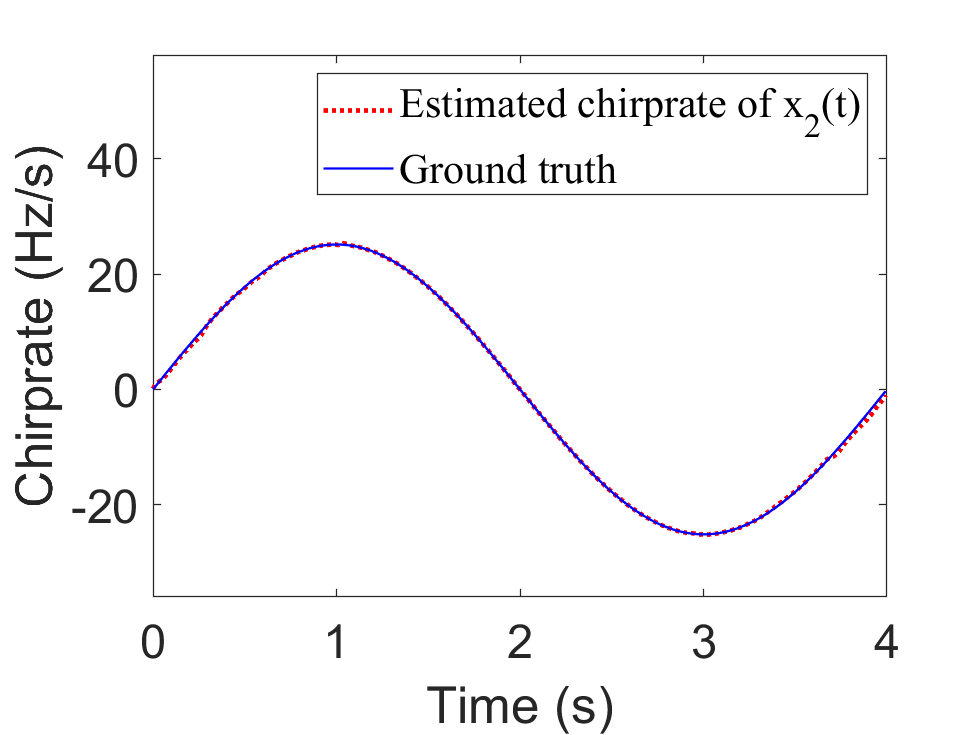}}
        \end{subfigure} \\
         \begin{subfigure}[t]{0.22\textwidth}
            \centering
            \resizebox{\linewidth}{!}{\includegraphics{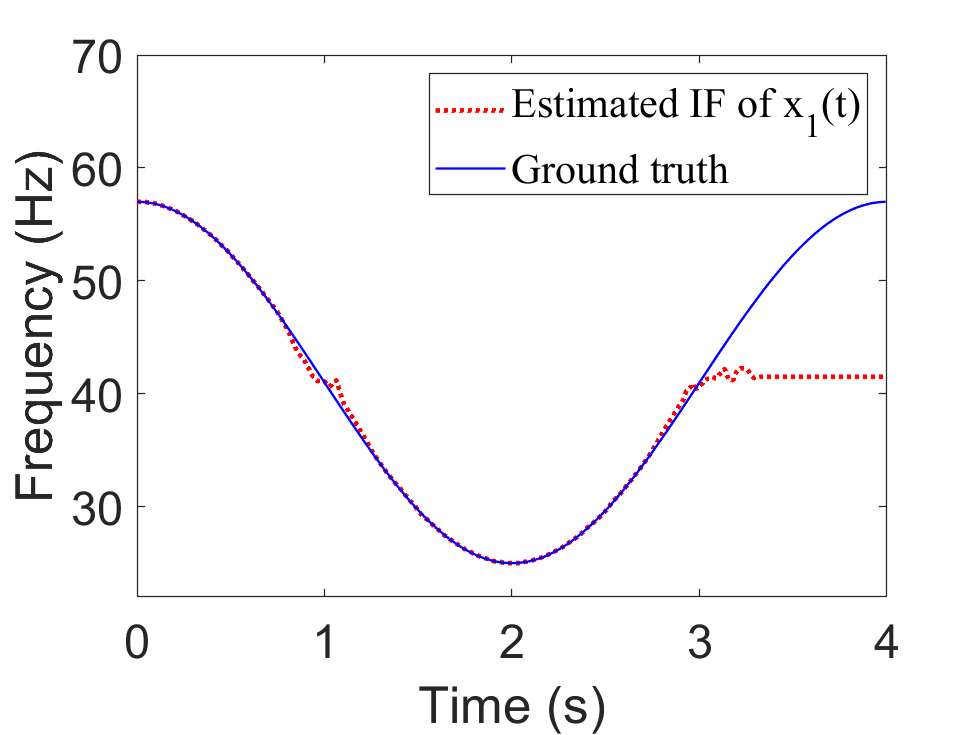}}      
        \end{subfigure} &
         \begin{subfigure}[t]{0.22\textwidth}
            \centering
            \resizebox{\linewidth}{!}{\includegraphics{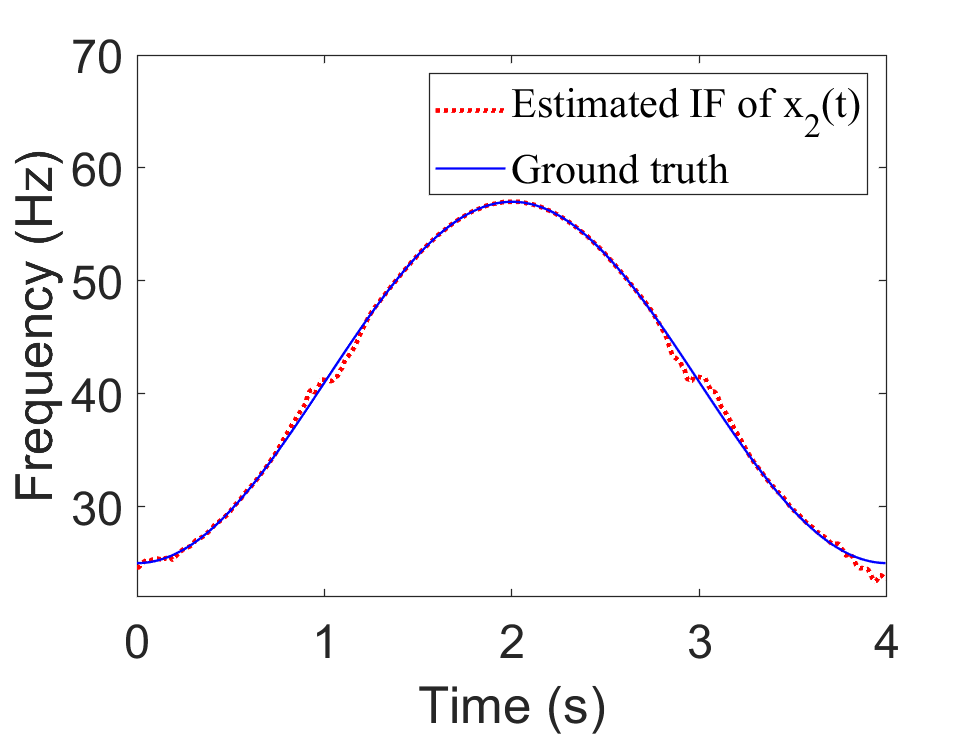}}      
        \end{subfigure} &
        \begin{subfigure}[t]{0.22\textwidth}
            \centering
            \resizebox{\linewidth}{!}{\includegraphics{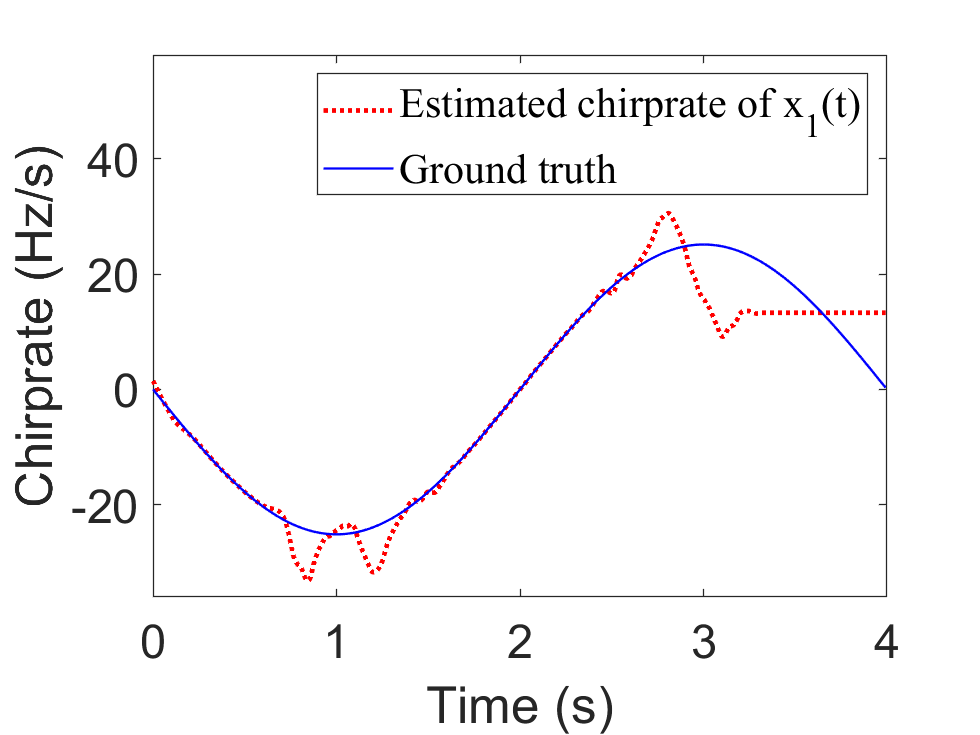}}  
        \end{subfigure} &
        \begin{subfigure}[t]{0.22\textwidth}
            \centering
            \resizebox{\linewidth}{!}{\includegraphics{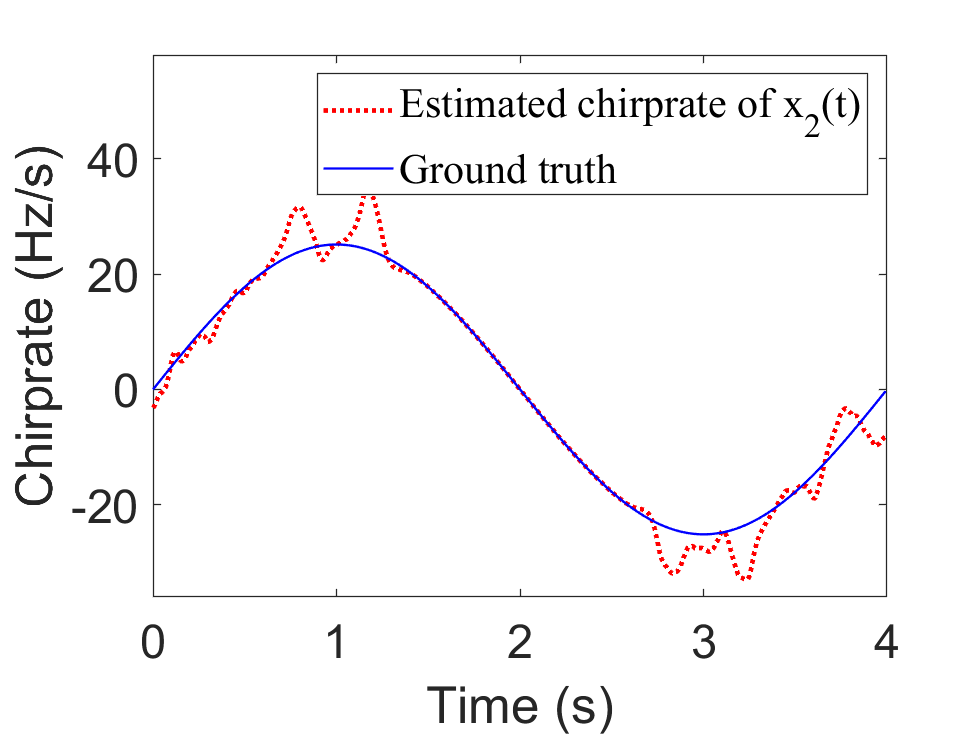}}     
        \end{subfigure} 
    \end{tabular}
\caption{\small IF and chirprate estimation of \( y(t) \) with different orders of HSWCT. 
Rows from top to bottom: second-order \( \mathcal{U}_y^{2,\psi_{\sigma_1}}(\xi,b,\gamma)\); 
third-order \( \mathcal{U}_y^{3,\psi_{\sigma_1}}(\xi,b,\gamma)\); 
fourth-order \( \mathcal{U}_y^{4,\psi_{\sigma_1}}(\xi,b,\gamma)\); 
and multiple synchrosqueezed fourth-order \( \mathbf{\mathcal{U}}_{y,5}^{4,\psi_{\sigma_1}}(\xi,b,\gamma) \) with five iterations. 
In each row, the left two panels show IF estimates and the right two panels show chirprate estimates.}
    \label{fig:IFs_and_CRs_estimation_of_Doppler_signal}
\end{figure}

We employ the second-, third-, and fourth-order HSWCT \(\mathcal{U}_{y}^{n,\psi_{\sigma_1}}(\xi,b,\gamma)\), \(n=2, 3, 4\), to extract the IFs and chirprates of \(y(t)\). For comparison, we also present the results obtained with \(\mathcal{U}_{y,5}^{4,\psi_{\sigma_1}}(\xi,b,\gamma)\), a fourth-order MHSWCT with five iterations of synchrosqueezing.
The extracted IFs and chirprates are shown in Fig.~\ref{fig:IFs_and_CRs_estimation_of_Doppler_signal}. 
The corresponding RMSEs, running times, and post-squeezing R\'enyi entropies are summarized in Table~\ref{table:RMSE_results}.

\begin{table}[H]
    \centering
   \caption{\small  RMSE of IF and chirprate estimation, running time, and R\'enyi entropy for \( y(t) \), obtained using synchrosqueezed transforms of different orders.}
    \label{table:RMSE_results}
    \begin{tabular}{lcccc}
        \toprule
        \textbf{Metric} & \textbf{2nd-order} & \textbf{3rd-order} & \textbf{4th-order} & \textbf{Multiple 4th-order} \\
        \midrule
        IF1   & 0.0570 & 0.0443 & 0.0359 & 2.5682 \\
        CR1   & 1.4048 & 0.2177 & 0.0326 & 5.0317 \\
        IF2   & 0.0736 & 0.0362 & 0.0359 & 0.4174 \\
        CR2   & 0.4996 & 0.2077 & 0.0718 & 3.3107 \\
        \midrule
        Running time (s) & 8.41 & 14.29 & 22.57 & 29.96 \\
        R\'enyi entropy & 10.03 & 6.27 & 6.14 & 6.39 \\
        \bottomrule 
    \end{tabular}
\end{table}
All three transforms achieve effective IF extraction. However, their chirprate estimation performance differs markedly: the second-order HSWCT introduces significant bias; the third-order HSWCT exhibits only slight deviations near the crossing points (\(t = 1\,\text{s}\) and \(t = 3\,\text{s}\)); and the fourth-order HSWCT achieves nearly error-free chirprate extraction. 
For the fourth-order MHSWCT, IF estimation remains relatively accurate across most temporal regions, yet measurable deviations appear at the same instants (\(t = 1\,\text{s}\) and \(t = 3\,\text{s}\)), while chirprate estimation exhibits substantially larger biases at these points.
As expected, higher-order transforms require longer computation times due to their increased complexity. 
The R\'enyi entropy decreases substantially from second- to third-order, indicating improved concentration, but only marginally from third- to fourth-order. Hence, for this example, we deem the third-order HSWCT adequate for the time-frequency analysis of the signal, as the fourth-order counterpart provides only limited gains in concentration at a substantially increased computational cost. 
Interestingly, applying multiple synchrosqueezing iterations leads to an increase in R\'enyi entropy, suggesting a degradation in concentration.

Next, we present the mode retrieval results. Let \(\tilde{y}_k(t)\) (\(k=1,2\)) denote the reconstructed versions of \(y_k(t)\) obtained via Eq.~\eqref{solution_SSO} using the window parameters \(\sigma_1\) and \(\sigma_2\), respectively.  
Fig.~\ref{fig:recovered_errors_sigma12} shows the reconstruction errors \(\text{real}(\tilde{y}_k(t) - y_k(t))\) based on the IFs and chirprates estimated from the fourth-order HSWCT \(\mathcal{U}_y^{4,\psi_{\sigma_1}}(\xi,b,\gamma)\).
Panels (a) and (b) show the reconstruction errors for \(y_1(t)\) and \(y_2(t)\), respectively, obtained by applying the  wavelet \(\psi_{\sigma_1}(t)\) in the reconstruction formula Eq.~\eqref{solution_SSO}. Panels (c) and (d) display the corresponding errors when the narrower wavelet \(\psi_{\sigma_2}(t)\) (with \(\sigma_2 = \sigma_1/3 = 1.63\)) is used instead.
With the wider wavelet \(\psi_{\sigma_1}(t)\), the reconstructed modes exhibit distinct error patterns: for \(y_1(t)\), the error peaks at \(t = 2\) s (panel a), while for \(y_2(t)\), significant errors are mainly concentrated near the boundaries of the time domain (panel b).
In contrast, employing the narrower wavelet \(\psi_{\sigma_2}(t)\) substantially reduces the reconstruction errors in panels (c) and (d), leading to significantly enhanced accuracy. This confirms that using a smaller \(\sigma\) is a viable strategy for most complex signals.

\begin{figure}[H]
    \centering
    \setlength{\tabcolsep}{5pt} 
    \begin{tabular}{cccc}
        \begin{subfigure}[t]{0.22\textwidth}
            \centering
            \includegraphics[width=\linewidth]{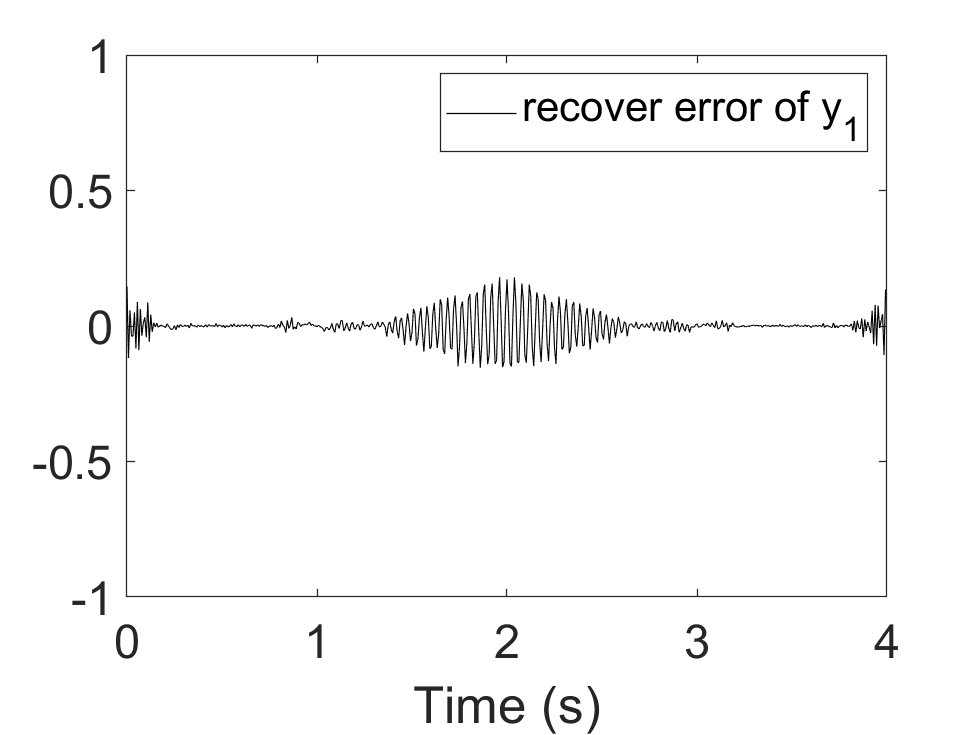}
            \caption{}
        \end{subfigure} &
        \begin{subfigure}[t]{0.22\textwidth}
            \centering
            \includegraphics[width=\linewidth]{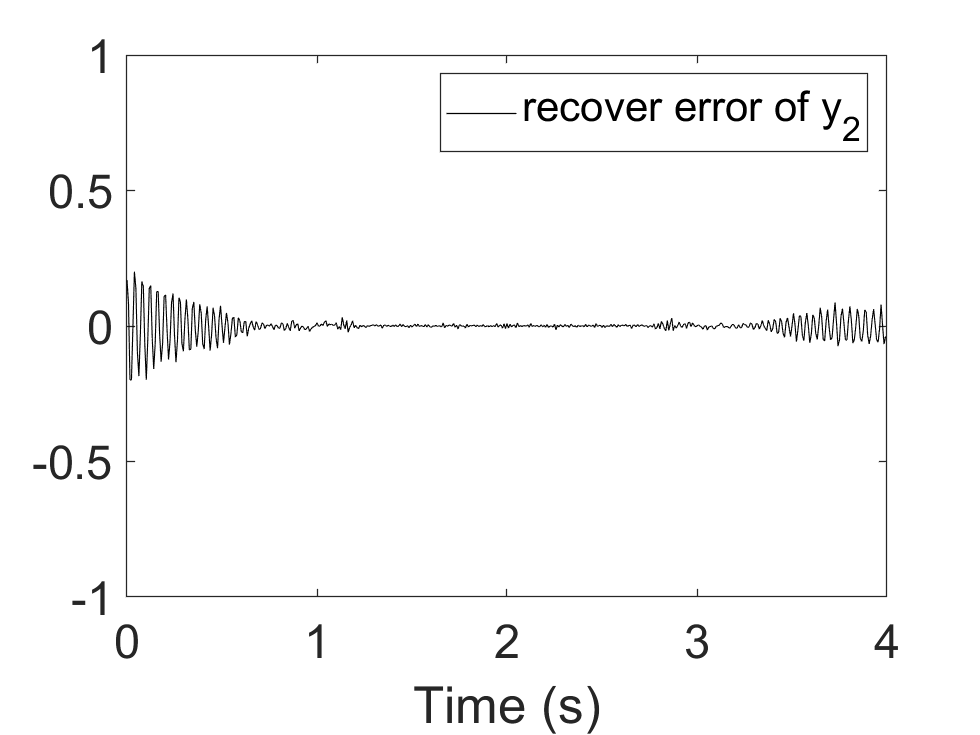}
            \caption{}
        \end{subfigure} &
        \begin{subfigure}[t]{0.22\textwidth}
            \centering
            \includegraphics[width=\linewidth]{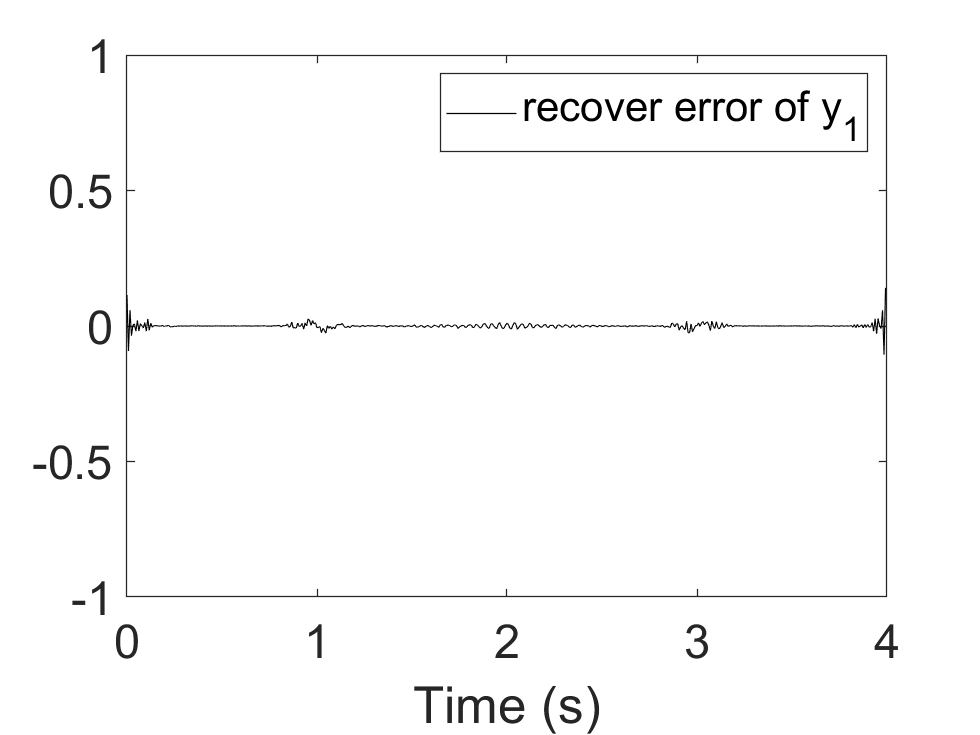}
            \caption{}
        \end{subfigure} &
        \begin{subfigure}[t]{0.22\textwidth}
            \centering
            \includegraphics[width=\linewidth]{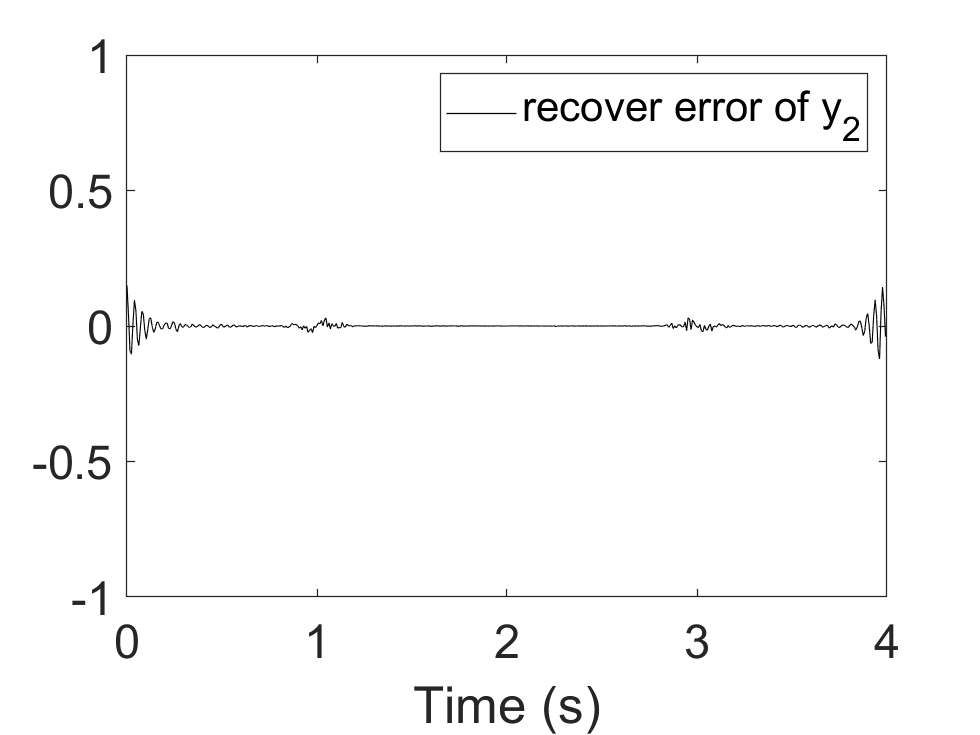}
            \caption{}
        \end{subfigure}
    \end{tabular}
    \vskip -0.3cm
    \caption{Error comparison (real part) in mode recovery using different window parameters in Eq.~\eqref{solution_SSO}: 
    (a)-(b) with \(\psi_{\sigma_1}(t)\); (c)-(d) with \(\psi_{\sigma_2}(t)\). 
    Panels (a) and (c) correspond to recovered \(y_1(t)\), while (b) and (d) correspond to recovered \(y_2(t)\).}
    \label{fig:recovered_errors_sigma12}
\end{figure}
Next, we evaluate the performance of the proposed transforms under noisy conditions by adding Gaussian white noise to the signal Eq.~\eqref{def_yt} at signal-to-noise ratios (SNRs) ranging from 0 to 60 dB. Fig.~\ref{fig:noise_test}(a) shows the behavior of the wavelet parameter $\sigma$ obtained via the R\'enyi entropy Eq.~\eqref{renyi_entropy} with respect to noise level. The optimal parameter $\sigma_1$ obtained via entropy minimization remains stable around 4.9 across all SNR values, indicating that the selection of $\sigma$ is robust to noise.
\begin{figure}[H]
    \centering
    \begin{tabular}{cc}
        \subfloat[$\sigma_1$ values under  noise levels]{
            \includegraphics[width=0.36\textwidth]{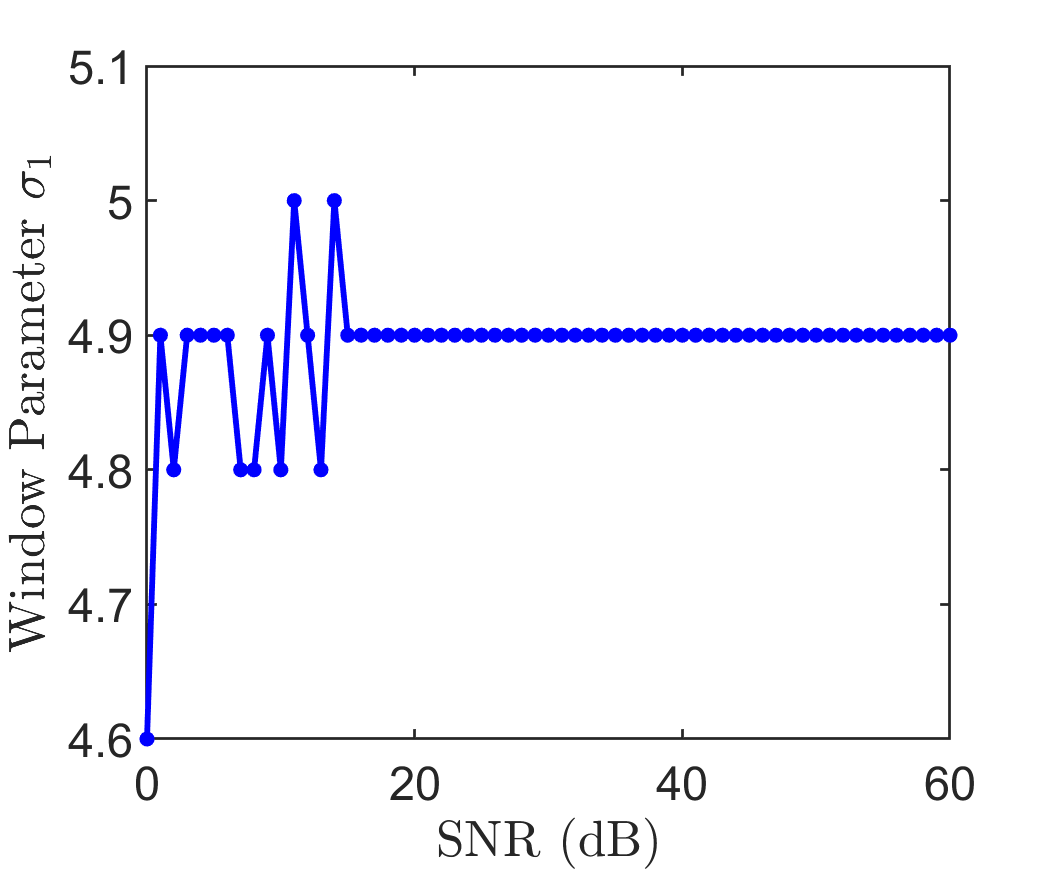}
        } &
        \subfloat[R\'enyi entropy under  noise levels]{
            \includegraphics[width=0.36\textwidth]{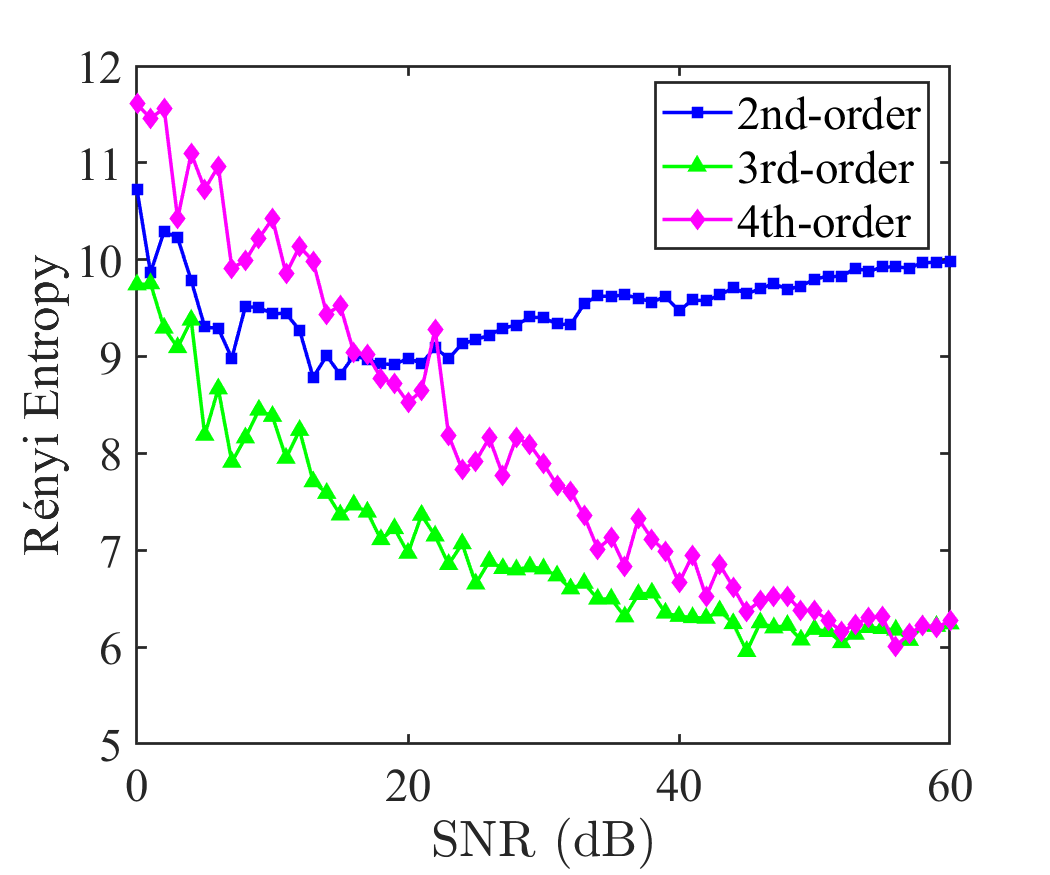}
        } 
    \end{tabular}
 \caption{ Behavior of $\sigma_1$ and post-squeezing TFC R\'enyi entropy under different SNR conditions}
    \label{fig:noise_test}
\end{figure}

We further examine the post-squeezing R\'enyi entropy for different orders of the HSWCT under varying SNR conditions, as shown in Fig.~\ref{fig:noise_test}(b). At low noise levels (SNR > 20 dB), both the third- and fourth-order HSWCT yield lower R\'enyi entropy than their second-order counterpart, indicating better TFC concentration. 
Interestingly, under high-noise conditions, the third-order HSWCT achieves better concentration than both the second- and fourth-order transforms, suggesting that the higher-order transform may be more sensitive to noise.

From the above experimental results, we find that the higher-order HSWCTs have superior performance in capturing  chirprate variations and in mode retrieval compared to their second-order counterparts.
At the end of this section, we validate the effectiveness of the proposed high-order HSWCT using a real-world wolf howling signal\footnote{\text{https://wolfpark.org/Images/Resources/Howls/.}}.
The original signal, sampled at $8$ kHz with a duration of $55$ s, was downsampled by a factor of $8$ to achieve an effective sampling rate of $1$ kHz. This preprocessing step enhances the visibility of crucial low-frequency crossover components below $500$ Hz. 
Such signals have been widely used for evaluating the performance of 3D TFC analysis methods~\cite{chen2023disentangling,chen2024multiple}.

Due to the complexity of directly comparing representations in the 3D TFC space, we present the 2D time-frequency projections of the second-order, third-order, and fourth-order HSWCTs in Fig.~\ref{fig:time-frequency_projections_wolf_howling}. The projected time-frequency representation is defined as:
\[
\mathfrak{T}^N_x(\xi,b) = \int_{\mathbb{R}} \left|\mathcal{U}^{N,\psi_{\sigma_1}}_x(\xi,b,\gamma)\right|^2  d\gamma, \quad N = 2, 3, 4,
\]
and corresponding window parameter $\sigma_1$, obtained from Eq.~\eqref{find_sigma}, is $5.4$.

\begin{figure}[H]
    \centering
    \setlength{\tabcolsep}{3pt}
    \begin{tabular}{ccc}
        \begin{subfigure}[t]{0.28\textwidth} 
            \centering
            \includegraphics[width=\linewidth]{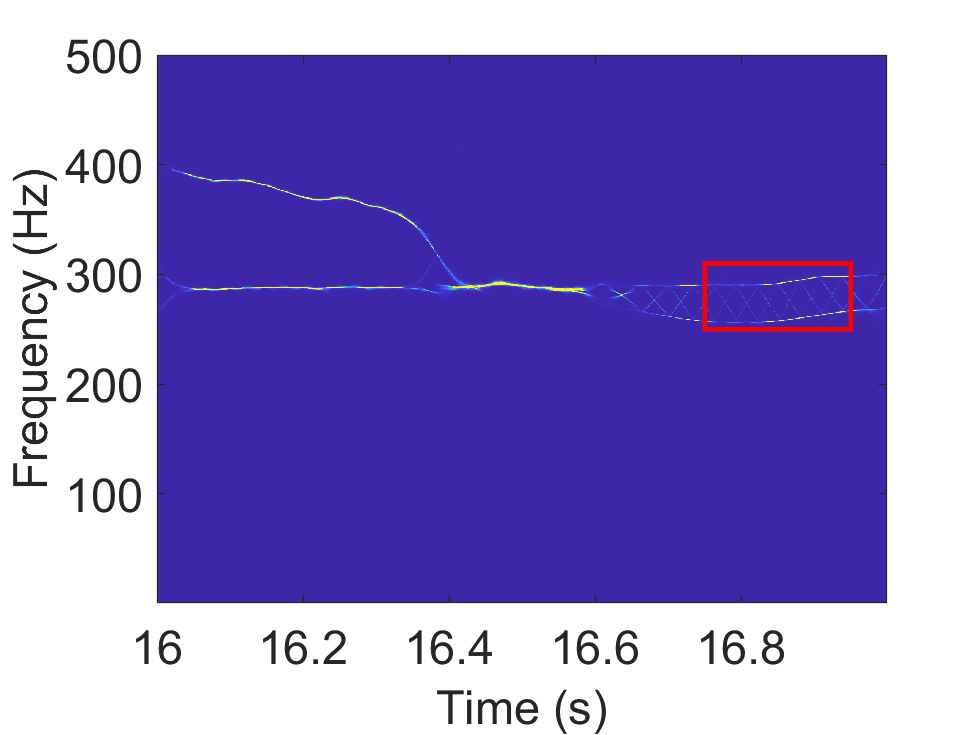}
            \caption{}
        \end{subfigure} &
        \begin{subfigure}[t]{0.28\textwidth}
            \centering
            \includegraphics[width=\linewidth]{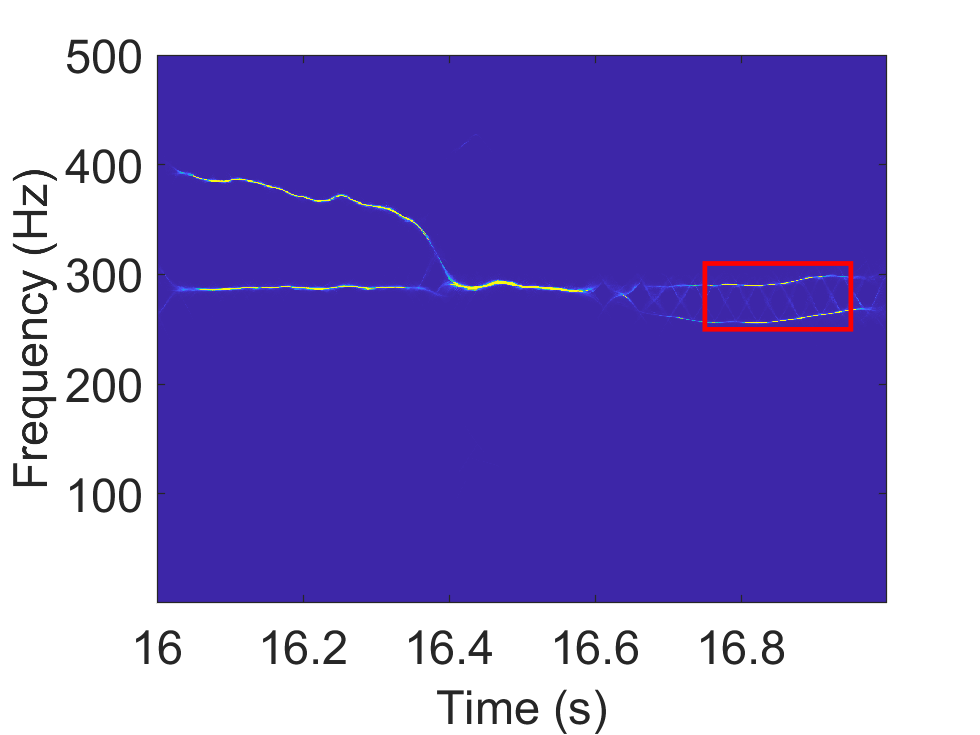}
            \caption{}
        \end{subfigure} &
        \begin{subfigure}[t]{0.28\textwidth}
            \centering
            \includegraphics[width=\linewidth]{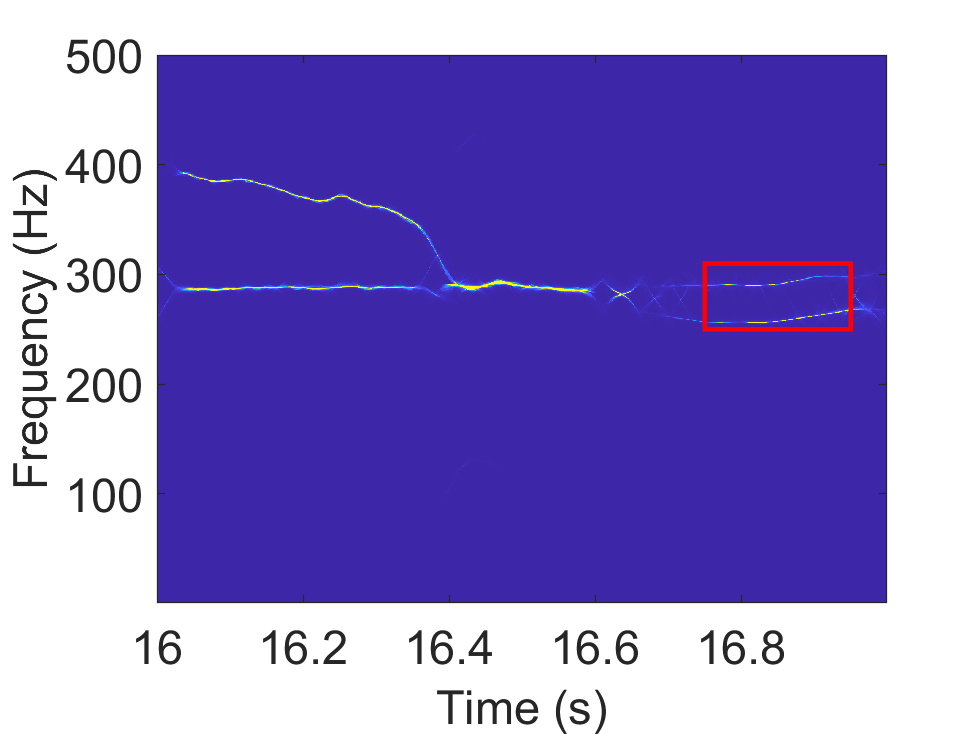}
            \caption{}
        \end{subfigure} \\
        \begin{subfigure}[t]{0.28\textwidth}
            \centering
            \includegraphics[width=\linewidth]{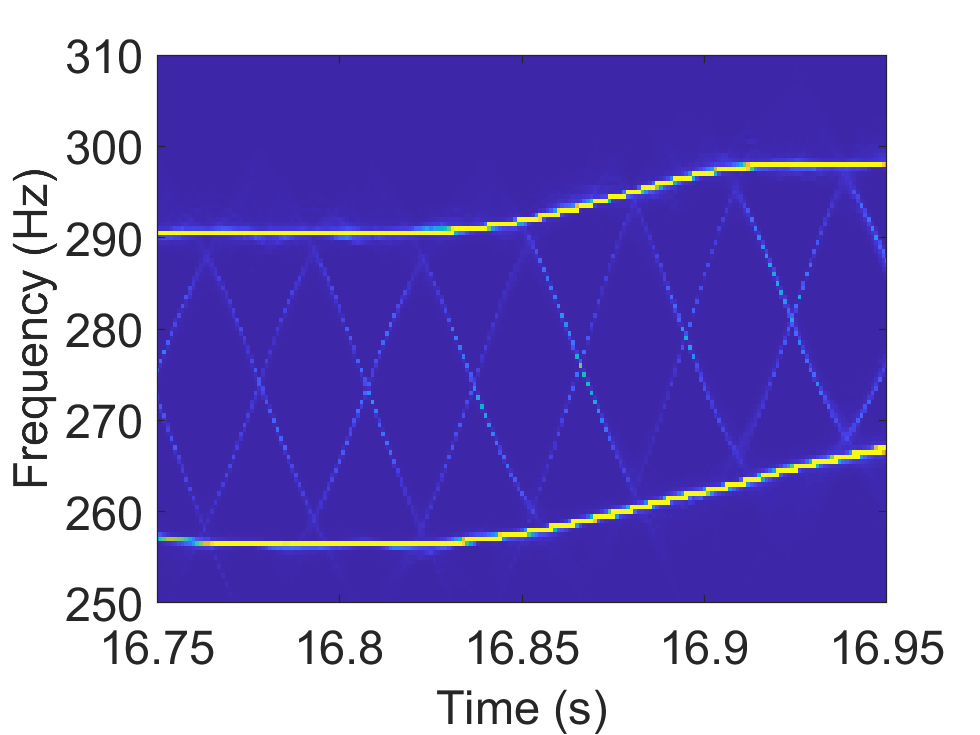}
            \caption{}
        \end{subfigure} &
        \begin{subfigure}[t]{0.28\textwidth}
            \centering
            \includegraphics[width=\linewidth]{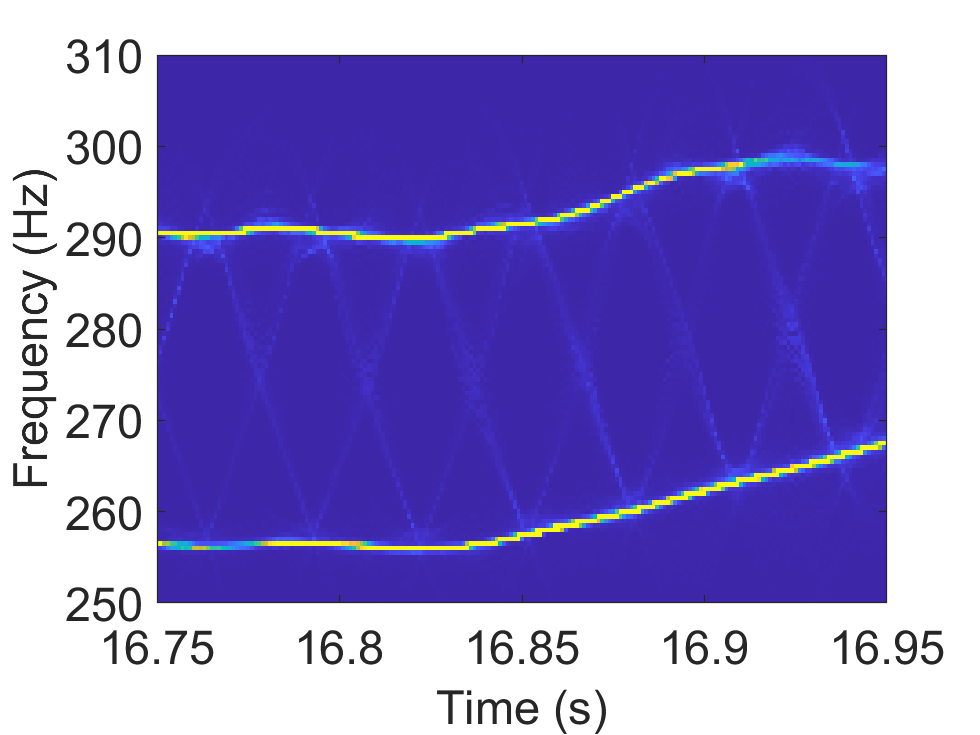}
            \caption{}
        \end{subfigure} &
        \begin{subfigure}[t]{0.28\textwidth}
            \centering
            \includegraphics[width=\linewidth]{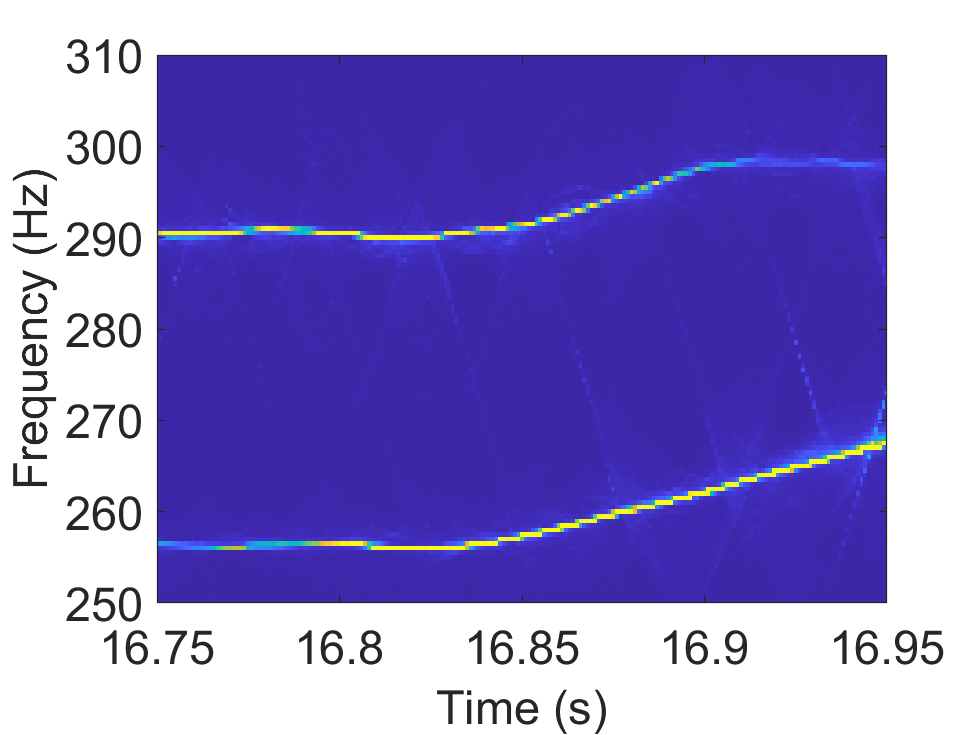}
            \caption{}
        \end{subfigure}
    \end{tabular}
   \caption{Time-frequency projections of the wolf howling signal. (a)-(c) Second-order, third-order, and fourth-order HSWCTs; (d)-(f) corresponding local zooms with red boxes.}
    \label{fig:time-frequency_projections_wolf_howling}
\end{figure}

As shown in Fig.~\ref{fig:time-frequency_projections_wolf_howling}, the signal components are severely merged in the interval \(16.4\)-\(16.6\) s, making separation based solely on frequency or chirprate information infeasible. 
In this region, even higher-order transforms are pushed to their limits, and only a fused frequency curve is observable. However, in the subsequent interval \(16.65\)--\(16.95\) s, as the components begin to separate, the advantage of higher-order HSWCT becomes evident.
Although the separation remains insufficient to fully satisfy condition Eq.~\eqref{separation_condition}, the higher-order transforms still capture the general trend of frequency evolution.
Notably, in the subinterval \(16.75\)-\(16.95\) s, where the signal components are densely distributed in frequency, the second-order projection suffers from significant aliasing, making individual components difficult to distinguish. 
The third-order exhibits only mild aliasing, while the fourth-order successfully suppresses the ambiguity and provides a clear separation of the components.

These observations highlight a key strength of the proposed HSWCT: it maintains superior time-frequency resolution and robust component separability even when the signal operates near the boundaries of the theoretical separation condition. 
This favorable behavior is attributed to the joint estimation of the IF and chirprate. In contrast to conventional time--frequency synchrosqueezing methods, the HSWCT remains capable of revealing the underlying trend of frequency evolution even when the IFs of different signal components are very close.
However, when the separation condition is severely violated, e.g., when both the IFs and chirprates are almost indistinguishable over a long interval, the HSWCT may no longer be capable of reliable separation.
In such extreme scenarios, it may be advantageous to complement the HSWCT with other advanced time-frequency analysis tools.

\section{Conclusion and future work}

In this paper, we have derived a concise general expression for high-order IF and chirprate reassignment operators for multicomponent signals. 
These operators enable a novel HSWCT within the wavelet-chirplet transform framework, extending its capability to handle signals with rapidly varying frequency components and overlapping instantaneous frequency curves.
Besides, rigorous mathematical analyses of the approximation errors for arbitrary-order IF and chirprate reassignment operators are established, providing theoretical insights into how these operators approximate the true IF and chirprate. 
Numerical experiments demonstrate that the proposed  HSWCT achieves superior performance in capturing instantaneous chirprate variations and mode retrieval compared to its second-order counterparts.

Our analysis also reveals certain limitations of the current approach, particularly the degraded performance of the proposed HSWCT under low signal-to-noise ratio  conditions. Future work will therefore aim to enhance its robustness in such environments. 
This goal naturally entails the development of adaptive strategies for window parameter selection and methods to automatically determine the optimal order \(N\) based on signal characteristics. 
Beyond these improvements, improving the computational efficiency of the proposed method is another important direction for future research.

\section*{Acknowledgments} 
This research was partially supported by the National Key Research and Development Program of China  under Grant 2022YFA1005703  and the National Natural Science Foundation of China under Grants  62271230 and and 12571109.

\begin{appendices}
    \section{Proof of Lemma 1} \label{app:proof_lem1}
\begin{proof}
First, we show that the magnitude of the remainder term \(x_{r,k}(a,b,t)\) can be bounded by 
 \begin{align*}
&|x_{r,k}(a,b,t)| = |x_k(b+at) - x_{N,k}(a,b,t)| \\
&\leq  \Big|\left(A_k(b+at) - A_k(b)\right)e^{i 2 \pi \phi_k(b+at)}\Big| + \Big|A_k(b)\Big(e^{i 2 \pi\phi_k(b+at)} - e^{i 2 \pi\phi_k(b)} e^{\sum_{j=1}^{N} \frac{i 2\pi}{j!} \phi^{(j)}_k(b)(at)^j}\Big)\Big|\\
&\leq   \epsilon_1  a |t| +A_k(b) \Big| i 2 \pi\phi_k(b+at)- \sum_{j=0}^{N} 
\frac{i 2\pi}{j!} \phi^{(j)}_k(b)(at)^j\Big|
\\
&\leq   \epsilon_1  a |t| + \epsilon_2 A_k(b) \frac{2\pi}{(N+1)!} a^{N+1} |t|^{N+1}.
\end{align*}
Hence,  
\begin{align*}
      & \left|  U^{t^m \psi}_{x_k} (a,b,\lambda)- \mathcal{R}^{t^m \psi}_{x_k} (a,b,\lambda)\right|=\left|\int_{\mathbb{R}} x_{r,k}(a,b,t) t^m \psi(t)e^{-i\pi \lambda a^2 t^2}dt  \right| \\
     & \leq  \epsilon_1  a I_{m+1}+ \epsilon_2  A_k(b)\frac{2 \pi}{(N+1)!}a^{N+1}I_{N+m+1}, 
\end{align*}
as desired. 
\end{proof}

\section{Proof of Lemma 2 } \label{app:proof_lem2}
\begin{proof} Indeed, we have 
\begin{align*}
    &\Big| \partial_b U_{x_k}^{t^m \psi}(a,b,\lambda)-\sum_{j=1}^{N}\frac{i 2\pi \phi_k^{(j)}(b) }{(j-1)!} a^{j-1}  U_{x_k}^{ t^{m+j-1} \psi }(a,b,\lambda)\Big|\\
  &=\Big|\int_{\mathbb{R}} A'_k(b+at) e^{i 2 \pi \phi_k(b+at)}t^m \psi(t)e^{-i  \pi \lambda a^2 t^2}dt \\
& \quad  +\int_{\mathbb{R}} A_k(b+at) i 2 \pi\phi'_k(b+at)  e^{i 2 \pi \phi_k(b+at)}t^m\psi(t) e^{-i  \pi \lambda a^2 t^2} dt\\
 &\quad  -\sum_{j=1}^{N}\frac{i 2\pi}{(j-1)!} a^{j-1}\phi_k^{(j)}(b)\int_{\mathbb{R}} A_k(b+at) e^{i 2 \pi \phi_k(b+at)}t^{m+j-1} \psi(t)e^{-i  \pi \lambda a^2 t^2} dt\Big|\\
 \end{align*}
 \begin{align*}
  &      
 \le \Big|\int_{\mathbb{R}} A'_k(b+at) e^{i 2 \pi \phi_k(b+at)}t^m \psi(t)e^{-i  \pi \lambda a^2 t^2}dt\Big| \\
        &\quad +\Big|\int_{\mathbb{R}} A_k(b+at) i 2 \pi \Big(
\phi'_k(b+at)- \sum_{j=1}^{N}\frac {(at)^{j-1}}{(j-1)!} \phi_k^{(j)}(b)\Big) e^{i 2 \pi \phi_k(b+at)}t^m\psi(t) e^{-i  \pi \lambda a^2 t^2} dt\Big|\\
  & \le \epsilon_1 I_m +2\pi \int_{\mathbb{R}} A_k(b+at) \frac{\epsilon_2 (|t| a)^N}{N!} |t^m\psi(t)| dt\\
  & \le \epsilon_1 I_m +\epsilon_2\frac{ 2\pi a^N}{N!}\Big(\int_{\mathbb{R}} |A_k(b+at)-A_k(b)| \; |t^{m+N}\psi(t)| dt
+  \int_{\mathbb{R}}  A_k(b) |t^{m+N}\psi(t)| dt\Big)
\\& \le \epsilon_1 I_m +\epsilon_2\frac{ 2\pi a^N}{N!}\big( \epsilon_1 a I_{m+N+1} +A_k(b) I_{m+N}\big). 
    \end{align*} 
\end{proof}

 \section{ Proof of Lemma 3 }\label{app:proof_lem3}
\begin{proof}
By adding and subtracting the terms involving $i 2 \pi \phi_k^{(j)}(b)$, we obtain
\begin{align*}
&\left| \mathrm{Res}_{l, m}(a, b, \lambda) \right| =\Big|\sum_{k=1}^K \Big(\partial_b U_{x_k}^{t^m \psi}(a,b,\lambda) - \sum_{j=1}^{N} \frac{i2\pi \phi_l^{(j)}(b)}{(j-1)!}a^{j-1} U_{x_k}^{t^{m+j-1}\psi}(a,b,\lambda)\Big)\Big|\\
&\le \Big|\sum_{k=1}^K \Big(\partial_b U_{x_k}^{t^m \psi}(a,b,\lambda) - \sum_{j=1}^{N} \frac{i2\pi \phi_k^{(j)}(b)}{(j-1)!}a^{j-1} U_{x_k}^{t^{m+j-1}\psi}(a,b,\lambda)\Big)\Big|\\
&\quad +\Big|\sum_{k=1}^K \Big(\sum_{j=1}^{N} \frac{i2\pi \phi_k^{(j)}(b)}{(j-1)!}a^{j-1} U_{x_k}^{t^{m+j-1}\psi}(a,b,\lambda) - \sum_{j=1}^{N} \frac{i2\pi \phi_l^{(j)}(b)}{(j-1)!}a^{j-1} U_{x_k}^{t^{m+j-1}\psi}(a,b,\lambda)\Big)\Big|\\
&\le \Big|\sum_{k=1}^K \Big(\partial_b U_{x_k}^{t^m \psi}(a,b,\lambda) - \sum_{j=1}^{N} \frac{i2\pi \phi_k^{(j)}(b)}{(j-1)!}a^{j-1} U_{x_k}^{t^{m+j-1}\psi}(a,b,\lambda)\Big)\Big|\\
&\quad +\Big|\sum_{k=1}^K \Big( \sum_{j=1}^{N} i2\pi  \frac{\phi_k^{(j)}(b)-\phi_l^{(j)}(b)}{(j-1)!}a^{j-1} U_{x_k}^{t^{m+j-1}\psi}(a,b,\lambda)\Big)\Big|.
\end{align*}
Then, from Eq.~\eqref{eq:bound_Uxk_proof}, we have
\begin{align*}
\left| \mathrm{Res}_{l, m}(a, b, \lambda) \right|
\le \sum_{k=1}^K C_{k, m}(a,b) + \sum_{k\neq l}\sum_{j=1}^{N} \frac{2 \pi a^{j-1}}{(j-1)!} \Big|\phi_k^{(j)}(b)-\phi_l^{(j)}(b)\Big | \Big( \Pi_{k, m+j-1}(a,b) + A_k(b)  \Upsilon_{k, m+j-1}(b) \Big). 
\end{align*}
This completes the proof.
\end{proof}
\end{appendices}


\bibliographystyle{elsarticle-num-names}
\bibliography{references0402_2026}

@article{amin2016radar,
  title={Radar signal processing for elderly fall detection: The future for in-home monitoring},
  author={Amin, Moeness G and Zhang, Yimin D and Ahmad, Fauzia and Ho, KC Dominic},
  journal={IEEE Signal Processing Magazine},
  volume={33},
  number={2},
  pages={71--80},
  year={2016},
  publisher={IEEE}
}

@article{auger1995improving,
  title={Improving the readability of time-frequency and time-scale representations by the reassignment method},
  author={Auger, Fran{\c{c}}ois and Flandrin, Patrick},
  journal={IEEE Transactions on Signal Processing},
  volume={43},
  number={5},
  pages={1068--1089},
  year={1995},
  publisher={IEEE}
}

@article{baraniuk2002measuring,
  title={Measuring time-frequency information content using the {R}{\'e}nyi entropies},
  author={Baraniuk, Richard G and Flandrin, Patrick and Janssen, Augustus JEM and Michel, Olivier JJ},
  journal={IEEE Transactions on Information Theory},
  volume={47},
  number={4},
  pages={1391--1409},
  year={2001},
  publisher={IEEE}
}

@article{behera2018theoretical,
  title={Theoretical analysis of the second-order synchrosqueezing transform},
  author={Behera, Ratikanta and Meignen, Sylvain and Oberlin, Thomas},
  journal={Applied and Computational Harmonic Analysis},
  volume={45},
  number={2},
  pages={379--404},
  year={2018},
  publisher={Elsevier}
}

@article{chen2023disentangling,
  title={Disentangling modes with crossover instantaneous frequencies by synchrosqueezed chirplet transforms, from theory to application},
  author={Chen, Ziyu and Wu, Hau-Tieng},
  journal={Applied and Computational Harmonic Analysis},
  volume={62},
  pages={84--122},
  year={2023},
  publisher={Elsevier}
}

@article{chen2024multiple,
  title={Multiple enhanced synchrosqueezing in the time--frequency--chirprate space},
  author={Chen, Tao and Xie, Lei and Cui, Mingzhe and Su, Hongye},
  journal={Signal Processing},
  volume={222},
  pages={109541},
  year={2024},
  publisher={Elsevier}
}

@article{chen2024composite,
  title={Composite signal detection using multisynchrosqueezing wavelet transform},
  author={Chen, Xu and Zhang, Zhousuo and Yang, Wenzhan},
  journal={Digital Signal Processing},
  volume={149},
  pages={104482},
  year={2024},
  publisher={Elsevier}
}

@article{chui2016signal,
  title={Signal decomposition and analysis via extraction of frequencies},
  author={Chui, Charles K and Mhaskar, HN},
  journal={Applied and Computational Harmonic Analysis},
  volume={40},
  number={1},
  pages={97--136},
  year={2016},
  publisher={Elsevier}
}

@article{chui2021time,
  title={Time-scale-chirp\_rate operator for recovery of non-stationary signal components with crossover instantaneous frequency curves},
  author={Chui, Charles K and Jiang, Qingtang and Li, Lin and Lu, Jian},
  journal={Applied and Computational Harmonic Analysis},
  volume={54},
  pages={323--344},
  year={2021},
  publisher={Elsevier}
}

@article{chui2023analysis,
  title={Analysis of a direct separation method based on adaptive chirplet transform for signals with crossover instantaneous frequencies},
  author={Chui, Charles K and Jiang, Qingtang and Li, Lin and Lu, Jian},
  journal={Applied and Computational Harmonic Analysis},
  volume={62},
  pages={24--40},
  year={2023},
  publisher={Elsevier}
}

@book{cohen1995time,
  title={Time-frequency Analysis},
  author={Cohen, Leon},
  volume={778},
  year={1995},
  publisher={Prentice Hall PTR New Jersey}
}

@book{daubechies1992ten,
  title={Ten Lectures on Wavelets},
  author={Daubechies, Ingrid},
  year={1992},
  publisher={SIAM}
}

@article{daubechies1996nonlinear,
    author = {Daubechies, Ingrid and Maes, Stéphane},
    title = {A Nonlinear Squeezing of the Continuous Wavelet Transform},
    journal = {Wavelets in Medicine and Biology},
    year = {1996},
    pages = {527--546},
    publisher = {Routledge}
}

@article{daubechies2011synchrosqueezed,
  title={Synchrosqueezed wavelet transforms: An empirical mode decomposition-like tool},
  author={Daubechies, Ingrid and Lu, Jianfeng and Wu, Hau-Tieng},
  journal={Applied and Computational Harmonic Analysis},
  volume={30},
  number={2},
  pages={243--261},
  year={2011},
  publisher={Elsevier}
}

@article{fomel2013seismic,
  title={Seismic data decomposition into spectral components using regularized nonstationary autoregression},
  author={Fomel, Sergey},
  journal={Geophysics},
  volume={78},
  number={6},
  pages={O69--O76},
  year={2013},
  publisher={Society of Exploration Geophysicists}
}

@inproceedings{fourer2019second,
  title={Second-order time-reassigned synchrosqueezing transform: Application to {D}raupner wave analysis},
  author={Fourer, Dominique and Auger, Francois},
 booktitle={2019 27th European Signal Processing Conference (EUSIPCO)},
  pages={1--5},
  year={2019},
  organization={IEEE}
}

@book{grafakos2008classical,
  title={Classical fourier analysis},
  author={Grafakos, Loukas},
  volume={2},
  year={2008},
  publisher={Springer}
}

@article{he2019time,
  title={Time-reassigned synchrosqueezing transform: The algorithm and its applications in mechanical signal processing},
  author={He, Dong and Cao, Hongrui and Wang, Shibin and Chen, Xuefeng},
  journal={Mechanical Systems and Signal Processing},
  volume={117},
  pages={255--279},
  year={2019},
  publisher={Elsevier}
}

@article{he2020gaussian,
  title={Gaussian-modulated linear group delay model: Application to second-order time-reassigned synchrosqueezing transform},
  author={He, Zhoujie and Tu, Xiaotong and Bao, Wenjie and Hu, Yue and Li, Fucai},
  journal={Signal Processing},
  volume={167},
  pages={107275},
  year={2020},
  publisher={Elsevier}
}

@article{hlawatsch1992linear,
  title={Linear and quadratic time-frequency signal representations},
  author={Hlawatsch, Franz and Boudreaux-Bartels, Gloria Faye},
  journal={IEEE Signal Processing Magazine}, 
  volume={9},
  number={2},
  pages={21--67},
  year={1992},
  publisher={IEEE}
}

@article{hu2019high,
  title={High-order synchrosqueezing wavelet transform and application to planetary gearbox fault diagnosis},
  author={Hu, Yue and Tu, Xiaotong and Li, Fucai},
 journal = {Mechanical Systems and Signal Processing},
  volume={131},
  pages={126--151},
  year={2019},
  publisher={Elsevier}
}

@article{jiang2020novel,
  title={A novel parameter estimation for polynomial phase signals using the spectrum phase},
  author={Jiang, Xiaodong and Wu, Siliang},
  journal={IEEE Signal Processing Letters},
  volume={27},
  pages={1919--1923},
  year={2020},
  publisher={IEEE}
}

@article{jiang2025synchrosqueezed,
  title={Synchrosqueezed {X}-ray wavelet--chirplet transform for accurate chirp rate estimation and retrieval of modes from multicomponent signals with crossover instantaneous frequencies},
  author={Jiang, Qingtang and Li, Shuixin and Chen, Jiecheng and Li, Lin},
  journal={Mechanical Systems and Signal Processing},
  volume={238},
  pages={113193},
  year={2025},
  publisher={Elsevier}
}

@article{katkovnik1995new,
  title={A new form of the {F}ourier transform for time-varying frequency estimation},
  author={Katkovnik, Vladimir},
  journal={Signal Processing},
  volume={47},
  number={2},
  pages={187--200},
  year={1995},
  publisher={Elsevier}
}

@article{kinoshita2020sleep,
  title={Sleep spindle detection using {RUSBoost} and synchrosqueezed wavelet transform},
  author={Kinoshita, Takafumi and Fujiwara, Koichi and Kano, Manabu and Ogawa, Keiko and Sumi, Yukiyoshi and Matsuo, Masahiro and Kadotani, Hiroshi},
  journal={IEEE Transactions on Neural Systems and Rehabilitation Engineering},
  volume={28},
  number={2},
  pages={390--398},
  year={2020},
  publisher={IEEE}
}

@article{kodera1978analysis,
  title={Analysis of time-varying signals with small {BT} values},
  author={Kodera, Kunihiko and Gendrin, Roger and Villedary, Claude},
  journal={IEEE Transactions on Acoustics, Speech, and Signal Processing},
  volume={26},
  number={1},
  pages={64--76},
  year={1978},
  publisher={IEEE}
}

@article{li2011local,
  title={Local polynomial {F}ourier transform: A review on recent developments and applications},
  author={Li, Xiumei and Bi, Guoan and Stankovic, Srdjan and Zoubir, Abdelhak M},
  journal={Signal Processing},
  volume={91},
  number={6},
  pages={1370--1393},
  year={2011},
  publisher={Elsevier}
}

@article{li2020adaptive,
  title={Adaptive synchrosqueezing transform with a time-varying parameter for non-stationary signal separation},
  author={Li, Lin and Cai, Haiyan and Jiang, Qingtang},
  journal={Applied and Computational Harmonic Analysis},
  volume={49},
  number={3},
  pages={1075--1106},
  year={2020},
  publisher={Elsevier}
}

@article{li2020adaptivestft,
  title={Adaptive short-time {F}ourier transform and synchrosqueezing transform for non-stationary signal separation},
  author={Li, Lin and Cai, Haiyan and Han, Hongxia and Jiang, Qingtang and Ji, Hongbing},
  journal={Signal Proc.},
  volume={166},
  pages={107231},
  year={2020},
  publisher={Elsevier}
}

@article{li2022self,
  title={Self-matched extracting wavelet transform and signal reconstruction},
  author={Li, Wenting and Auger, Fran{\c{c}}ois and Zhang, Zhuosheng and Zhu, Xiangxiang},
  journal={Digital Signal Processing},
  volume={128},
  pages={103602},
  year={2022},
  publisher={Elsevier}
}

@article{li2022theoretical,
  title={Theoretical analysis of time-reassigned synchrosqueezing wavelet transform},
  author={Li, Wenting and Zhang, Zhuosheng and Auger, Fran{\c{c}}ois and Zhu, Xiangxiang},
  journal={Applied Mathematics Letters},
  volume={132},
  pages={108141},
  year={2022},
  publisher={Elsevier}
}

@article{li2022chirplet,
  title={A chirplet transform-based mode retrieval method for multicomponent signals with crossover instantaneous frequencies},
  author={Li, Lin and Han, Ningning and Jiang, Qingtang and Chui, Charles K},
  journal={Digital Signal Processing},
  volume={120},
  pages={103262},
  year={2022},
  publisher={Elsevier}
}

@article{li2026synchrosqueezed,
title = {Synchrosqueezed windowed linear canonical transform: A method for mode retrieval from multicomponent signals with crossing instantaneous frequencies},
journal = {Signal Processing},
volume = {241},
pages = {110384},
year = {2026},
author = {Li, Shuixin and Chen, Jiecheng  and Jiang, Qingtang  and Lu, Jian },
publisher={Elsevier}
}

@article{liu2023sparse,
  title={Sparse time--frequency analysis of seismic data: Sparse representation to unrolled optimization},
  author={Liu, Naihao and Lei, Youbo and Liu, Rongchang and Yang, Yang and Wei, Tao and Gao, Jinghuai},
  journal={IEEE Transactions on Geoscience and Remote Sensing},
  volume={61},
  pages={1--10},
  year={2023},
  publisher={IEEE}
}

@article{lu2021second,
  title={A second-order synchrosqueezing transform with a simple form of phase transformation},
  author={Lu, Jian and Alzahrani, Jawaher H and Jiang, Qingtang},
  journal={Numerical Mathematics: Theory, Methods and Applications},
  year={2021}
}

@book{mallat1999wavelet,
  title={A Wavelet Tour of Signal Processing},
  author={Mallat, St{\'e}phane},
  year={1999},
  publisher={Elsevier}
}

@article{mann1995chirplet,
  title={The chirplet transform: Physical considerations},
  author={Mann, Steve and Haykin, Simon},
  journal={IEEE Transactions on Signal Processing},
  volume={43},
  number={11},
  pages={2745--2761},
  year={1995},
  publisher={IEEE}
}

@article{mercuri2019vital,
  title={Vital-sign monitoring and spatial tracking of multiple people using a contactless radar-based sensor},
  author={Mercuri, Marco and Lorato, Ilde Rosa and Liu, YaoHong and Wieringa, Fokko and Hoof, Chris Van and Torfs, Tom},
  journal={Nature Electronics},
  volume={2},
  number={6},
  pages={252--262},
  year={2019},
  publisher={Nature Publishing Group}
}

@inproceedings{oberlin2014fourier,
  title={The {F}ourier-based synchrosqueezing transform},
  author={Oberlin, Thomas and Meignen, Sylvain and Perrier, Val{\'e}rie},
 booktitle={2014 IEEE International Conference on Acoustics, Speech and Signal Processing (ICASSP)}, 
  pages={315--319},
  year={2014},
  organization={IEEE}
}

@article{oberlin2015second,
  title={Second-order synchrosqueezing transform or invertible reassignment? {T}owards ideal time-frequency representations},
  author={Oberlin, Thomas and Meignen, Sylvain and Perrier, Val{\'e}rie},
  journal={IEEE Transactions on Signal Processing},
  volume={63},
  number={5},
  pages={1335--1344},
  year={2015},
  publisher={IEEE}
}

@inproceedings{oberlin2017secondwavelet,
  title={The second-order wavelet synchrosqueezing transform},
  author={Oberlin, Thomas and Meignen, Sylvain},
  booktitle={2017 IEEE International Conference on Acoustics, Speech and Signal Processing (ICASSP)}, 
  pages={3994--3998},
  year={2017},
  organization={IEEE}
}

@article{park2011time,
  title={Time-frequency analysis of {EEG} asymmetry using bivariate empirical mode decomposition},
  author={Park, Cheolsoo and Looney, David and Kidmose, Preben and Ungstrup, Michael and Mandic, Danilo P},
  journal={IEEE Transactions on Neural Systems and Rehabilitation Engineering},
  volume={19},
  number={4},
  pages={366--373},
  year={2011},
  publisher={IEEE}
}

@article{pham2017high,
  title={High-order synchrosqueezing transform for multicomponent signals analysis--{W}ith an application to gravitational-wave signal},
  author={Pham, Duong-Hung and Meignen, Sylvain},
  journal={IEEE Transactions on Signal Processing},
  volume={65},
  number={12},
  pages={3168--3178},
  year={2017},
  publisher={IEEE}
}

@article{stankovic2001measure,
  title={A measure of some time--frequency distributions concentration},
  author={Stankovi{\'c}, Ljubi{\v{s}}a},
  journal={Signal Processing},
  volume={81},
  number={3},
  pages={621--631},
  year={2001},
  publisher={Elsevier}
}

@book{stankovic2013time,
  title={Time-frequency signal analysis with applications},
  author={Stankovi{\'c}, Ljubi{\v{s}}a and Dakovi{\'c}, Milo{\v{s}} and Thayaparan, Thayannathan},
  year={2013},
  publisher={Artech House}
}

@article{thakur2013synchrosqueezing,
  title={The synchrosqueezing algorithm for time-varying spectral analysis: Robustness properties and new paleoclimate applications},
  author={Thakur, Gaurav and Brevdo, Eugene and Fu{\v{c}}kar, Neven S and Wu, Hau-Tieng},
  journal={Signal Processing},
  volume={93},
  number={5},
  pages={1079--1094},
  year={2013},
  publisher={Elsevier}
}

@article{wang2008generalized,
  title={Generalized high-order phase function for parameter estimation of polynomial phase signal},
  author={Wang, Pu and Djurovic, Igor and Yang, Jianyu},
  journal={IEEE Transactions on Signal Processing},
  volume={56},
  number={7},
  pages={3023--3028},
  year={2008},
  publisher={IEEE}
}

@article{wang2018matching,
  title={Matching synchrosqueezing transform: A useful tool for characterizing signals with fast varying instantaneous frequency and application to machine fault diagnosis},
  author={Wang, Shibin and Chen, Xuefeng and Selesnick, Ivan W and Guo, Yanjie and Tong, Chaowei and Zhang, Xingwu},
  journal={Mechanical Systems and Signal Processing},
  volume={100},
  pages={242--288},
  year={2018},
  publisher={Elsevier}
}

@article{yu2017synchroextracting,
  title={Synchroextracting transform},
  author={Yu, Gang and Yu, Mingjin and Xu, Chuanyan},
  journal={IEEE Transactions on Industrial Electronics},
  volume={64},
  number={10},
  pages={8042--8054},
  year={2017},
  publisher={IEEE}
}

@article{yu2019local,
title = {Local maximum synchrosqueezing transform: An energy-concentrated time-frequency analysis tool},
author = {Yu, Gang and Wang, Zhonghua  and Zhao, Ping  and  Li, Zhen},
journal = {Mechanical Systems and Signal Processing},
volume = {117},
pages = {537-552},
year = {2019},
publisher={Elsevier}
}

@article{yu2020time,
  title={Time-reassigned multisynchrosqueezing transform for bearing fault diagnosis of rotating machinery},
  author={Yu, Gang and Lin, Tianran and Wang, Zhonghua and Li, Yueyang},
  journal={IEEE Transactions on Industrial Electronics},
  volume={68},
  number={2},
  pages={1486--1496},
  year={2020},
  publisher={IEEE}
}

@article{zhang2022local,
  title={Local maximum frequency-chirp-rate synchrosqueezed chirplet transform},
  author={Zhang, Ran and Wang, Zimeng and Tan, Yu and Yang, Xincheng and Yang, Shenghui},
  journal={Digital Signal Processing},
  volume={130},
  pages={103710},
  year={2022},
  publisher={Elsevier}
}

@article{zhu2020frequency,
  title={Frequency-chirprate reassignment},
  author={Zhu, Xiangxiang and Yang, Haizhao and Zhang, Zhuosheng and Gao, Jinghuai and Liu, Naihao},
  journal={Digital Signal Processing},
  volume={104},
  pages={102783},
  year={2020},
  publisher={Elsevier}
}

@article{zhu2021three,
  title={Three-dimension extracting transform},
  author={Zhu, Xiangxiang and Zhang, Zhuosheng and Gao, Jinghuai},
  journal={Signal Processing},
  volume={179},
  pages={107830},
  year={2021},
  publisher={Elsevier}
}

@article{li2026time,
title = {Time-reassigned synchrosqueezing frequency-domain chirplet transform for multicomponent signals with intersecting group delay curves},
journal = {Mechanical Systems and Signal Processing},
volume = {251},
pages = {114183},
year = {2026},
issn = {0888-3270},
author = {Li, Shuixin and Chen, Jiecheng and Jiang, Qingtang and Li, Lin},
publisher={Elsevier}
}

\end{document}